\documentclass[journal]{IEEEtran}


\usepackage[english]{babel}
\usepackage{ifpdf}

\usepackage{cite} 
\usepackage{url}
\usepackage{hyperref}

\ifCLASSINFOpdf
	\usepackage[pdftex]{graphicx}
	\graphicspath{{./figures/}}
\else
	\usepackage[dvips]{graphicx}
	\graphicspath{./figures/}
\fi
\usepackage{color}
\usepackage{pgf, tikz, pgfplots}
\usetikzlibrary{shapes, arrows, automata}
\usetikzlibrary{calc,hobby,decorations}

\usepackage[cmex10]{amsmath}
\usepackage{amsfonts, amssymb, amsthm}
\usepackage{mathrsfs}


\usepackage{algorithm} 
\usepackage{algorithmic} 
\usepackage{amsmath} 
\usepackage{xcolor}
\usepackage{enumerate}
\usepackage{multirow}
\usepackage{rotating}
\usepackage{subcaption}
	\captionsetup[sub]{font=footnotesize}
	\captionsetup[figure]{font=small,labelsep=period,subrefformat=parens}

\hyphenation{op-tical net-works semi-conduc-tor}

\usepackage{needspace}





\input{mySymbol.sty}

\definecolor{penndarkestblue}{cmyk}{1,0.74,0,0.77}
\definecolor{penndarkerblue}{cmyk}{1,0.74,0,0.70}
\definecolor{pennblue}{cmyk}{0.99,0.66,0,0.57} 
\definecolor{pennlighterblue}{cmyk}{0.98,0.44,0,0.35}
\definecolor{pennlightestblue}{cmyk}{0.38,0.17,0,0.17} 

\definecolor{penndarkestred}{cmyk}{0,1,0.89,0.66}
\definecolor{penndarkerred}{cmyk}{0,1,0.88,0.55}
\definecolor{pennred}{cmyk}{0,1,0.83,0.42} 
\definecolor{pennlighterred}{cmyk}{0,1,0.6,0.24}
\definecolor{pennlightestred}{cmyk}{0,0.43,0.26,0.12} 

\definecolor{penndarkestgreen}{cmyk}{1,0,1,0.68}
\definecolor{penndarkergreen}{cmyk}{1,0,1,0.57}
\definecolor{penngreen}{cmyk}{1,0,1,0.44} 
\definecolor{pennlightergreen}{cmyk}{1,0,1,0.25}
\definecolor{pennlightestgreen}{cmyk}{0.43,0,0.43,0.13}

\definecolor{penndarkestorange}{cmyk}{0,0.65,1,0.49}
\definecolor{penndarkerorange}{cmyk}{0,0.65,1,0.33}
\definecolor{pennorange}{cmyk}{0,0.54,1,0.24} 
\definecolor{pennlighterorange}{cmyk}{0,0.32,1,0.13}
\definecolor{pennlightestorange}{cmyk}{0,0.15,0.46,0.06}
	
\definecolor{penndarkestpurple}{cmyk}{0,1,0.11,0.86}
\definecolor{penndarkerpurple}{cmyk}{0,1,0.13,0.82}
\definecolor{pennpurple}{cmyk}{0,1,0.11,0.71} 
\definecolor{pennlighterpurple}{cmyk}{0,1,0.05,0.46}
\definecolor{pennlightestpurple}{cmyk}{0,0.35,0.02,0.23}
	
\definecolor{pennyellow}{cmyk}{0,0.20,1,0.05} 
\definecolor{pennlightgray1}{cmyk}{0,0,0,0.05}
\definecolor{pennlightgray3}{cmyk}{0.01,0.01,0,0.18}
\definecolor{pennmediumgray1}{cmyk}{0.04,0.03,0,0.31}
\definecolor{pennmediumgray4}{cmyk}{0.08,0.06,0,0.54}
\definecolor{penndarkgray2}{cmyk}{0.09,0.07,0,0.71}
\definecolor{penndarkgray4}{cmyk}{0.1,0.1,0,0.92}

\def\SO3{\mathrm{SO(3)}}

\newtheorem{assumption}{\hspace{0pt}\bf Assumption}
\newtheorem{lemma}{\hspace{0pt}\bf Lemma}
\newtheorem{proposition}{\hspace{0pt}\bf Proposition}

\newtheorem{theorem}{\hspace{0pt}\bf Theorem}

\newtheorem{remark}{\hspace{0pt}\bf Remark}

\newtheorem{definition}{\hspace{0pt}\bf Definition}

\begin{document}

\title{Stochastic Graph Neural Networks}

\author{\IEEEauthorblockN{Zhan Gao$^{ \dagger}$, Elvin Isufi$^{\ddagger }$ and Alejandro Ribeiro$^{\dagger}$} 
\thanks{The work in this paper was supported by ARL DCIST CRA W911NF-17-2-0181 and NSF HDR TRIPODS 1934960. Preliminary results appear in ICASSP 2020 conference \cite{Zhan2020}. $^{ \dagger}$Dept. of Electrical and Systems Eng., University of Pennsylvania (gaozhan, aribeiro@seas.upenn.edu). $^{\ddagger}$Dept. of Intelligent Systems, Delft University of Technology (e.isufi-1@tudelft.nl).} 
}

\markboth{IEEE TRANSACTIONS ON SIGNAL PROCESSING (ACCEPTED)}%
{Stochastic Graph Neural Networks}

\maketitle 

\begin{abstract}

Graph neural networks (GNNs) model nonlinear representations in graph data with applications in distributed agent coordination, control, and planning among others. Current GNN architectures assume ideal scenarios and ignore link fluctuations that occur due to environment, human factors, or external attacks. In these situations, the GNN fails to address its distributed task if the topological randomness is not considered accordingly. To overcome this issue, we put forth the stochastic graph neural network (SGNN) model: a GNN where the distributed graph convolution module accounts for the random network changes. Since stochasticity brings in a new learning paradigm, we conduct a statistical analysis on the SGNN output variance to identify conditions the learned filters should satisfy for achieving robust transference to perturbed scenarios, ultimately revealing the explicit impact of random link losses. We further develop a stochastic gradient descent (SGD) based learning process for the SGNN and derive conditions on the learning rate under which this learning process converges to a stationary point. Numerical results corroborate our theoretical findings and compare the benefits of SGNN robust transference with a conventional GNN that ignores graph perturbations during learning.
\end{abstract}

\begin{IEEEkeywords}
Graph neural networks, graph filters, distributed learning
\end{IEEEkeywords}

\IEEEpeerreviewmaketitle


\section{Introduction} \label{sec:intro}

Graph neural networks (GNNs) are becoming the predominant tool to learn representations for network data \cite{Scarselli2008, Wu2019} with resounding success in rating prediction \cite{Ying2018, Wu20192}, distributed agent coordination \cite{Tolstaya2019, NerveNet2018}, and learning molecular fingerprints \cite{Duvenaud2015, Fout2017}. One key property of GNNs is their distributed implementation. The latter yields GNNs suitable candidates for distributed learning over networks, where each node can compute its output by only communicating with its immediate neighbors \cite{Shuman2018, Gavili2017, Liu2018}. Applications include distributed agent coordination \cite{Tolstaya2019}, smart grid failure detection \cite{Owerko2018}, and control systems \cite{Zou2013}.

While in recent years we have experienced the proposal of several GNN architectures, they can be cast under three main streams: i) the message passing neural networks \cite{gori2005new, scarselli2008graph, battaglia2016interaction, gilmer2017neural}; ii) the convolutional graph neural networks \cite{henaff2015deep, Defferrard2016, Kiph2017, xu2018powerful, Fernando2019}; iii) the graph attention networks \cite{velivckovic2017graph, lee2018graph, wu2019dual}. Message passing neural networks leverage the graph structure as the computation graph and combine arbitrary information across edges. The advantage of such architectures is that it allows linking the ability of GNNs to discriminate different graphs with standard graph isomorphism tests \cite{morris2019weisfeiler}. Convolutional graph neural networks are inspired by CNNs in structured Euclidean domains (e.g., time and image). They replace temporal or spatial filters with graph filters \cite{Bruna2013, Defferrard2016, Fernando20202}, which are tools that generalize the convolution operation to the irregular graph domain \cite{Sandry2013, ortega2018}. In this way, a vast knowledge of distributed graph signal processing can be adopted to analyze these architectures and make them distributable as well as scalable. Lastly, the graph attention networks consider also learning the weights of the graph edges from data and are more appropriate for situations where such weights are known with some uncertainty. Recently, it has been shown that graph attention mechanisms are convolutional GNNs of order one learning over graphs with edge features \cite{Isufi2020}. Since convolutional GNNs admit the distributed implementation and are recently shown robust to small deterministic perturbations, we build upon such architectures.

While seminal to establish distributed learning with GNNs, current works consider the underlying topology fixed and deterministic when executing the GNN distributively. However, in practical applications involving sensor, communication, road, and smart grid networks, the graph connectivity changes randomly over time \cite{structural2004, Jenelius2006, Gungor2010}. In a robot coordination network, for instance, the communication graph may change due to agent malfunctions or communication links that fall with a certain probability. These random topological changes lead to a random graph filter \cite{Isufi17}; hence, to a random GNN output. That is, the GNN will be implemented distributively over random time-varying topologies in the testing phase that are mismatched with the deterministic topology used in the training phase, therefore, degrading the performance. In this work, we hypothesize this mismatch between testing and training phases should be addressed by developing an architecture that is trained on stochastic graphs representing the practical setting in the distributed implementation.

Processing and learning with the topological stochasticity has been recently investigated in graph signal processing and graph neural network literature. On the GSP front, \cite{Isufi17} studied graph filters over random topologies to characterize the robustness and reduce the computational cost. Subsequently, \cite{saad2020quantization} extended such analysis to consider also the effect of data quantization in a distributed setting, while \cite{Isufi2019} proposed sparse controlling strategies for random networks. On the GNN front, \cite{monti2017geometric, berg2017graph} dropped randomly nodes and edges of the data graph to improve recommendation diversity in a matrix completion task and a similar setting has also been investigated in \cite{velivckovic2017graph} with the attention mechanism. Subsequently, \cite{feng2020graph} considered dropping nodes randomly as a strategy to perform data augmentation over graphs, while \cite{rong2019dropedge} proposed dropping edges as a regularization technique to prevent the over-smoothing for training deep convolutional GNNs of order one \cite{Kiph2017}. However, the aforementioned works impose randomness on the graph structure during training to solve a specific centralized task run over the deterministic graph during testing. Contrarily, we consider a different setting where the convolutional GNN of any order is run distributively over physical networks. The graph randomness arises naturally due to external factors, and has to be accounted during training to improve the transference ability of the GNN to random graph scenarios during testing. Furthermore, we do not limit ourselves to numerical evaluation but provide thorough theoretical analysis by leveraging concepts from graph signal processing \cite{Shuman2013, Gama2019}.

Specifically, we propose the stochastic graph neural network (SGNN) model to account for random link losses during training. The SGNN incorporates the random graph realizations into the architecture and makes each node rely on its neighbors with uncertainty. This neighborhood uncertainty makes the learned parameters robust to topological fluctuations encountered during testing; hence, it endows the SGNN with the ability to transfer more robustly to scenarios where the graph topology is perturbed randomly. To characterize this \emph{robust transference}, we analyze \emph{theoretically} the variance of the SGNN output and highlight the role played by the different actors such as the filter properties, the link sampling probability, and the architecture width and depth. More in detail, our three main contributions are:

\smallskip
\begin{enumerate}[(i)]

\item \emph{Stochastic graph neural networks (Section \ref{sec:StoGNN})}: We define the SGNN as a similar layered architecture to the GNN, but where stochastic graph filters are employed during learning. These filters account for the underlying topological variations in the node data exchanges to build random higher-level features in the graph convolutional layer.

\item \emph{Variance analysis (Section \ref{sec:expper})}: We prove the variance of the SGNN output is bounded by a factor that is quadratic in the link sampling probability. To conduct such analysis, we develop the concept of generalized filter frequency response, which allows for the spectral analysis over random time-varying graphs. We also put forth the generalized Lipschitz condition for stochastic graph filters, which generalizes the conventional Lipschitz condition used in the deterministic setting \cite{Fernando2019}. This variance analysis holds uniformly for all graphs, and indicates the effects the architecture depth and width as well as the nonlinearities have on the SGNN performance [Theorem \ref{theorem1}].

\item \emph{Convergence analysis (Section \ref{sec:convergence})}: We postulate the SGNN learning problem that accounts for the graph stochasticity in the cost function. We develop a stochastic gradient descent algorithm for this learning procedure and derive conditions under which a stationary point is achieved. We further prove the convergence rate is proportional to the inverse square root of the number of iterations [Theorem \ref{theorem2}].

\end{enumerate}

Numerical results on source localization and robot swarm control corroborate our model in Section \ref{Numerical}. The conclusions are drawn in Section \ref{sec_conclusions}. All proofs are collected in the appendix.



\def \RESGp {\text{RES}(\ccalG,p)}

%

\section{Random Edge Sampling Graph Model} \label{sec:StoGNN}

Consider symmetric unweighted graph $\ccalG= (\ccalV, \ccalE)$. The vertex set contains $N$ nodes $\ccalV = \{ 1,\ldots,N \}$ and the edge set contains $M$ undirected edges $(i,j)= (j,i)\in \ccalE$. The adjacency matrix $\bbA \in \reals^{N \times N}$ has entries $[\bbA]_{ij} > 0$ if node $i$ is connected to node $j$ (i.e. $(i,j)\in \ccalE$) and $[\bbA]_{ij} =0$ otherwise, and the Laplacian matrix is $\bbL=\diag(\bbA\bbone)-\bbA$. To keep discussion general, introduce the shift operator $\bbS \in \reals^{N \times N}$ as a stand in for either the adjacency or Laplacian matrix of $\ccalG$. Symmetry of the graph implies symmetry of the shift operator, $\bbS =\bbS^\top$. 

Our interest is the distributed processing architecture over physical networks where edges can drop randomly. For instance, in wireless sensor networks, communication links break randomly due to channel noise. Other applications include robot swarm coordination, smart grids and traffic networks. In these cases, edges drop independently due to external factors and this results in stochastic graph topologies that we model with the random edge sampling (RES) model \cite{Isufi17}.

\begin{definition}[Random Edge Sampling Graph]\label{def_res} For a given graph $\ccalG = (\ccalV, \ccalE)$ and edge inclusion probability $p$, we define $\RESGp$ a random graph with realizations $\ccalG_k = (\ccalV, \ccalE_k)$ such that edge $(i,j)$ is in $\ccalE_k$ with probability $p$,
\begin{equation}\label{eqn_def_res}
   \text{Pr} \Big[ (i,j) \in \ccalE_k \Big] = p, \quad 
        \text{for all \ } (i,j) \in \ccalE.
\end{equation}
Edge inclusions in $\ccalE_k$ are drawn independently. 
\end{definition}

As per Definition \ref{def_res}, realizations $\ccalG_k = (\ccalV, \ccalE_k)$ have edges $(i,j) \in \ccalE_k$ drawn from the edges of $\ccalG$ independently with probability $p$. The graph realization $\ccalG_k$ induces a realization of an adjacency matrix $\bbA_k$. If we let $\bbB_k$ be a symmetric matrix with independently drawn Bernoulli entries $b_{k,ij}=b_{k,ji}$ we can write the adjacency of graph $\ccalG_k$ as the Hadamard product 
\begin{equation}
   \bbA_k = \bbB_k \circ \bbA.
\end{equation}
The Laplacian of $\ccalG_k$ is $\bbL_k=\diag(\bbA_k\bbone)-\bbA_k$. We will use $\bbS_k$ to denote either. We emphasize that the choice of $\bbS$ and $\bbS_k$ are compatible. We either have $\bbS=\bbA$ and $\bbS_k=\bbA_k$ or $\bbS=\bbL$ and $\bbS_k=\bbL_k$. For future reference define the expected shift operator as $\barbS = \mbE[\bbS]$ and the expected graph as $\bar{\ccalG}$ -- the one with matrix representation $\barbS$. We remark that it is ready to have edge dropping probabilities depending on nodes or edges. We make them equal to simplify expressions.

\begin{figure*}[t]
\centering
\includegraphics [width = 0.205\linewidth]
                 {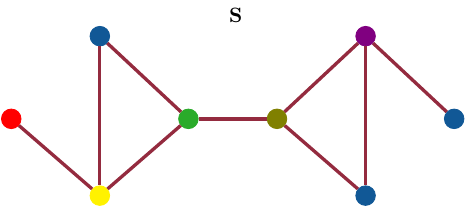}\qquad
\includegraphics [width = 0.205\linewidth]
                 {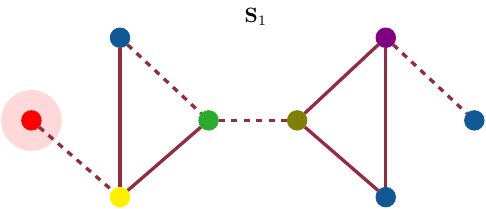}\qquad
\includegraphics [width = 0.205\linewidth]
                 {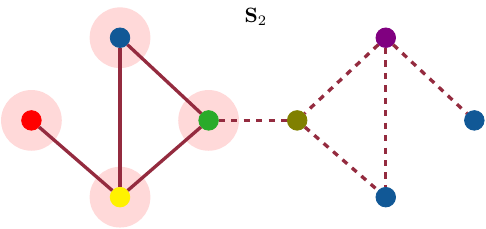}\qquad
\includegraphics [width = 0.205\linewidth]
                 {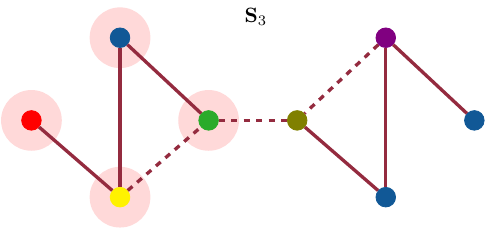} \\ \bigskip

\def \thisplotscale {1.8}
\def \unit {\thisplotscale cm}

\tikzstyle {Phi} = [rectangle,
                    thin,
                    minimum width = 1.0*\unit,
                    minimum height = \sumshift*\unit,
                    anchor = west,
                    draw,
                    fill = blue!20]

\tikzstyle {sum} = [circle,
                    thin,
                    minimum width  = 0.3*\unit,
                    minimum height = 0.3*\unit,
                    anchor = center,
                    draw,
                    fill = blue!20]

\def \deltax {1.9}
\def \deltay {0.8}
\def \sumshift {0.4}

\begin{tikzpicture}[x = 1*\unit, y = 1*\unit]

\node (first) [] {};

\path (first) ++ (0.15*\deltax, 0) node (0) [Phi] {$\bbS_0 = \bbI$};
\path (0)     ++ (0.9*\deltax, 0) node (1) [Phi] {$\bbS_1$};
\path (1)     ++ (1.0*\deltax, 0) node (2) [Phi] {$\bbS_2$};
\path (2)     ++ (1.0*\deltax, 0) node (3) [Phi] {$\bbS_3$};

\path (3.east) ++ (0.7*\sumshift*\deltax, 0) node [anchor=west] (last) [] {};

\path[-stealth] (first) edge [above            ] node {$\bbx$}               (0);	
\path[-stealth] (0)     edge [above, near start] node {$\ \bbS_0\bbx$}   (1);	
\path[-stealth] (1)     edge [above, near start] node {$\ \bbS_{1:0}\bbx$} (2);	
\path[-stealth] (2)     edge [above, near start] node {$\ \bbS_{2:0}\bbx$} (3);\path[-]        (3)     edge [above, near end  ] node {$\ \bbS_{3:0}\bbx$} (last);				
\path (0.east) ++ (\sumshift*\deltax, -\deltay) node (sum0) [sum] {$+$};
\path (1.east) ++ (\sumshift*\deltax, -\deltay) node (sum1) [sum] {$+$};
\path (2.east) ++ (\sumshift*\deltax, -\deltay) node (sum2) [sum] {$+$};
\path (3.east) ++ (0.7*\sumshift*\deltax, -\deltay) node (sum3) [sum] {$+$};

\path[-stealth, draw] (sum0 |- 0) --node[right] {$ h_0 $} (sum0); 	
\path[-stealth, draw] (sum1 |- 1) --node[right] {$ h_1 $} (sum1);	
\path[-stealth, draw] (sum2 |- 2) --node[right] {$ h_2 $} (sum2);	
\path[-stealth, draw] (sum3 |- 3) --node[right] {$ h_3 $} (sum3);	

\path[-stealth, draw] (sum0) -- (sum1);	
\path[-stealth, draw] (sum1) -- (sum2);	
\path[-stealth, draw] (sum2) -- (sum3);	

\path[-stealth] (sum3) edge [above] node
                {~~$\bbH(\bbS_{3:0})\bbx$} ++ (0.4*\deltax, 0);

\end{tikzpicture}
\caption{Stochastic Graph Filters. (Top-left) Underlying graph $\bbS$ and graph signal indicated by different colors. We focus on how the information is aggregated at the red node. The $\RESGp$ realization $\bbS_1$ drops the link between the red and yellow node (shown by dashed lines). In the shift $\bbS_{1:0}\bbx$, the red node uses only its own information (shown by the shaded area) as no information is obtained by its only neighbor (the yellow node). The $\RESGp$ graph realization $\bbS_2$ preserves the link red-yellow, therefore, from the shift $\bbS_{2:0}\bbx$ the red node gets information from its two hop neighbors (the blue and green node). This is because in $\bbS_1\bbx$ the yellow node aggregated information from its blue and green neighbors. At $\RESGp$ graph realization $\bbS_3$, the red node will not get information from the olive node when computing $\bbS_{3:0}$ because the link green-yellow was never active in all three $\RESGp$ realization $\bbS_1$, $\bbS_2$, and $\bbS_3$. Hence, the coverage of the red node remains limited to its two hop neighbors. (Bottom) Block diagram of the stochastic graph filter of order $K = 3$.} 
\label{fig.evRecMain}
\end{figure*}

\subsection{Stochastic Graph Neural Network}

A stochastic graph neural network (SGNN) on $\ccalG$ is a graph neural network (GNN) run on a sequence of random realizations of a $\RESGp$ random graph. To be precise, let $\bbx = [x_1, \ldots, x_N]^\top \in \reals^N$ be a graph signal supported on $\ccalG$ in the sense that entry $x_i$ is associated to vertex $i$ \cite{Shuman2018, Gavili2017, Liu2018}. Further consider the sequence of shift operators $\bbS_k$ associated to graphs $\ccalG_k$ drawn independently as per Definition \ref{def_res}. The diffusion sequence is a corresponding collection of signals $\bbx^{(k)}$ expressed recursively as $\bbx^{(k)} := \bbS_k\bbx^{(k-1)}$ with $\bbx^{(0)}= \bbx$. Agreeing by convention that $\bbS_0 = \bbI$ and unrolling the recursion, the signals in the diffusion sequence are 
\begin{equation} \label{eq.sgf0}
   \bbx^{(k)} \ =  \ \bbS_{k}\bbx^{(k-1)}
              \ =  \ \big(\bbS_{k} \cdots \bbS_0\big)\, \bbx
              \ := \ \bbS_{k:0}\bbx
\end{equation}
where we have defined $\bbS_{k:0}:=\bbS_k \cdots \bbS_0$ in the last equality. The graph signal $\bbx^{(k-1)}$ is diffused over graph $\ccalG_k$ to produce the signal $\bbx^{(k)}$. This means the signal $\bbx^{(k)}$ is a diffused version of $\bbx=\bbx^{(0)}$ over the random sequence of graphs $\bbS_0,\ldots,\bbS_{k}$. The main motivation for this randomly time varying diffusion process is a communication network in which $\ccalE$ represents a set of possible links and $\ccalE_k$ a set of links that are activated at time index $k$. The model also applies to a social network where contacts are activated at random. See Section \ref{Numerical}.

We use the diffusion sequence in \eqref{eq.sgf0} to define a graph convolutional filter over the random $\RESGp$ graph. Fix a filter order $K$ and introduce $K+1$ coefficients $h_k$ with $k=0,\ldots,K$. The graph convolution of $\bbx$ with coefficients $h_k$ is a linear combination of the entries of the diffusion sequence in \eqref{eq.sgf0} modulated by coefficients $h_k$,
\begin{equation} \label{eq:sgf2}
   \bbu \ =  \ \sum_{k=0}^K h_{k}\bbx^{(k)} 
        \ =  \ \sum_{k=0}^K h_{k} \bbS_{k:0} \bbx 
        \ := \ \bbH (\bbS_{K:0}) \bbx
\end{equation}
where we defined the graph filter $\bbH(\bbS_{K:0}) := \sum_{k=0}^K h_{k} \bbS_{k:0}$ in which $\bbS_{K:0}=(\bbS_K, \ldots, \bbS_1, \bbS_0)$ represents the sequence of shift operators that appear in the filter\footnote{The notation $\bbS_{k:k'}=(\bbS_k, \ldots, \bbS_{k'})$ is a sequence of shift operators when it appears in an argument as in \eqref{eq:sgf2}. It represents the product $\bbS_{k:k' }:=\bbS_k \cdots \bbS_{k'}$ when it is a term in an expression as in \eqref{eq.sgf0}.}. This expression generalizes graph convolutional filters to settings where the topology changes between shifts \cite{Isufi17}. We shall refer to $\bbH(\bbS_{K:0})$ as a stochastic graph filter -- see Figure \ref{fig.evRecMain}. 

To build an SGNN relying on the graph filters in \eqref{eq:sgf2}, we consider the composition of a set of $L$ layers. The first layer $\ell=1$ consists of a bank of $F$ filters $\bbH^f_1(\bbS_{K:0})$ with coefficients $h_{1k}^f$ each of which produces the output graph signal $\bbu_{1}^{f}$. These filter outputs are passed through a pointwise nonlinear function $\sigma(\cdot)$ to produce a collection of $F$ features $\bbx_{1}^{f}$ that constitute the output of layer 1,
\begin{equation}\label{eq.sgnn_layer_1}
   \bbx_{1}^{f}
       \!=\! \sigma \!\left[\,\bbu_{1}^{f} \,\right] 
       \!=\! \sigma \!\left[\bbH^f_1(\bbS_{K:0})\bbx	\right]
       \!=\! \sigma \!\left[\,\sum_{k=0}^K h_{1k}^{f}\,\bbS_{1,k:0}^f\,\bbx \,\right]\!.
\end{equation}
The notation $\sigma[\bbu_{1}^{f}]$ signifies the vector $[\sigma(u_{11}^{f}),\ldots,\sigma(u_{1N}^{f})]^\top$ where the function $\sigma$ is applied separately to each entry of $\bbu_{1}^{f}$. We further emphasize that shift sequences $\bbS_{1,k:0}^f$ are specific to the feature $f$ and drawn independently from the random graph $\RESGp$. 

At subsequent intermediate layers  $\ell=2,\ldots,L-1$ the output features $\bbx_{\ell-1}^g$ of the previous layer, become inputs to a bank of $F^2$ filters with coefficients $h_{\ell k}^{fg}$ each of which produces the output graph signal $\bbu_{\ell}^{fg}$. To avoid exponential filter growth, the filter outputs derived from a common input feature $\bbx_{\ell-1}^g$ are summed and the result is passed through a pointwise nonlinear function $\sigma(\cdot)$ to produce a collection of $F$ features $\bbx_{\ell}^{f}$ that constitute the output of layer $\ell$
\begin{equation}\label{eq.sgnn}
   \bbx_{\ell}^{f}
       = \sigma \left[\, \sum_{g=1}^{F} \bbu_{\ell}^{fg} \, \right]
       = \sigma \left[\, \sum_{g=1}^{F} 
                            \sum_{k=0}^K 
                               h_{\ell k}^{fg} \,
                                  \bbS_{l,k:0}^{fg} \,
                                     \bbx_{\ell-1}^{g} \, \right]\!.
\end{equation}
The processing specified in \eqref{eq.sgnn} is repeated until the last layer $\ell=L$ in which we assume there is a single output feature which we declare to be the output of the GNN. To produce this feature, we process each input feature $\bbx_{L-1}^{g}$ with a graph filter with coefficients $h_{L k}^{g}$, sum all features, and pass the result through the pointwise nonlinear function $\sigma(\cdot)$. This yields the GNN output
\begin{equation}\label{eq.sgnn2}
   \bbPhi \big(\bbx;\bbS_{P:1}, \ccalH\big)
       \!\!=\! \sigma\! \left[\sum_{g=1}^{F} \bbu_{\ell}^{g}\!\right]
       \!\!=\! \sigma\! \left[\sum_{g=1}^{F} 
                            \sum_{k=0}^K 
                               h_{L k}^{g}
                                  \bbS_{L,k:0}^g
                                     \bbx_{L\!-\!1}^{g}\!\right]
\end{equation}
where the notation $\bbPhi(\bbx; \bbS_{P:1}, \ccalH)$ explicits that the SGNN output depends on the signal $\bbx$, a sequence of $P=K[2F+(L-2)F^2]$ independently chosen shift operators $\bbS_{P:1}$, and the filter tensor $\ccalH$ that groups filter coefficient $h_{\ell k}^{fg}$ for all layers $\ell$, orders $k$ and feature pairs $(f,g)$. The SGNN output in \eqref{eq.sgnn} is stochastic because the sequence of shift operators are realizations of the $\RESGp$ graph in \eqref{def_res}.

\begin{remark}
\normalfont The SGNN processes $F$ features with a bank of $F^2$ stochastic graph filters at each layer [cf. \eqref{eq.sgnn_layer_1}-\eqref{eq.sgnn2}]. In fact, it is equivalent to processing a $F$ dimensional vector-valued graph signal, i.e., the filter bank processing in \eqref{eq.sgnn} can be rewritten as
\begin{align}\label{eq:filterMatrix}
\bbX_\ell = \sigma \left[\sum_{k=0}^K \bbS_{k:0} \bbX_{\ell-1} \bbH_{\ell k}\right],~\forall~\ell=1,\ldots,L
\end{align}
where $\bbX_{\ell} = [\bbx_{\ell}^1,\ldots, \bbx_{\ell}^F]$ is the matrix collecting vector-valued graph signals at layer $\ell$ and $\bbH_{\ell k} \in \mathbb{R}^{F \times F}$ is the filter coefficient matrix. Without loss of generality, we can consider \eqref{eq:filterMatrix} as a multi-dimensional graph filter where the filter coefficients are matrices and the graph signal at each node is vector-valued. While the equivalence, we prefer the representation \eqref{eq.sgnn} because it is consistent with the conventional way the graph filter is defined \cite{Defferrard2016, Segarra2017} and illustrates explicitly the shift-sum principle behind the graph filter definition.
\end{remark}

\subsection{Filter Tensor Training}

To train the SGNN in \eqref{eq.sgnn} we are given a training set $\ccalT = \{(\bbx_r, \bby_r)\}$ made up of $R$ input-output pairs $(\bbx_r, \bby_r)$. We assume that both, input signals $\bbx_r$ and output signals $\bby_r$ are graph signals supported on $\bbS$. We are also given a cost function $c(\bby, \hby)$ to measure the cost of estimating output $\hby$ when the actual output is $\bby$. Our interest is on the cost of estimating outputs $\bby_r$ with the SGNN in \eqref{eq.sgnn} averaged over the training set
\begin{equation} \label{eq:PPP}
   C\big(\bbS_{P:1}, \ccalH\big) 
      = \frac{1}{R} \sum_{r=1}^R 
            c \Big(\bby_r, \bbPhi \big(\bbx_r;\bbS_{P:1}, \ccalH\big) \Big).
\end{equation}
Given that the SGNN output $\bbPhi(\bbx_r;\bbS_{P:1},\ccalH)$ is random, the cost $C\big(\bbS_{P:1}, \ccalH\big)$ is random as well. We therefore consider the cost averaged over realizations of $\RESGp$ and define the optimal filter tensor as the solution of the optimization problem
\begin{equation} \label{eq:perform}
   \ccalH^* = \argmin_{\ccalH} 
                   \mbE_{\bbS_{P:1}} 
                        \Big( C\big(\bbS_{P:1}, \ccalH\big) \Big).
\end{equation}
In the expectation in \eqref{eq:perform}, the shift operators $\bbS_k$ in the sequence $\bbS_{P:1}$ are drawn independently from the random $\RESGp$ graph of Definition \ref{def_res}. Training to optimize the cost in \eqref{eq:PPP} shall result in a SGNN with optimal \emph{average} cost. To characterize the cost \emph{distribution}, we study the variance of the SGNN output $\bbPhi \big(\bbx;\bbS_{P:1}, \ccalH\big)$ in Section \ref{sec:expper}. We also show that the SGNN can be trained with a stochastic gradient descent based learning process over $T$ iterations and prove this learning process converges to a stationary solution of \eqref{eq:perform} with a rate of $\ccalO(1/\sqrt{T})$ in Section~\ref{sec:convergence}. 

\begin{remark} \normalfont 
The SGNN defined by \eqref{eq.sgnn_layer_1}-\eqref{eq.sgnn2} admits a distributed implementation. This is because stochastic graph filters [cf. \eqref{eq:sgf2}] can be implemented in a distributed manner, which, in turn, is true because the diffusion sequence in \eqref{eq.sgf0} admits a distributable evaluation. To see the latter, denote the entries of $\bbS_{k}$ as $S_{k,ij}$ and the sparsity of $\bbS_{k}$ allows computing $\bbx^{(k)}$ as
\begin{equation} \label{eq:distributedImple}
  x^{(k)}_i = \sum_{j=1}^N S_{k,ij}\,x^{(k-1)}_j \!\!\!=\!\!\! \sum_{j:(j,i) \in\ccalE_k} S_{k,ij}\,x^{(k-1)}_j.
\end{equation}
Thus, the entry $x_i^{(k)}$ associated with node $i$ can be computed locally by only using the entries $x^{(k-1)}_j$ associated with nodes $j$ that are connected to node $i$ in the graph realization $\bbS_k$. Likewise, the $K$ entries $\{x_i^{(k)}\}_{k=1}^K$ can be computed through recursive information exchanges with neighboring nodes. Node $i$ can therefore compute the $i$th entry of the filter output as $u_i = \sum_{k=0}^K h_{k}x_i^{(k)}$ locally [cf. \eqref{eq:sgf2}]. Since the nonlinearity $\sigma(\cdot)$ is pointwise and thus, local, the SGNN inherits the distributed implementation. During deployment, each node need not know full knowledge of the graph, but only have communication capabilities to receive the neighborhood information and computational capabilities to aggregate the received information.
\end{remark}


\section{Variance analysis} \label{sec:expper}

The training optimizes the mean performance in \eqref{eq:perform}, while it says little about the deviation of a single realization around this mean. We quantify the latter in this section by providing an upper bound on the SGNN output variance. For our analysis, we consider the variance over all nodes
\begin{equation} \label{eq:var0}
\begin{split}
&{\rm var} \left[ \bbPhi(\bbx;\bbS_{P:1},\ccalH) \right] \!=\! \sum_{i=1}^N \!{\rm var}\left[[\bbPhi(\bbx;\bbS_{P:1},\ccalH)]_i\right]
\end{split}
\end{equation}
where $[\bbPhi(\bbx;\bbS_{P:1},\ccalH)]_i$ is the $i$th component of $\bbPhi(\bbx;\bbS_{P:1},\ccalH)$. This variance metric characterizes how individual entries $\{ [\bbPhi(\bbx;\bbS_{P:1},\ccalH)]_i \}_{i=1}^N$ deviate from their expectations, while it does not provide the correlation information. This is a typical way to characterize the average variance experienced at each individual entry in a multivariable stochastic system \cite{joshi2008sensor}. We conduct the variance analysis by pursuing in the graph spectral domain. The spectral analysis is recently used to characterize the stability of the GNN to small deterministic perturbations in the topology \cite{Fernando2019}. However, this model and its analysis method are inapplicable to the random edge sampling scenario because stochastic perturbations could result in large perturbation sizes. Therefore, we develop a novel approach to study the variance of the SGNN output by developing the novel concept of generalized filter frequency response, which allows for the spectral analysis over random graphs. Subsequently, we generalize the Lipschitz condition for graph filters to the stochastic setting to approach the variance analysis.

\subsection{Generalized Filter Frequency Response}

Consider the graph filter $\bbH(\bbS)$ [cf. \eqref{eq:sgf2} for $p=1$]. Since the shift operator $\bbS$ is symmetric, it accepts the eigendecomposition $\bbS = \bbV \bbLambda \bbV^{\top}$ with eigenvector basis $\bbV = [\bbv_{1}, \cdots, \bbv_{N}]$ and eigenvalues $\bbLambda = \text{diag} ([\lambda_{1},...,\lambda_{N}])$. We can expand signal $\bbx$ over $\bbV$ as $\bbx = \sum_{i=1}^N \hat{x}_i \bbv_{i}$; an operation known as the graph Fourier expansion of $\bbx$ \cite{ortega2018}. Vector $\hat{\bbx} = [\hat{x}_i, \cdots, \hat{x}_N]^\top$ contains the graph Fourier coefficients and it is called the graph Fourier transform (GFT) of $\bbx$. Substituting this expansion to the filter input $\bbu = \bbH(\bbS)\bbx$, we can write
 \begin{equation} \label{eq:mean11}
\begin{split}
\bbu \!=\! \sum_{k=0}^K h_k \bbS^k \sum_{i=1}^N \hat{x}_i \bbv_{i} \!=\! \sum_{i=1}^N \sum_{k=0}^K \hat{x}_i h_k \lambda_{i}^k \bbv_{i} .
\end{split}
\end{equation}
By further applying the Fourier expansion to the output $\bbu = \sum_{i=1}^N \hat{u}_i \bbv_i$, we get the input-output spectral filtering relation $\hat{\bbu} = \bbH(\bbLambda)\hat{\bbx}$. Here, $\bbH(\bbLambda)$ is a diagonal matrix containing the filter frequency response on the main diagonal. For filters in the form \eqref{eq:mean11}, the frequency response has the \emph{analytic} expression
 \begin{equation} \label{eq:frere1}
\begin{split}
h(\lambda) = \sum_{k=0}^K h_k \lambda^k.
\end{split}
\end{equation}
The graph topology instantiates the variable $\lambda$ to attain a value in the discrete set ${\lambda_1, \ldots, \lambda_N}$ and allows representing \eqref{eq:mean11} as a pointwise multiplication $\hat{u}_{i} = h(\lambda_{i})\hat{x}_i$ in the spectrum domain. The filter coefficients $\{h_k\}_{k=0}^K$ determine the shape of the frequency response function $h(\lambda)$. Figure \ref{fig:frequencyResponse} illustrates the latter concepts.

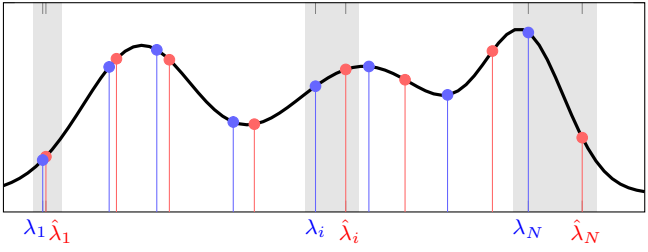
\begin{figure}[t]
\centering

\def \thisplotscale {3.48}
\def \unit {\thisplotscale cm}

\def \frequencyresponse 
     {   0.8*exp(-(1*(x-1.2))^2) 
       + 0.7*exp(-(0.7*(x-4))^2) 
       + 0.8*exp(-(1.4*(x-6))^2) 
       + 0.1}

{\footnotesize
\begin{tikzpicture}[x = 1.0*\unit, y=0.8*\unit]

\def \factorx {2.6/8}
\def \deltax  {0.5*\factorx}
\def \shadeshift  {0.05}

\path [fill=black!20, opacity = 0.5] 
              (\deltax - 0.001*\factorx - \shadeshift, 0.00) rectangle 
              (\deltax + 0.030*\factorx + \shadeshift, 1.00);
\path [fill=black!20, opacity = 0.5] 
              (\deltax + 3.193*\factorx - \shadeshift, 0.00) rectangle 
              (\deltax + 3.520*\factorx + \shadeshift, 1.00);
\path [fill=black!20, opacity = 0.5] 
              (\deltax + 5.648*\factorx - \shadeshift, 0.00) rectangle 
              (\deltax + 6.320*\factorx + \shadeshift, 1.00);

\begin{axis}[scale only axis,
             width  = 2.45*\unit,
             height = 0.8*\unit,
             xmin = -0.5, xmax=7.5,
             xtick = {0.03, -0.01, 3.77, 3.393, 6.72, 6.048},
             xticklabels = {\red{$\qquad\hat{\lambda}_1\phantom{\lambda}$},
                            \blue{$\lambda_1\ \ $}, 
                            \red{$\quad\hat{\lambda}_i\phantom{\lambda}$}, 
                            \blue{$\lambda_i$},
                            \red{$\quad\hat{\lambda}_{N}\phantom{\lambda}$},
                            \blue{$\lambda_N$}},
             ymin = -0, ymax = 1.15,
             ytick = {1.15},
             yticklabels = {},
             enlarge x limits=false]

\addplot+[samples at = {0.03, 0.91, 1.57, 
                        2.63, 3.77, 4.51, 
                        5.60, 6.72}, 
          color = red!60, 
          ycomb, 
          mark=otimes*, 
          mark options={red!60}]
         {\frequencyresponse};

\addplot+[samples at = {-0.01, 0.819, 1.413, 
                        2.367, 3.393, 4.059, 
                        5.04, 6.048}, 
          color = blue!60, 
          ycomb, 
          mark=oplus*, 
          mark options={blue!60}]
         {\frequencyresponse};

\addplot[ domain=-0.5:7.5, 
          samples = 80, 
          color = black,
          line width = 1.2]
         {\frequencyresponse};

\end{axis}
\end{tikzpicture}}

\caption{Frequency response of a graph filter (black line). The frequency response $h(\lambda)$ is an analytic function determined by the filter coefficients $\{ h_k \}_{k=0}^K$ and the function variable $\lambda$ depends on specific underlying graphs. We highlight the latter for two graphs $\ccalG$ and $\hat{\ccalG}$ with shift operators $\bbS$ and $\hat{\bbS}$. The former $\bbS$ instantiates $h(\lambda)$ on its eigenvalues $\lambda_1,\ldots, \lambda_N$ (in blue), while the latter $\hat{\bbS}$ instantiates $h(\lambda)$ on its eigenvalues $\hat{\lambda}_1,\ldots,\hat{\lambda}_N$ (in red). Changing graphs only instantiates $\lambda$ on different eigenvalues but does not change $h(\lambda)$ itself.}\label{fig:frequencyResponse}
\end{figure}

For the stochastic graph filter $\bbH(\bbS_{K:0})$ in \eqref{eq:sgf2}, we have a deterministic shift operator $\bbS_0 = \bbI_N$ and $K$ random shift operators $\bbS_1, \ldots, \bbS_K$. Since each $\bbS_k$ for $k = 1, \ldots, K$ is the shift operator of an undirected graph, it can be eigedecomposed as $\bbS_k = \bbV_k \bbLambda_k \bbV_k^\top$ with eigenvectors $\bbV_k = [\bbv_{k1}, \ldots, \bbv_{kN}]$ and eigenvalues $\bbLambda_k = \diag(\lambda_{k1}, \ldots, \lambda_{kN})$. We can now use these shift eigenvectors to compute a chain of graph Fourier decompositions each with respect to a different shift operator $\bbS_k$.

Starting from $\bbS_0 = \bbI_N$, we can write the Fourier expansion of signal $\bbx$ on the identity matrix as $\bbx = \bbS_0\bbx = \sum_{i_0 = 1}^N\hat{x}_{0i_0}\bbv_{0i_0}$, where $\bbv_{0i_0}$ is the $i_0$th column eigenvector of $\bbI_N$. When shifting the signal once over a RES($\ccalG, p$) graph, we have 
\begin{equation}\label{eq.dummy}
\bbx^{(1)} = \bbS_1\bbx = \sum_{i_0 = 1}^N \hat{x}_{0i_0}\bbS_1 \bbv_{0i_0}.
\end{equation}
We now treat each eigenvector $\bbv_{0i_0}$ as a new graph signal and compute its graph Fourier decomposition with respect to shift operator $\bbS_1 = \bbV_1\bbLambda_1\bbV_1^\top$. This, in turn, allows writing $\bbv_{0i_0}$ as $\bbv_{0i_0} = \sum_{i_1 = 1}^N\hat{x}_{1i_0 i_1}\bbv_{1i_1}$. Substituting the latter into \eqref{eq.dummy}, we have
\begin{equation}\label{eq.dummy1}
\bbx^{(1)} \!=\! \sum_{i_0 \!=\! 1}^N\hat{x}_{0i_0}\bbS_1\sum_{i_1 = 1}^N\hat{x}_{1i_0 i_1}\bbv_{1i_1} \!=\!\! \sum_{i_0 = 1}^N\sum_{i_1 = 1}^N\!\hat{x}_{0i_0} \hat{x}_{1i_0i_1}\lambda_{1i_1}\bbv_{1i_1}
\end{equation}
where $\{ \hat{x}_{0i_0} \}_{i_0=1}^N$ are $N$ Fourier coefficients of expanding signal $\bbx$ on $\bbS_0$, while $\{\hat{x}_{1i_0i_1}\}_{i_0i_1}$ are additional $N^2$ coefficients of extending this expansion to the two chain shift operator $\bbS_1\bbS_0$. Proceeding in a similar way, we can write the two-shifted signal $\bbx^{(2)} = \bbS_2\bbS_1\bbS_0\bbx$ as the Fourier decomposition on the three chain shift operator $\bbS_2\bbS_1\bbS_0$. In particular, exploiting $\bbx^{(2)} = \bbS_2\bbx^{(1)}$ and expansion \eqref{eq.dummy1}, we have
\begin{equation}\label{eq.dummy2}
\begin{split}
\bbx^{(2)} = \sum_{i_0 = 1}^N \sum_{i_1 = 1}^N \hat{x}_{0i_0} \hat{x}_{1i_0i_1} \lambda_{1i_1} \bbS_2 \bbv_{1i_1}.
\end{split}
\end{equation}
Therefore, treating again each eigenvector $\bbv_{1i_1}$ of $\bbS_1$ as a new graph signal and decomposing it on the Fourier basis of $\bbS_2 = \bbV_2\bbLambda_2\bbV_2^\top$, allows us to write \eqref{eq.dummy2} as
\begin{equation}\label{eq.dummy3}
\begin{split}
\bbx^{(2)} = \sum_{i_0 = 1}^N \sum_{i_1=1}^N \sum_{i_2=1}^N \hat{x}_{0i_0} \hat{x}_{1i_0i_1} \hat{x}_{2i_1i_2} \lambda_{1i_1} \lambda_{2i_2}\bbv_{2i_2}.
\end{split}
\end{equation}
In \eqref{eq.dummy3}, we have $\{\hat{x}_{2i_1i_2}\}_{i_1i_2}$ additional $N^2$ Fourier coefficients introduced by the three chain shift operator $\bbS_2\bbS_1\bbS_0$. Notice that while only the eigenvectors $\bbV_2$ of the last seen shift operator $\bbS_2$ are explicit in \eqref{eq.dummy3}, $\bbx^{(2)}$ is, nevertheless, influenced by all shift operators; especially, by their eigenspace alignment. The latter is captured by the GFT coefficients $\{\hat{\bbx}_{i_0}\}_{0i_0}$, $\{\hat{\bbx}_{1i_0 i_1}\}_{i_0 i_1}$, and $\{\hat{\bbx}_{2i_1i_2}\}_{i_1i_2}$.

Following this recursion, we can write the shifting of the graph signal $\bbx$ over $k$ consecutive RES($\ccalG, p$) realizations $\bbS_0, \ldots, \bbS_k$ as
  \begin{equation} \label{eq:gs}
\begin{split}
\bbx^{(k)}\!=\!\bbS_{k:0}\bbx \!=\!\! \sum_{i_0=1}^N\!\! \ldots\! \sum_{i_k=1}^N\! \hat{x}_{0i_0}\ldots \hat{x}_{ki_{k-1}i_k} \prod_{j=0}^k\lambda_{ji_j} \bbv_{ki_k}.
\end{split}
\end{equation}
Aggregating then the $K+1$ shifted signals $\bbx^{(0)}, \ldots, \bbx^{(K)}$, we can write the stochastic graph filter output $\bbu \!=\! \bbH(\bbS_{K:1})\bbx$ as
\begin{equation} \label{eq:gs2}
\begin{split}
\bbu \!=\! \sum_{i_0=1}^N \!\ldots\! \sum_{i_K=1}^N \sum_{k=0}^K \hat{x}_{0i_0}\ldots \hat{x}_{Ki_{K\!-\!1}i_K} h_k \prod_{j=0}^k\lambda_{ji_j} \bbv_{Ki_K}
\end{split}
\end{equation}
where $\{\hat{x}_{0i_0}\}$ and $\{ \hat{x}_{ki_ki_{k+1}} \}_{k=0}^{K-1}$ are Fourier coefficients of expanding $\bbx$ in the $(K+1)$ chain of shift operators $\bbS_{0}, \ldots, \bbS_{K}$ and $\{ \bbv_{Ki_K}\}_{i_K=1}^N$ are eigenvectors of the last seen shift operator $\bbS_K$. The output in \eqref{eq:gs2} is similar to \eqref{eq:mean11}. In fact, for $p = 1$ --\!\!\! when all shift operators are the same, deterministic, and their eigenspaces align perfectly-- \eqref{eq:gs2} reduces to \eqref{eq:mean11}. Therefore, we can consider \eqref{eq:gs2} as a graph filtering operation over a chain of different shift operators. Considering then a generic eigenvalue vector $\bblambda = [\lambda_{1},\ldots,\lambda_{K}]^\top$ (or graph frequency vector) in which $\lambda_k$ is the frequency variable for the shift operator $\bbS_k$, we define the $K$-dimensional \emph{analytic} generalized frequency response
\begin{equation} \label{eq:h2}
\begin{split}
h(\bblambda) = \sum_{k=0}^K h_k \lambda_{k} \cdots \lambda_{1} \lambda_0 = \sum_{k=0}^K h_k \lambda_{k:0}
\end{split}
\end{equation}
where $\lambda_{k:0} := \lambda_{k} \ldots \lambda_{1}\lambda_0$ is defined for convenience and $\lambda_0=1$ by definition (i.e., $\bbS_0 = \bbI_N$). The chain of shift operators $\bbS_1, \ldots, \bbS_K$ instantiates the generic vector $\bblambda$ to specific eigenvalues in each dimension to evaluate the analytic function $h(\bblambda)$, while the coefficients $\{ h_k \}_{k=0}^K$ determine the $K$-dimensional surface of the generalized frequency response $h(\bblambda)$. Figure \ref{fig:generalizedFrequencyResponse} shows an example of $h(\bblambda)$ in two dimensional space. 

Both the frequency response $h(\lambda)$ and the generalized frequency response $h(\bblambda)$ are characterized by the filter coefficients $\{ h_k \}_{k=0}^K$, while the specific shift operators only instantiate their variables. As such, by focusing directly on properties of $h(\lambda)$ and $h(\bblambda)$, we can analyze the filter performance and further the SGNN performance independently on specific shift operators (or graphs). 

\subsection{Variance of the Stochastic Graph Filter}

For mathematical tractability, we start by characterizing the variance of the stochastic graph filter. Consider the stochastic graph filter output $\bbu = \bbH(\bbS_{K:1})\bbx$ in \eqref{eq:sgf2}. The variance of $\bbu$ over all nodes is ${\rm var} \left[ \bbu \right] = \tr \left( \mathbb{E} \left[ \bbu \bbu^{\top} \right]- \bar{\bbu} \bar{\bbu}^{\top} \right)$ where $\bar{\bbu} = \mathbb{E}[\bbu] = \bbH(\bar{\bbS}) \bbx$ is the expected filter output and $\tr(\cdot)$ is the trace. Our goal is to upper bound the variance for any underlying shift operator $\bbS$ under the RES($\ccalG, p$) model. For this, we pursue the spectral domain analysis to be independent on specific eigenvalues and need the following assumptions.

\begin{figure}[t]
\centering
\includegraphics[width=0.95\linewidth , height=0.65\linewidth, trim=10 10 10 10]{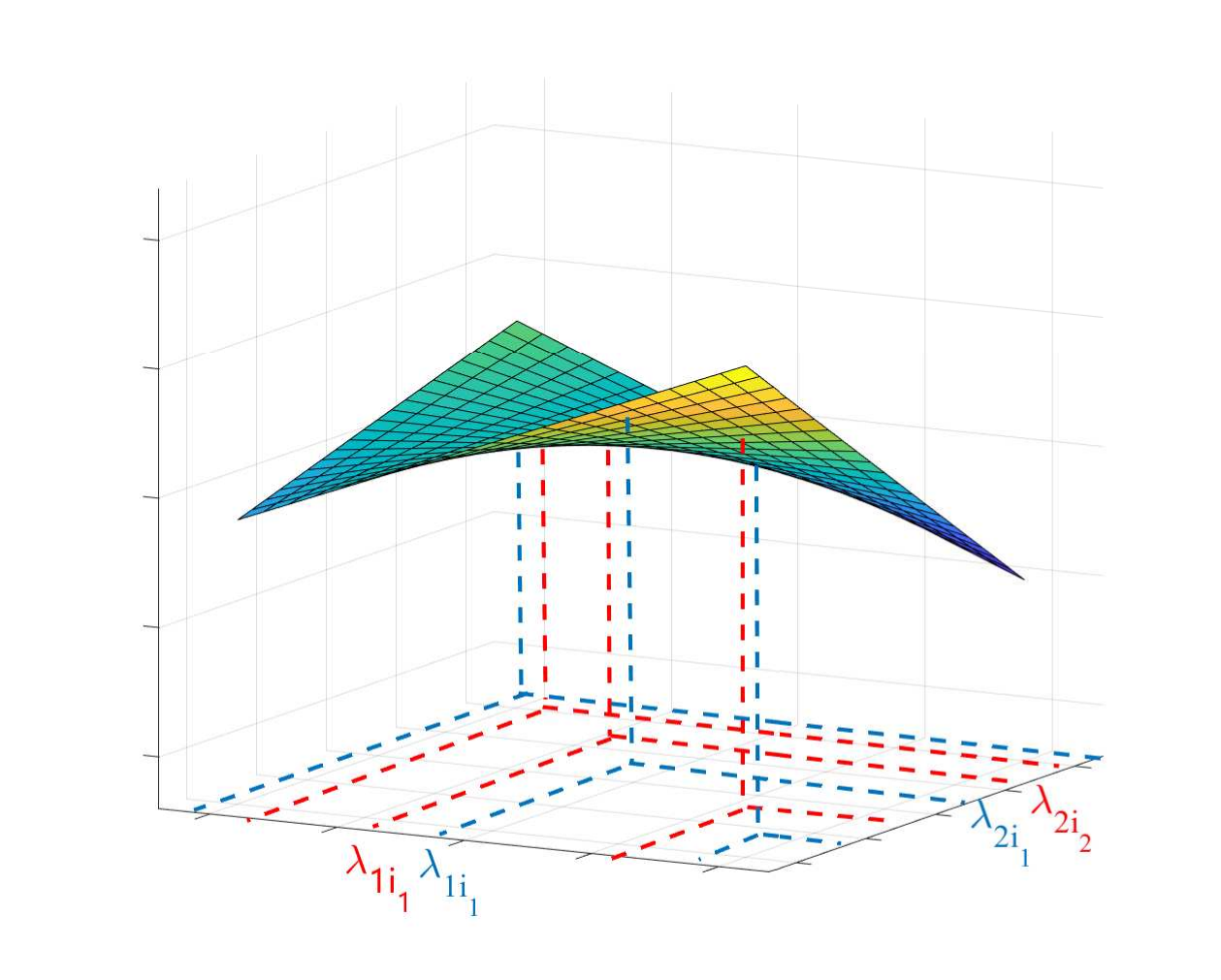}
\caption{The $2$-dimensional generalized frequency response of a stochastic graph filter. The transfer function $h(\bblambda)$ is independent of random graph realizations. For two specific graph realization chains, $h(\bblambda)$ is instantiated on specific eigenvalues determined by the chain of shift operators. We highlight the latter for two different graph realization chains with eigenvalue vectors in red and blue. Depending on the specific topology chain, the filter will have a different behavior. }\label{fig:generalizedFrequencyResponse}
\end{figure}

\begin{assumption} \label{as:3}
For a set of filter coefficients $\{ h_k \}_{k=0}^K$ and graph frequencies $\lambda$ in a finite set $\Lambda$, the filter frequency response $h(\lambda)$ in \eqref{eq:frere1} is bounded. I.e., there exists a constant $C_U$ such that for all $\lambda \in \Lambda$ the frequency response satisfies
\begin{equation} \label{eq:as3}
\begin{split}
\left|h(\lambda)\right| \le C_U.
\end{split}
\end{equation}
\end{assumption}

\begin{assumption} \label{as:4}
For a set of filter coefficients $\{ h_k \}_{k=0}^K$ and graph frequency vectors $\bblambda$ in a finite space $\Lambda^K$, the generalized filter frequency response $h(\bblambda)$ in \eqref{eq:h2} is Lipschitz. I.e., there exists a constant $C_g$ such that for any $\bblambda_1, \bblambda_2 \in \Lambda^K$ the generalized frequency response satisfies
\begin{equation} \label{eq:as4}
\begin{split}
\left|h(\bblambda_1) - h(\bblambda_2)\right| \le C_g \|\bblambda_1 - \bblambda_2\|_2.
\end{split}
\end{equation}
\end{assumption}

Assumption \ref{as:3} is commonly used in graph signal processing and states the filter frequency response in \eqref{eq:frere1} evolves within certain finite margins. Assumption \ref{as:4} indicates the stochastic graph filters are Lipschitz likewise the deterministic graph filters in \cite{Gama2019 } which are the particular case for $p = 1$. In other words, Assumption \ref{as:4} implies the generalized frequency response $h(\bblambda)$ does not change faster than linear in any frequency direction of $\bblambda$. With above preliminaries in place, the following proposition states the variance of the stochastic graph filter is upper bounded by a factor that is quadratic in the link sampling probability.

\begin{proposition} \label{proposition1}
Consider the stochastic graph filter $\bbH(\bbS_{K:0})$ of order K in \eqref{eq:sgf2} over a RES($\ccalG, p$) graph model [cf. Def. \ref{def_res}] of $M$ edges. Let $\bbS$ be the underlying shift operator, $\barbS$ the expected shift operator, and $\{h_k\}_{k=0}^K$ the filter coefficients. Let also the generalized frequency response $h(\bblambda)$ in \eqref{eq:h2} satisfy Assumption \ref{as:4} with constant $C_g$. 

Then, for any input signal $\bbx$, the variance of filter output $\bbH(\bbS_{K:0})\bbx$ is upper bounded as 
\begin{equation} \label{eq:propo3}
\begin{split}
{\rm var} \left[ \bbH(\bbS_{K:0})\bbx \right] \le  p(1-p) C \| \bbx \|_2^2+ \mathcal{O}(p^2(1-p)^2)
\end{split}
\end{equation}
with constant $C= 2\alpha M K C_g^2$ and scalar $\alpha$ that depends on the shift operator\! \footnote{For example, $\alpha=1$ if $\bbS$ is the adjacency matrix or $\alpha=2$ if $\bbS$ is the graph Laplacian.}.
\end{proposition}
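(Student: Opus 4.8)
The plan is to show that, to leading order, the output variance is produced by a \emph{single} random edge fluctuation somewhere in the diffusion chain, and then to bound the effect of each such fluctuation using only the Lipschitz constant $C_g$ of the generalized frequency response. First I would rewrite the node-aggregated variance of $\bbu = \bbH(\bbS_{K:0})\bbx$ as $\tr(\mathbb{E}[\bbu\bbu^\top]) - \|\bar\bbu\|_2^2 = \mathbb{E}[\|\bbu - \bar\bbu\|_2^2]$, where $\bar\bbu = \mathbb{E}[\bbu] = \bbH(\barbS)\bbx$ with $\barbS = p\bbS$; here the independence of the $\bbS_k$ gives $\mathbb{E}[\bbS_{k:0}] = \barbS^{\,k}$, so the problem reduces to controlling the second moment of the zero-mean deviation $\bbu-\bar\bbu$.

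Next I would isolate the first-order fluctuations. Writing $\bbE_j := \bbS_j - \barbS$ (independent across $j$, zero-mean, with $\mathbb{E}[\bbE_j^\top\bbE_j]$ proportional to $p(1-p)$), a telescoping of the non-commuting product gives the exact identity $\bbS_{k:0} - \barbS^{\,k} = \sum_{j=1}^k \barbS^{\,k-j}\bbE_j\,\bbS_{j-1:0}$, which lets me regroup the deviation as $\bbu - \bar\bbu = \sum_{j=1}^K \bbG_j\,\bbE_j\,\bbS_{j-1:0}\,\bbx$ with the deterministic operator $\bbG_j := \sum_{k=j}^K h_k\,\barbS^{\,k-j}$. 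The crucial point is that, since $\bbE_j$ depends only on $\bbS_j$ while every factor of the term indexed $j'\neq j$ is independent of $\bbS_{\max(j,j')}$ and $\mathbb{E}[\bbE_{\max(j,j')}]=\bbzero$, all cross terms in $\mathbb{E}[\|\bbu-\bar\bbu\|_2^2]$ vanish upon conditioning. Hence the variance splits \emph{exactly} into $K$ diagonal contributions $\sum_{j=1}^K \mathbb{E}\big[\|\bbG_j\bbE_j\bbS_{j-1:0}\bbx\|_2^2\big]$.

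I would then bound each diagonal term using Assumption \ref{as:4}. In the chain-GFT representation \eqref{eq:gs2}, inserting the fluctuation $\bbE_j$ at position $j$ of the chain perturbs exactly the $j$-th coordinate of the eigenvalue chain $\bblambda$; because the Lipschitz constant $C_g$ already bounds the sensitivity of $h(\bblambda)$ including all of its shift-product factors, the aggregate operator $\bbG_j(\cdot)\bbS_{j-1:0}$ contributes at most $C_g$ times the spectral size of the perturbation, yielding $\mathbb{E}[\|\bbG_j\bbE_j\bbS_{j-1:0}\bbx\|_2^2]\le C_g^2\,\mathbb{E}[\|\bbE_j\|_2^2]\,\|\bbx\|_2^2$ at leading order. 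Bounding the operator norm by the Frobenius norm and summing the per-edge Bernoulli variances gives $\mathbb{E}[\|\bbE_j\|_2^2]\le \mathbb{E}[\|\bbE_j\|_F^2] = 2\alpha M\,p(1-p)$, with $\alpha=1$ for the adjacency and $\alpha=2$ for the Laplacian, the latter carrying the extra diagonal fluctuations. Summing over the $K$ positions produces the constant $C=2\alpha M K C_g^2$ and the leading factor $p(1-p)$, while every remainder term that retains two or more independent factors $\bbE_j$ carries a product of at least two Bernoulli variances and is therefore $\mathcal{O}(p^2(1-p)^2)$.

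The main obstacle is the third step: making rigorous the claim that a single mid-chain perturbation is controlled purely by $C_g$, with no residual eigenvector-misalignment term. This is exactly why the generalized frequency response and the chain of graph Fourier decompositions were built — they expand each shift operator in its \emph{own} eigenbasis, so that the sensitivity to perturbing $\bbS_j$ is captured by one coordinate of $h(\bblambda)$ and bounded uniformly over graphs through $C_g$, rather than through the uncontrolled relative rotation of distinct eigenspaces. Care is also needed to verify that the leading-order Lipschitz estimate and the separation of the $\mathcal{O}(p^2(1-p)^2)$ remainder are consistent once the random factors in $\bbS_{j-1:0}$ are expanded.
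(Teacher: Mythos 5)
Your proposal is correct and recovers the paper's constant $C=2\alpha MKC_g^2$ exactly, but it organizes the argument more cleanly than the paper does. The paper expands $\mathbb{E}[\bbS_{k:0}\bbx\bbx^\top\bbS_{0:\ell}]$ coefficient-pair by coefficient-pair in \eqref{eq:prop33Ts1}, keeps the ``one $\bbE_r$ on each side, means elsewhere'' terms, and collects everything else into a remainder $\bbC_{k\ell}$ argued to be $\ccalO(p^2(1-p)^2)$. Your telescoping identity $\bbS_{k:0}-\barbS^{\,k}=\sum_{j=1}^{k}\barbS^{\,k-j}\bbE_j\bbS_{j-1:0}$, combined with the conditioning argument (which is valid: for $j<j'$ the factor $\bbE_{j'}$ appears linearly and is independent of everything else in the cross term), makes the split ${\rm var}[\bbu]=\sum_{j=1}^{K}\mathbb{E}\big[\|\bbG_j\bbE_j\bbS_{j-1:0}\bbx\|_2^2\big]$ \emph{exact} --- a martingale-difference orthogonality --- and defers all approximation to the re-expansion of the random tail $\bbS_{j-1:0}$, whereas the paper approximates at the first step. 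The quantitative ingredients then coincide: your Frobenius computation $\mathbb{E}[\|\bbE_j\|_F^2]=\tr\big(\mathbb{E}[\bbE_j^2]\big)=2\alpha M p(1-p)$ is precisely the paper's Lemma \ref{lemmma3} together with $\tr(\bbD)=2M$ (and $\tr(\bbL)=2M$ with $\alpha=2$), and the factor $KC_g^2$ arises from the same source in both proofs.

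The one step you rightly flag as the obstacle deserves to be spelled out, because it cannot be closed by norm submultiplicativity: neither $\|\bbG_j\|_2$ nor $\|\barbS^{\,j-1}\|_2$ is individually controlled by $C_g$; only the \emph{product} $\bar\lambda_i^{\,j-1}g_j(\bar\lambda_m)$, with $g_j(\lambda)=\sum_{k=j}^{K}h_k\lambda^{k-j}$, is bounded by $C_g$, since it equals the partial derivative $\partial h(\bblambda)/\partial\lambda_j$ evaluated at the mixed eigenvalue vector whose first $j-1$ entries are $\bar\lambda_i$ and remaining entries are $\bar\lambda_m$ --- the paper's $\barblambda_{im}^{\,j}$ in \eqref{eq:prop3105}. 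Concretely, expanding $\bbE_j$ in the eigenbasis $\barbV$ of $\barbS$ gives
\begin{equation*}
\big\|\bbG_j\bbE_j\barbS^{\,j-1}\bbx\big\|_2^2
=\sum_{m=1}^{N}\Big(\sum_{i=1}^{N}\hat{x}_i\,\bar\lambda_i^{\,j-1}g_j(\bar\lambda_m)\,\big[\barbV^\top\bbE_j\barbV\big]_{mi}\Big)^{2}
\le C_g^2\,\|\bbx\|_2^2\,\|\bbE_j\|_F^2,
\end{equation*}
by Cauchy--Schwarz over $i$ followed by summing over $m$; this is the entrywise counterpart of the paper's route via $\tr(\bbA\bbB)\le\tr(\bbA)\tr(\bbB)$ and the identification \eqref{eq:prop311}, and it yields exactly your claimed leading-order bound. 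Two minor points: your displayed bound should use $\|\bbE_j\|_F$ rather than $\|\bbE_j\|_2$ from the start (the operator norm is then bounded by it, as you note); and, like the paper --- which invokes Assumption \ref{as:3} only when disposing of $\bbC_{k\ell}$ even though the proposition states Assumption \ref{as:4} alone --- your $\ccalO(p^2(1-p)^2)$ remainder implicitly requires bounded filter coefficients and bounded $\|\barbS\|_2$ to absorb constants, which is worth stating explicitly.
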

\begin{proof}
See Appendix \ref{pr:proposition1}.
\end{proof}

Proposition \ref{proposition1} shows the variance of the filter output does not diverge and varies within a finite range that depends on the link sampling probability $p$. The bound represents the influence of random link fluctuations on the filter variance. When links are stable ($p \to 1$) or links are highly unstable ($p \to 0$), the variance is small indicating the filter output varies close to the expected value. For the extreme cases $p=0$ or $p=1$, the bound reduces to zero because all RES($\ccalG,0$) or RES($\ccalG,1$) realizations are deterministic graphs. The maximum variance is achieved for $p=0.5$, corresponding to the most uncertain case about the presence of links. Constant $C$ represents the role of the graph and filter: $M$ is the number of edges; $K$ and $C_g$ are the filter order and the Lipschitz constant of generalized frequency response, respectively.

\subsection{Variance of the SGNN}

The variance of the SGNN is directly influenced by the variance of the stochastic graph filter with additional effects of the nonlinearity and the layered architecture. Before claiming the main result, we require an assumption for the nonlinearity.

\begin{assumption} \label{assumption3}
The nonlinearity $\sigma(\cdot)$ satisfying $\sigma(0)\!=\!0$ is Lipschitz and variance non-increasing. I.e., there exists a constant $C_\sigma$ such that for all $x, y \!\in\! \mathbb{R}$ the nonlinearity satisfies
\begin{equation}\label{eq:assumption3}
\begin{split}
|\sigma(x)-\sigma(y)| \le C_\sigma |x-y|, ~ {\rm var}[\sigma(x)] \le {\rm var}[x].
\end{split}
\end{equation}
\end{assumption}

The Lipschitz nonlinearity is commonly used, which includes ReLU, absolute value and hyperbolic tangent. We show the variance non-increasing is mild by the following lemma that proves this property holds for ReLU and absolute value.

\begin{lemma} \label{lemma3}
Consider the nonlinearity is ReLU $\sigma(x)=\max(0,x)$ or absolute value $\sigma(x) = |x|$. For a random variable $x$ with any distribution, it holds that
\begin{gather}
\label{eq:lemma3} {\rm var} \left[ \sigma (x) \right] \le {\rm var} \left[ x \right].
\end{gather}
\end{lemma}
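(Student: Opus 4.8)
The plan is to treat the two nonlinearities separately, in each case rewriting the variance gap as a manifestly nonnegative quantity, relying only on the elementary facts that $\sigma(x)^2$ relates cleanly to $x^2$ and that the positive and negative parts of $x$ have disjoint support. Throughout I assume $x$ has a finite second moment, so that all variances are well defined; otherwise the inequality is vacuous.

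For the absolute value $\sigma(x)=|x|$, the key observation is that $|x|^2=x^2$, so $\mathbb{E}[\sigma(x)^2]=\mathbb{E}[x^2]$ and the two variances share the same second-moment term. Subtracting, the gap collapses to
\begin{equation*}
{\rm var}[x]-{\rm var}[|x|] = \big(\mathbb{E}[|x|]\big)^2-\big(\mathbb{E}[x]\big)^2 = \big(\mathbb{E}[|x|]-\mathbb{E}[x]\big)\big(\mathbb{E}[|x|]+\mathbb{E}[x]\big).
\end{equation*}
Jensen's inequality (equivalently, the triangle inequality) gives $\mathbb{E}[|x|]\ge|\mathbb{E}[x]|$, which makes both factors on the right nonnegative, and the claim follows.

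For the ReLU $\sigma(x)=\max(0,x)$, I would decompose $x=x^+-x^-$ into its positive part $x^+=\max(0,x)=\sigma(x)$ and negative part $x^-=\max(0,-x)$. Since $x^+$ and $x^-$ have disjoint support, $x^+x^-=0$ pointwise, whence $x^2=(x^+)^2+(x^-)^2$ and so $\mathbb{E}[x^2]=\mathbb{E}[(x^+)^2]+\mathbb{E}[(x^-)^2]$. Writing $a=\mathbb{E}[x^+]\ge 0$ and $b=\mathbb{E}[x^-]\ge 0$ so that $\mathbb{E}[x]=a-b$, a short computation reduces the gap to
\begin{equation*}
{\rm var}[x]-{\rm var}[\sigma(x)] = {\rm var}[x^-]+2\,\mathbb{E}[x^+]\,\mathbb{E}[x^-],
\end{equation*}
a sum of a variance and a product of two nonnegative expectations, hence nonnegative.

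I do not anticipate a genuine obstacle, as both arguments are elementary once the right decomposition is in place. The single point that demands care is the cross-term cancellation $x^+x^-=0$, which is exactly what lets the second moment split additively and is the crux of the ReLU case; the absolute-value case then needs only Jensen's inequality. It is worth emphasizing that the conclusion is distribution-free, so nothing beyond a finite second moment of $x$ enters.
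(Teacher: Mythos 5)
Your proof is correct and follows essentially the same route as the paper: the absolute-value case is identical (using $|x|^2=x^2$ and Jensen's inequality), and in the ReLU case your identity ${\rm var}[x]-{\rm var}[\sigma(x)]={\rm var}[x^-]+2\,\mathbb{E}[x^+]\mathbb{E}[x^-]$ is exactly the paper's decomposition with $A=\mathbb{E}[x^-]$, $B=\mathbb{E}[x^+]$, and its ${\rm var}[z]$ for $z=\min(0,x)$ equal to your ${\rm var}[x^-]$. Your positive/negative-part phrasing is marginally cleaner in that it avoids the paper's implicit assumption of a density $p(x)$, but the underlying computation is the same.
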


The proof of Lemma \ref{lemma3} is in the supplementary material. The following theorem then formally quantifies the variance of the SGNN output and details the role of the SGNN architecture on the variance.

\begin{theorem} \label{theorem1}
Consider the SGNN in \eqref{eq.sgnn} of $L$ layers and $F$ features, over a RES($\ccalG,p$) graph model [cf. Def. \ref{def_res}] of $M$ edges. Let $\bbS$ be the underlying shift operator, $\barbS$ the expected shift operator, and $\ccalH$ the SGNN filter tensor. Let the stochastic graph filters be of order $K$ with frequency response \eqref{eq:frere1} and generalized frequency response \eqref{eq:h2} satisfying Assumption \ref{as:3} with constant $C_U$ and Assumption \ref{as:4} with constant $C_g$. Let also the nonlinearity $\sigma(\cdot)$ satisfy Assumption \ref{assumption3} with constant $C_\sigma$. 

Then, for any input graph signal $\bbx$, the variance of the SGNN output is upper bounded as
\begin{equation}\label{eq:theorem1main}
\begin{split}
\!{\rm var}\! \left[ \bbPhi(\bbx;\bbS_{P:1},\ccalH) \right] \!\le\! p(1\!-\!p) C \| \bbx \|_2^2 \!+\! \mathcal{O}(p^2(1\!-\!p)^2)
\end{split}
\end{equation}
with constant $C= 2 \alpha M \sum_{\ell=1}^L \!F^{2L-3} C_\sigma^{2\ell-2} C_U^{2L-2} K C_g^2$ and scalar $\alpha$ that depends on the shift operator.
\end{theorem}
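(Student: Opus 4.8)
The plan is to turn the single-filter estimate of Proposition~\ref{proposition1} into a layerwise recursion for the SGNN in \eqref{eq.sgnn}. I would track two coupled scalars: the aggregate variance $V_\ell := \sum_{f}{\rm var}[\bbx_\ell^f]$ (summed over features and nodes) and the aggregate energy $E_\ell := \sum_f\mbE\|\bbx_\ell^f\|_2^2$. The engine is the law of total variance relative to the filtration $\ccalF_{\ell-1}$ generated by all shift realizations used before layer $\ell$. Since every filter at layer $\ell$ uses shift sequences drawn independently of the past, conditioning on $\ccalF_{\ell-1}$ freezes the inputs $\bbx_{\ell-1}^g$ and leaves only the fresh filters random, giving for $Z_\ell^f := \sum_g\bbH_\ell^{fg}(\bbS_{K:0})\bbx_{\ell-1}^g$ the exact decomposition
\[
{\rm var}[Z_\ell^f]=\mbE\big[{\rm var}[Z_\ell^f\mid\ccalF_{\ell-1}]\big]+{\rm var}\big[\mbE[Z_\ell^f\mid\ccalF_{\ell-1}]\big],
\]
with no cross terms discarded; the variance-non-increasing half of Assumption~\ref{assumption3} then lets me pass through $\sigma$ and bound $\sum_f{\rm var}[\bbx_\ell^f]$ by $\sum_f{\rm var}[Z_\ell^f]$ without incurring $C_\sigma$.

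For the conditional (first) term, independence of the filters across $g$ makes the conditional variance the sum over $g$ of single-filter variances, each bounded by Proposition~\ref{proposition1} as $p(1-p)(2\alpha MKC_g^2)\|\bbx_{\ell-1}^g\|_2^2+\mathcal{O}(p^2(1-p)^2)$; averaging over $\ccalF_{\ell-1}$ converts $\|\bbx_{\ell-1}^g\|_2^2$ into $E_{\ell-1}$, which is precisely why the energy must be tracked alongside. For the conditional-mean (second) term I would use $\mbE[\bbS_{k:0}\mid\ccalF_{\ell-1}]=\barbS^{\,k}$ (independence within a shift sequence), so $\mbE[Z_\ell^f\mid\ccalF_{\ell-1}]=\sum_g\bbH(\barbS)\bbx_{\ell-1}^g$; since $\barbS$ is a fixed symmetric matrix, Assumption~\ref{as:3} gives the operator-norm bound $\|\bbH(\barbS)\|\le C_U$, and with ${\rm var}[\bbA\bbz]\le\|\bbA\|^2{\rm var}[\bbz]$ and the crude ${\rm var}[\sum_g\bby_g]\le F\sum_g{\rm var}[\bby_g]$ the propagated part is at most $FC_U^2V_{\ell-1}$ per output feature. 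Summing over the $F$ output features produces a recursion of the schematic form
\[
V_\ell\ \le\ F\,p(1-p)(2\alpha MKC_g^2)\,E_{\ell-1}+F^2C_U^2\,V_{\ell-1}+\mathcal{O}(p^2(1-p)^2).
\]

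The energy recursion, which is the source of the $C_\sigma$ powers, I would close with the identity $\mbE\|\bbH(\bbS_{K:0})\bbz\|_2^2=\|\bbH(\barbS)\bbz\|_2^2+{\rm var}[\bbH(\bbS_{K:0})\bbz]$ for fixed $\bbz$: the mean part is $\le C_U^2\|\bbz\|_2^2$ by Assumption~\ref{as:3}, while the variance part is $\mathcal{O}(p(1-p))$ by Proposition~\ref{proposition1}, so to leading order each filtering stage has energy gain $C_U^2$. Combining with the Lipschitz half of Assumption~\ref{assumption3}, $\mbE\|\sigma[\bbw]\|_2^2\le C_\sigma^2\mbE\|\bbw\|_2^2$ (using $\sigma(0)=0$), and $\|\sum_g\bby_g\|_2^2\le F\sum_g\|\bby_g\|_2^2$ yields $E_\ell\le F^2C_\sigma^2C_U^2E_{\ell-1}$ to leading order, hence $E_{\ell-1}\le (F^2C_\sigma^2C_U^2)^{\ell-1}\|\bbx\|_2^2$ to leading order. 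Substituting into the variance recursion and unrolling from $\ell=1$ to $L$, the variance born at layer $\ell$ carries $C_\sigma^{2(\ell-1)}C_U^{2(\ell-1)}$ from the upstream energy and is amplified by $C_U^{2(L-\ell)}$ during downstream propagation (variance-non-increasing adds no further $C_\sigma$), so its $C_\sigma$ exponent is $2\ell-2$ while its $C_U$ exponent collapses to $2L-2$ uniformly in $\ell$; the feature factors accumulate through the $\sum_g,\sum_f$ summations, and accounting for the first and last layers carrying $F$ rather than $F^2$ filters [cf. $P=K(2F+(L-2)F^2)$] is what produces the overall $F^{2L-3}$ and the outer sum $\sum_{\ell=1}^L$, reproducing $C=2\alpha M\sum_{\ell=1}^L F^{2L-3}C_\sigma^{2\ell-2}C_U^{2L-2}KC_g^2$.

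The conceptual content is the total-variance split together with Proposition~\ref{proposition1}; the real difficulty is bookkeeping. The two points I expect to cost the most effort are (i) tracking the width exponent exactly through the feature summations and the asymmetric first and last layers so that the powers of $F$ land on $2L-3$, which is sensitive to how the correlated feature sum is bounded in the propagated term; and (ii) verifying that the $\mathcal{O}(p^2(1-p)^2)$ remainders of Proposition~\ref{proposition1}, once multiplied by the per-layer gains and the geometrically growing energy, stay collectively $\mathcal{O}(p^2(1-p)^2)$ after unrolling, so that only the first-order term $p(1-p)C\|\bbx\|_2^2$ survives. A minor supporting point is that the finite frequency set $\Lambda$ of Assumption~\ref{as:3} must contain the eigenvalues of $\barbS$ for the bound $\|\bbH(\barbS)\|\le C_U$ to apply.
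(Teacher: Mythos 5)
Your proposal is correct and follows essentially the same route as the paper's proof: your law-of-total-variance split with respect to $\ccalF_{\ell-1}$ is exactly the paper's add-and-subtract of $\barbH_{L}^{f}\bbx_{L-1}^{f}\big(\barbH_{L}^{g}\bbx_{L-1}^{g}\big)^\top$ in \eqref{eq:thm44}, with the conditional-variance part bounded via Proposition \ref{proposition1} together with the energy recursion \eqref{eq:thm4511}, and the conditional-mean part propagated using $\|\bbH(\barbS)\|_2\le C_U$ (Lemma \ref{lemmma1}) and the covariance bound \eqref{eq:thm48} that supplies the crude factor $F$. The two bookkeeping points you flag are resolved in the paper exactly as you anticipate: the asymmetric first and last layers (with $F$ rather than $F^2$ filters, and the vanishing layer-one covariances $\sum_{f\ne g}{\rm cov}[\bbx_1^f,\bbx_1^g]=0$) yield the exponent $F^{2L-3}$, and the $\mathcal{O}(p(1-p))$ energy remainders are absorbed into $\mathcal{O}(p^2(1-p)^2)$ once multiplied by the $p(1-p)$ factor of the per-layer variance injection.
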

\begin{proof}
See Appendix \ref{pr:theorem1}.
\end{proof}

As it follows from \eqref{eq:propo3}, the SGNN variance bound has a similar form as the stochastic graph filter variance bound [cf. Proposition \ref{proposition1}], therefore, the conclusions of Proposition \ref{proposition1} apply also here. However, there is a large difference between the two bounds within constant $C$. In the SGNN, this constant is composed of three terms representing respectively the effect of the \emph{graph}, \emph{filter}, and \emph{neural network} on the variance. The graph impact is captured by $\alpha M$ which shows the role of shift operator type and that more connected graphs lead to a worse variance bound. The filter impact is captured by $C_U^{2L-2}KC_g^2$ which is dictated by the filter response $h(\lambda)$ and the generalized response $h(\bblambda)$. Notice that while we might be tempted to consider filters with a small $C_U$ [cf. Asm.\ref{as:3}] to have a smaller variance, it will lead to an information loss throughout the layers; hence to a lower performance. Our expectation is that filters with $C_U$ close to one will be a good tradeoff. The architecture impact is captured by $\sum_{\ell=1}^L \!F^{2L-3} C_\sigma^{2\ell-2}$, which is a consequence of the signal propagating throughout all $L$ layers and nonlinearities. The latter implies a deeper and wider SGNN has more uncertainty with a worse variance. This behavior is intuitively expected since the SGNN will have more random variable components and the aggregation of random variables in \eqref{eq.sgnn} will lead to a higher variance. The aforementioned factors are also our handle to design distributed SGNN architectures that are more robust to link losses.


\section{Convergence Analysis} \label{sec:convergence}

In this section, we propose an explicit learning process to train the SGNN with stochasticity appropriately. This learning process consists of minimizing the cost in \eqref{eq:PPP} by accounting for different RES($\ccalG, p$) realizations. The associated convergence analysis translates into identifying whether this learning process with stochasticity converges to a stationary point and if so under which conditions.

\subsection{Learning Process}\label{sec:learProc}

Consider the SGNN has a fixed sequence of RES($\ccalG, p$) shift operator realizations $\bbS_{P:1}$ in \eqref{eq.sgnn} when processing the input. We shall refer to the latter fixed architecture as an SGNN realization. The learning process of the SGNN follows that of the conventional GNN but now with stochastic graph filter coefficients $\ccalH$ learned through descent algorithms. The tensor $\ccalH$ is updated iteratively and each iteration $t$ comprises a forward and a backward phase. In the \emph{forward phase}, the SGNN has the tensor $\ccalH_t$ and a fixed architecture realization (i.e., the shift operator realizations $\bbS_{P:1}$ are fixed), processes all input signals $\{ \bbx_r \}_{r=1}^R$, obtains the respective outputs $\{ \bbPhi(\bbx_r; \bbS_{P:1}, \ccalH_t) \}_{r=1}^R$, and computes the cost $C(\bbS_{P:1},\ccalH_t)$ as per \eqref{eq:PPP}. In the \emph{backward phase}, the tensor $\ccalH_t$ gets updated with a gradient descent algorithm with step-size $\alpha_t$. This learning procedure incorporates the graph stochasticity in each gradient descent iteration $t$ by fixing an SGNN realization. The latter mimics the network randomness caused by practical link losses; hence, it makes the cost function $C(\bbS_{P:1},\ccalH)$ at each iteration random. 

While the cost function $C(\bbS_{P:1},\ccalH)$ is random at each iteration, it is sampled from the distribution $m_p(\bbS_{P:1})$ determined by the link sampling probability $p$. This motivates to focus on the deterministic expected cost over the distribution $m_p(\bbS_{P:1})$ instead of random single cost. From this intuition, we consider the following stochastic optimization problem
\begin{equation} \label{eq:mini}
\begin{split}
\min_{\ccalH} \bar{C}(\ccalH) = \min_{\ccalH} \mathbb{E} \left[C(\bbS_{P:1},\ccalH)\right].
\end{split}
\end{equation}
The problem in \eqref{eq:mini} is akin to the conventional stochastic optimization problem, while the expectation $\mathbb{E}[\cdot]$ is now w.r.t. the graph stochasticity instead of the data distribution. The random cost function $C(\bbS_{P:1},\ccalH_t)$ is entirely determined by the SGNN realization $\bbPhi(\cdot; \bbS_{P:1}, \ccalH_t)$. In turn, this indicates that fixing an SGNN realization is on the identical notion of randomly sampling a cost function $C(\bbS_{P:1},\ccalH_t)$ in \eqref{eq:mini}. The latter implies that the proposed learning process is equivalent to performing the stochastic gradient descent (SGD) method on the stochastic optimization problem in \eqref{eq:mini}. To be more precise, the \emph{forward phase} is equivalent to selecting a random cost function $C(\bbS_{P:1},\ccalH_t)$ from the stochastic distribution, and the \emph{backward phase} is equivalent to approximating the true gradient $\nabla_\ccalH \bar{C}(\ccalH)$ with $\nabla_\ccalH C(\bbS_{P:1},\ccalH)$ stochastically and uses this approximation to update the parameters in $\ccalH$ at each iteration. Put simply, we can interpret the learning process of the SGNN as running the SGD method on the stochastic optimization problem \eqref{eq:mini} -- see Algorithm  \ref{algo2}.

The graph stochasticity incorporated during training matches the graph stochasticity encountered during testing. The tensor will then be learned in a way that is robust to stochastic perturbations since each node does not rely certainly on information from all its neighbors. This robustness yields an improved transference ability, i.e., the SGNN can transfer better to scenarios where the graph topology changes randomly during testing. While the SGNN is expected to exhibit better performance in distributed tasks, the involved graph stochasticity makes the proposed learning process random. As such, it is unclear if the SGNN learning process converges. We analyze this aspect next and prove that under conventional mild conditions, the SGNN learning process reaches a stationary point.

\subsection{Convergence of SGNN Learning Process} \label{sec:conf}


{\linespread{0.9}
\begin{algorithm}[t]  \begin{algorithmic}[1]
\STATE \textbf{Input:} training set $\mathcal{T}$, cost $C(\bbS_{P:1},\ccalH)$, and initial filter tensor $\ccalH_0$
\FOR {$t = 0,1,...,T$}
      \STATE Fix a random cost function realization $C(\bbS_{P:1},\ccalH_t)$
      \STATE Compute the gradients $\nabla_\ccalH C(\bbS_{P:1},\ccalH_t)$
      \STATE Update the tensor with step-size $\alpha_t$\\
      $\ccalH_{t+1} = \ccalH_t - \alpha_t \nabla_\ccalH C(\bbS_{P:1},\ccalH_t)$ \\
\ENDFOR
\end{algorithmic}
\caption{Stochastic Gradient Descent on \eqref{eq:mini}}\label{algo2} \end{algorithm}}

Given the equivalence between the SGNN learning process and the SGD method in Algorithm \ref{algo2}, we analyze the convergence of the SGNN learning process by proving the convergence of the SGD counterpart; i.e., we prove there exists a sequence of tensors $\{\ccalH_t\}$ generated by the SGD that approaches a stationary point $\ccalH^*_s$ of \eqref{eq:mini}. Since problem \eqref{eq:mini} is nonconvex due to the nonlinear nature of the SGNN, we can no longer use the metric $\bar{C}(\ccalH) - \bar{C}(\ccalH^*_s)$ or $\|\ccalH - \ccalH_s^*\|_2^2$ as a convergence criterion. We instead use the gradient norm $\| \nabla_\ccalH \bar{C}(\ccalH)\|^2_2$, which is a typical surrogate to quantify stationarity and has also a similar order of magnitude as the above two quantities \cite{Ghadimi2013}. To render this analysis tractable, we assume the following.

\begin{assumption} \label{as:1}
The expected cost $\bar{C}(\ccalH)$ in \eqref{eq:mini} is Lipschitz continuous. I.e., there exists a constant $C_L$ such that
\begin{equation} \label{eq:as1}
\begin{split}
\!\!\!\| \nabla_\ccalH \bar{C}(\ccalH_1) \!-\!\! \nabla_\ccalH \bar{C}(\ccalH_2) \|_2 \!\le\! C_L\| \ccalH_1\!-\!\ccalH_2 \|_2
\end{split}
\end{equation}
for any $\ccalH_1$ and $\ccalH_2$.
\end{assumption}

\begin{assumption}  \label{as:2}
The stochastic cost $C(\bbS_{P:1},\ccalH)$ in \eqref{eq:PPP} has a $C_B$-bounded gradient norm. I.e., there exists a constant $C_B$ such that
\begin{equation}
\begin{split}
\| \nabla_\ccalH C(\bbS_{P:1},\ccalH) \|_2 \le C_B.
\end{split}
\end{equation}
\end{assumption}

Both assumptions are mild and common in optimization analysis \cite{Goldstein1977,Ghadimi2013}. They allow us to handle the stochasticity in the gradient and prove the convergence of the SGNN learning process. This is formalized by the following theorem.

\begin{theorem} \label{theorem2}
Consider the SGNN in \eqref{eq.sgnn} over a RES($\ccalG, p$) graph model [cf. Def. \ref{def_res}] with underlying shift operator $\bbS$ and the training set $\ccalT$. Consider the SGNN learning process for $T$ iterations, i.e., running the stochastic gradient descent on \eqref{eq:mini} for $T$ iterations [cf. Algo. \ref{algo2}]. Further, let $\ccalH^*$ be the global optimal solution of the cost function $\bar{C}(\ccalH)$ in \eqref{eq:mini} and let Assumptions \ref{as:1} and \ref{as:2} hold with respective constants $C_L$ and $C_B$. For any initial tensor $\ccalH_0$ and gradient step-size
\begin{equation} \label{eq:step}
\begin{split}
\alpha_t = \alpha = \sqrt{\frac{2\left( \bar{C}(\ccalH_0)-\bar{C}(\ccalH^*)\right)}{T C_L C_B^2 }}
\end{split}
\end{equation}
the minimum expected gradient square norm is bounded as
\begin{equation}\label{eq:thm22}
\begin{split}
\min_{0\le t \le T-1} \mathbb{E} \left[ \| \nabla_\ccalH \bar{C}(\ccalH_t) \|^2_2 \right] \le \frac{C}{\sqrt{T}}
\end{split}
\end{equation}
with constant $C\!=\! \sqrt{2\left( \bar{C}(\ccalH_0)-\bar{C}(\ccalH^*)\right)\!C_L}C_B$. That is, the learning process of the SGNN trained to minimize \eqref{eq:mini} converges to a stationary point with a rate of $\ccalO(1/\sqrt{T})$.
\end{theorem}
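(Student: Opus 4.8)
The plan is to recast the statement as a textbook convergence result for stochastic gradient descent on the nonconvex objective $\bar{C}(\ccalH)$ and then optimize over the constant step-size. The key enabling fact, already argued in the text, is the equivalence between Algorithm~\ref{algo1} and its SGD counterpart: fixing a random realization $\bbPhi(\cdot;\bbS_{P:1},\ccalH_t)$ and descending on $C(\bbS_{P:1},\ccalH_t)$ is the same as drawing a stochastic gradient $\nabla_\ccalH C(\bbS_{P:1},\ccalH_t)$ whose conditional expectation over the $\RESGp$ draws equals the true gradient $\nabla_\ccalH\bar{C}(\ccalH_t)$. Since the edge samplings are Bernoulli, the shift operators take finitely many values, so this expectation is a finite sum and the unbiasedness $\mathbb{E}[\nabla_\ccalH C(\bbS_{P:1},\ccalH_t)\mid\ccalH_t]=\nabla_\ccalH\bar{C}(\ccalH_t)$ follows by interchanging the finite sum with the gradient.

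First I would invoke the $C_L$-Lipschitz gradient property of Assumption~\ref{as:1} to obtain the standard descent (quadratic upper bound) inequality
\[
\bar{C}(\ccalH_{t+1}) \le \bar{C}(\ccalH_t) + \langle \nabla_\ccalH\bar{C}(\ccalH_t),\, \ccalH_{t+1}-\ccalH_t\rangle + \frac{C_L}{2}\|\ccalH_{t+1}-\ccalH_t\|_2^2 .
\]
Substituting the update $\ccalH_{t+1}-\ccalH_t=-\alpha\,\nabla_\ccalH C(\bbS_{P:1},\ccalH_t)$, taking the conditional expectation given $\ccalH_t$, applying unbiasedness to the inner-product term, and using Assumption~\ref{as:2} to bound $\mathbb{E}[\|\nabla_\ccalH C(\bbS_{P:1},\ccalH_t)\|_2^2\mid\ccalH_t]\le C_B^2$ in the quadratic term, I obtain
\[
\mathbb{E}\!\left[\bar{C}(\ccalH_{t+1})\right] \le \mathbb{E}\!\left[\bar{C}(\ccalH_t)\right] - \alpha\,\mathbb{E}\!\left[\|\nabla_\ccalH\bar{C}(\ccalH_t)\|_2^2\right] + \frac{C_L\alpha^2 C_B^2}{2}.
\]

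Next I would telescope this inequality over $t=0,\ldots,T-1$ and use the global optimality $\bar{C}(\ccalH_T)\ge\bar{C}(\ccalH^*)$ to discard the final iterate, yielding
\[
\alpha\sum_{t=0}^{T-1}\mathbb{E}\!\left[\|\nabla_\ccalH\bar{C}(\ccalH_t)\|_2^2\right] \le \bar{C}(\ccalH_0)-\bar{C}(\ccalH^*) + \frac{T C_L\alpha^2 C_B^2}{2}.
\]
Lower-bounding the left-hand sum by $T$ times its minimum term and dividing by $\alpha T$ gives a bound of the form $\tfrac{\bar{C}(\ccalH_0)-\bar{C}(\ccalH^*)}{\alpha T}+\tfrac{C_L\alpha C_B^2}{2}$ on $\min_t\mathbb{E}[\|\nabla_\ccalH\bar{C}(\ccalH_t)\|_2^2]$. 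Finally I would minimize this expression over $\alpha$: setting its derivative to zero recovers exactly the step-size in \eqref{eq:step}, which balances the two terms and produces the claimed $C/\sqrt{T}$ rate with $C=\sqrt{2(\bar{C}(\ccalH_0)-\bar{C}(\ccalH^*))C_L}\,C_B$.

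The analysis is routine once the stochastic framework is in place, so the main obstacle is conceptual rather than computational: carefully justifying the unbiased-gradient identity and the conditional-expectation bookkeeping, namely that $\ccalH_t$ is measurable with respect to the history of earlier draws while the draw at step $t$ is independent of it, since everything downstream relies on replacing the stochastic gradient by the true gradient in expectation. A secondary point worth verifying is that $\bar{C}$ is bounded below so that $\bar{C}(\ccalH_0)-\bar{C}(\ccalH^*)$ is a finite nonnegative constant, which is precisely what makes the prescribed step-size well defined.
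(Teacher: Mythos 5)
Your proposal is correct and takes essentially the same route as the paper's proof: the quadratic descent inequality from the $C_L$-Lipschitz gradient, substitution of the stochastic update, unbiasedness of the sampled gradient, the $C_B^2$ bound on the squared gradient norm, telescoping with $\bar{C}(\ccalH_T)\ge\bar{C}(\ccalH^*)$, the min-over-average step, and the step-size \eqref{eq:step} balancing the two resulting terms. The only differences are cosmetic and slightly in your favor: you invoke the descent lemma directly from Assumption \ref{as:1}, whereas the paper obtains the same bound via a second-order Taylor expansion with $\|\nabla^2_\ccalH\bar{C}(\tilde{\ccalH}_t)\|_2\le C_L$ (implicitly assuming twice differentiability), and your conditional-expectation bookkeeping with the tower property is a more careful rendering of the paper's unconditional identity $\mathbb{E}[\nabla_\ccalH C(\bbS_{P:1},\ccalH_t)]=\mathbb{E}[\nabla_\ccalH\bar{C}(\ccalH_t)]$.
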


\begin{proof}
See Appendix \ref{pr:theorem2}.
\end{proof}

Theorem \ref{theorem2} states that with an appropriate choice of step-size $\alpha_t$, Algorithm 1 converges to a stationary point for tensor $\ccalH_t$. The step-size $\alpha_t$ in \eqref{eq:step} depends on the total number of iterations $T$; a more practical way is to set $\alpha_t \propto 1/t$ or $1/\sqrt{t}$. Due to the non-convexity of the SGNN, the learning process has guaranteed convergence to a local stationary minima. The latter can be extended to a better (potentially global) minimum with standard approaches such as training the SGNN multiple times.

\begin{figure}[t]
\centering
\includegraphics[width=0.45\textwidth, trim=10 10 10 10]{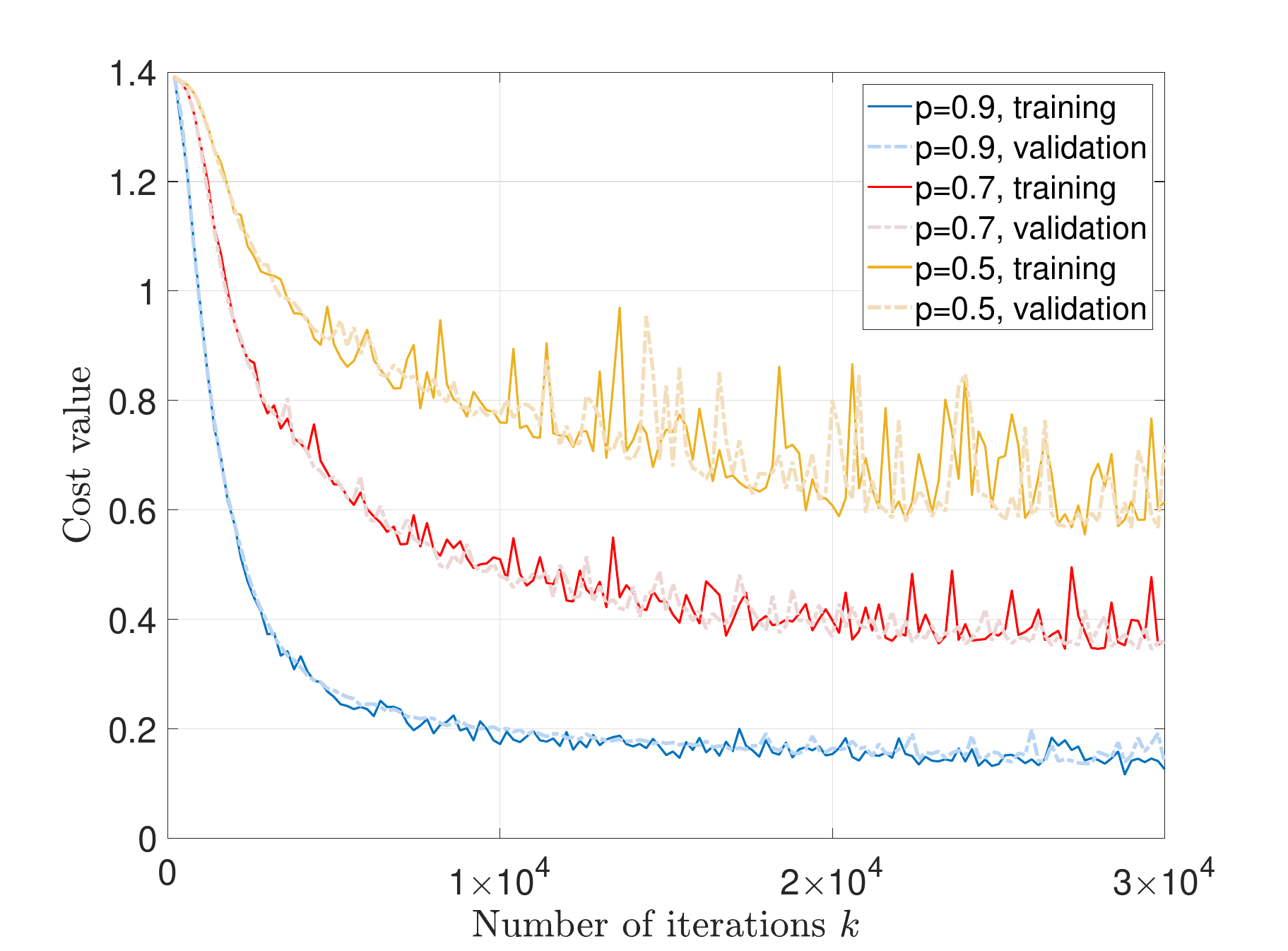}
\caption{Convergence of stochastic graph neural network with link sampling probabilities $p=0.9$, $p=0.7$ and $p=0.5$.}\label{fig:convergence}
\end{figure}

\begin{figure*}%
\centering
\begin{subfigure}{0.61\columnwidth}
\includegraphics[width=1.1\linewidth, height = 0.85\linewidth]{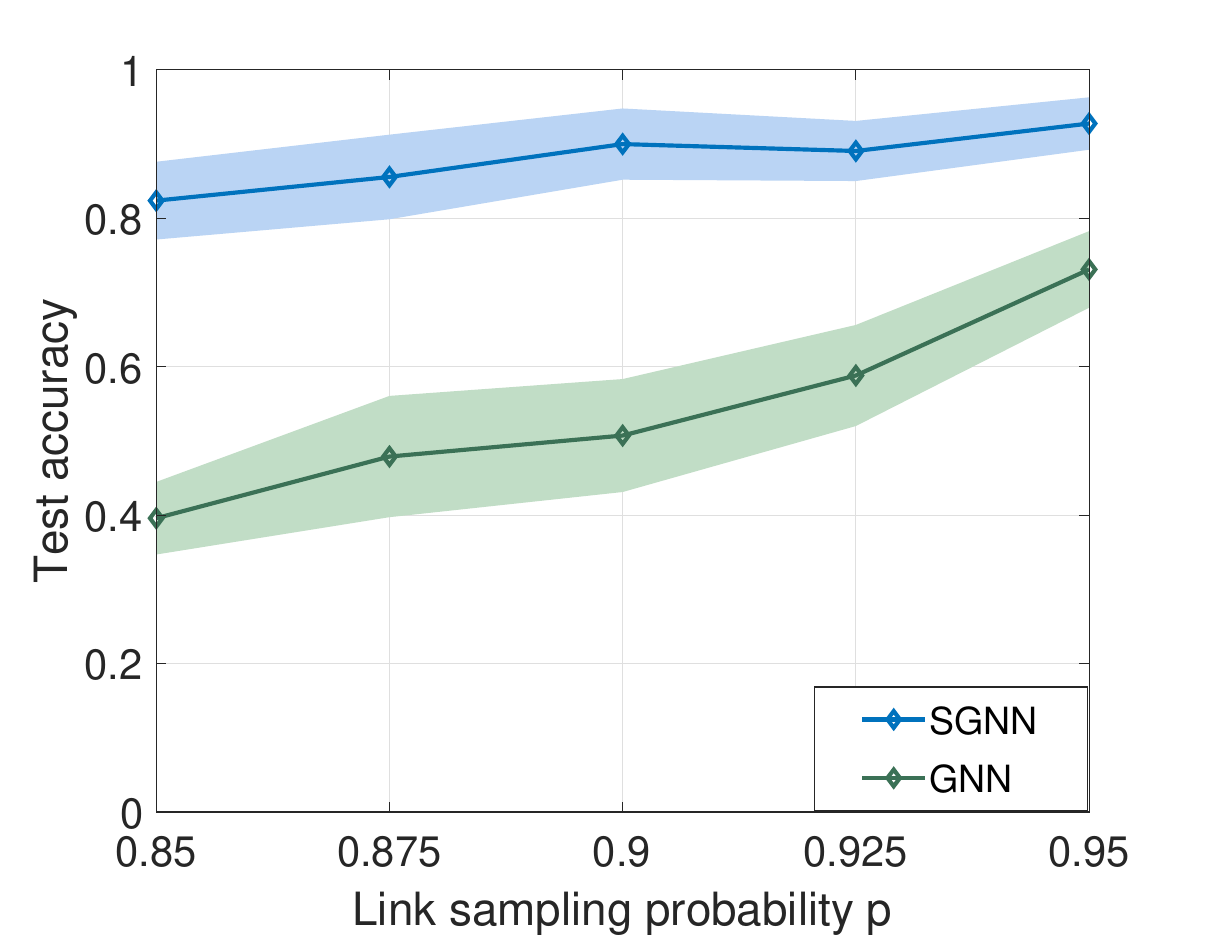}%
\caption{}%
\label{subfig:source_vary_p_zoom}%
\end{subfigure}\hfill\hfill%
\begin{subfigure}{0.61\columnwidth}
\includegraphics[width=1.1\linewidth,height = 0.85\linewidth]{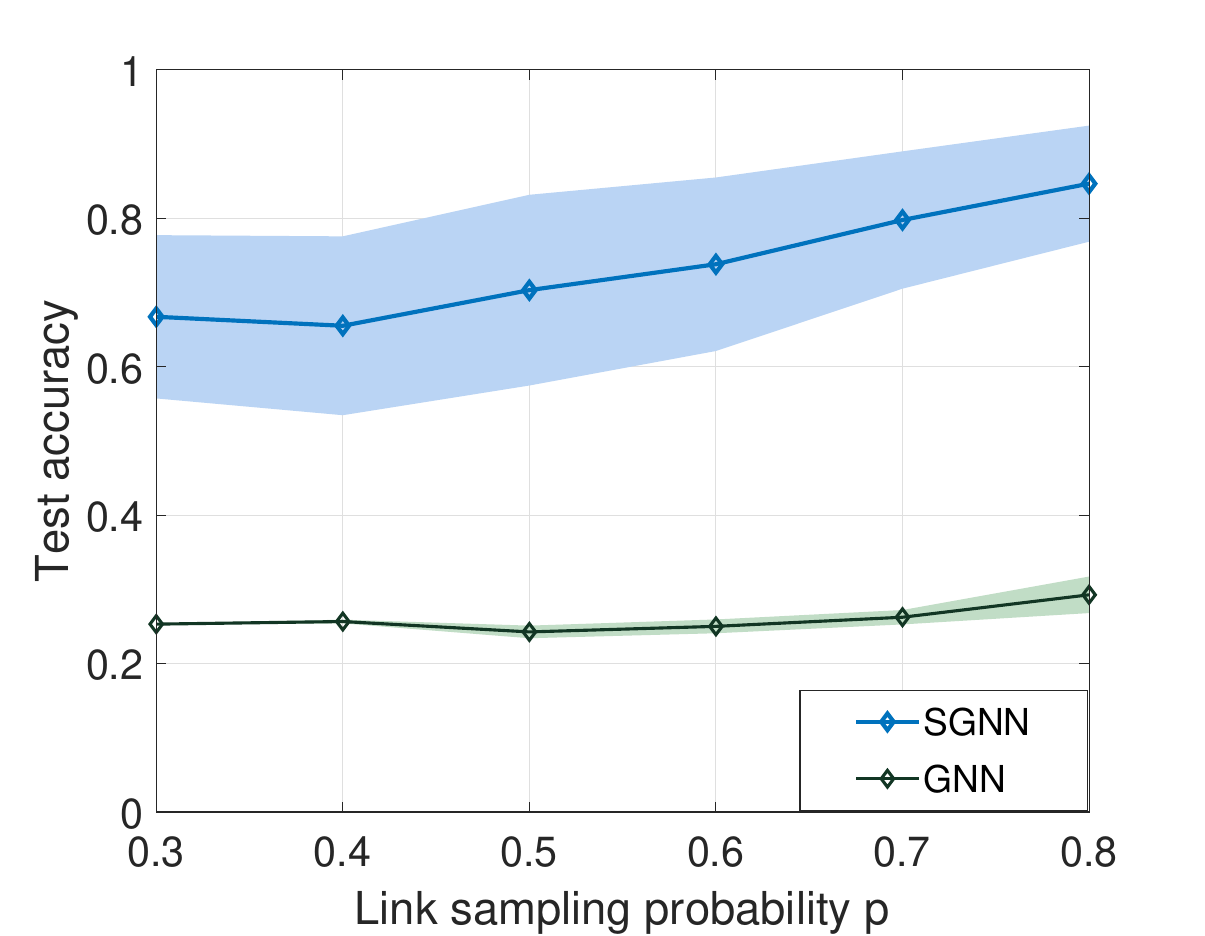}%
\caption{}%
\label{subfig:source_vary_p}%
\end{subfigure}\hfill\hfill%
\begin{subfigure}{0.6\columnwidth}
\includegraphics[width=1.075\linewidth,height = 0.85\linewidth]{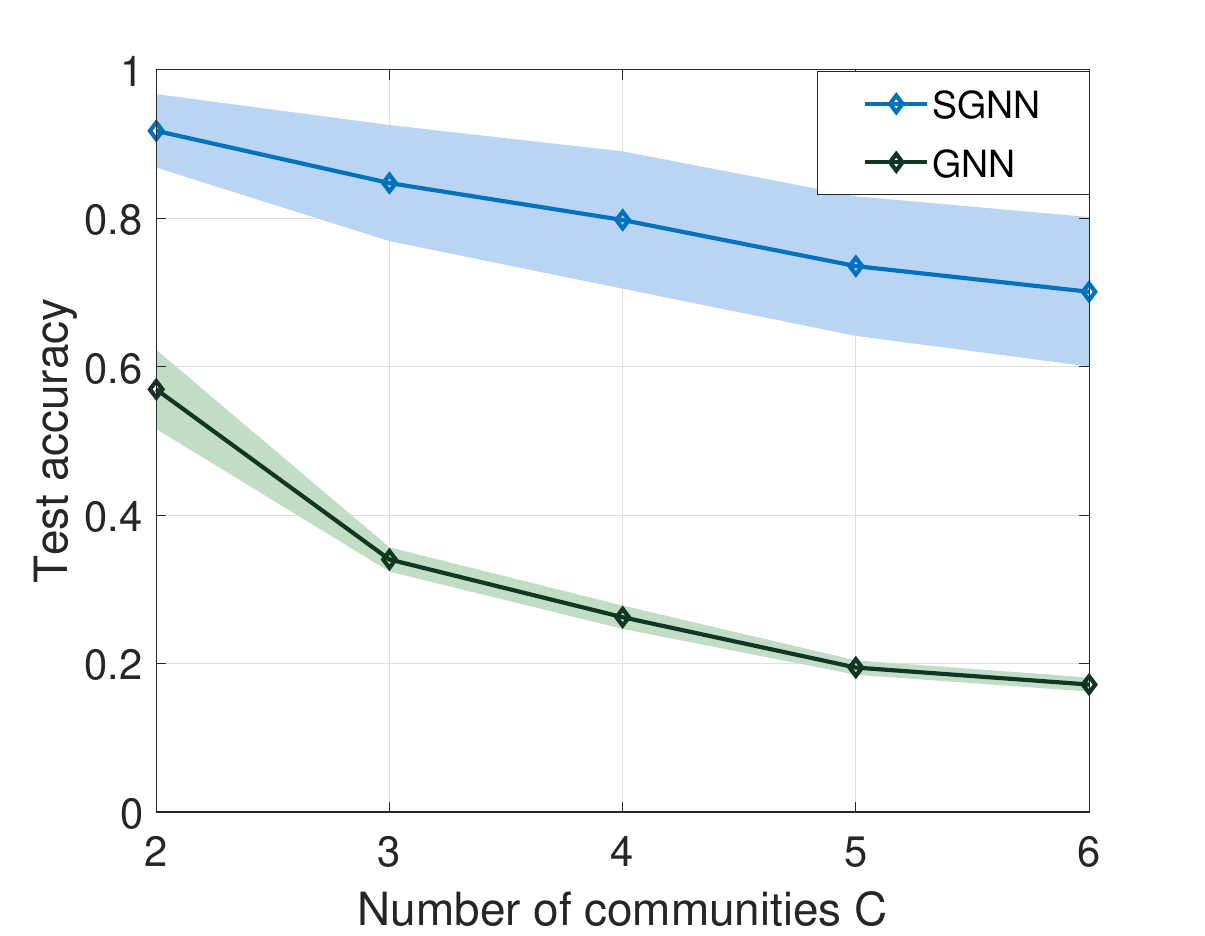}%
\caption{}%
\label{subfig:source_vary_C}%
\end{subfigure}%
\caption{Stochastic graph neural network performance for source localization. (a) Comparison between SGNN and GNN under different link sampling probabilities $p \in [0.85,0.95]$. (b) Comparison between SGNN and GNN under different link sampling probabilities $p \in [0.2,0.8]$. (c) Comparison between SGNN and GNN under different number of communities.}\label{fig:source}
\end{figure*}

Overall, we defined the stochastic graph neural network in \eqref{eq.sgnn} as an GNN architecture composed of stochastic graph filters. The SGNN output is random due to the stochastic architecture and thus the training shall optimize the SGNN with respect to average performance. In Section \ref{sec:expper}, we quantified how far from the mean a SGNN realization can behave and the effects played by different factors such as the shift operator, the filter type and the architecture structure. We also developed the SGD based learning process for the SGNN and the result \eqref{eq:thm22} of Theorem \ref{theorem2} shows that this learning process is sufficient to reach a local minima in expectation.


\section{Numerical Simulations} \label{Numerical}

We evaluate the proposed model and compare it with the convolutional GNN \cite{Fernando2019} on source localization (Section \ref{source}) and robot swarm control (Section \ref{robot}), and corroborate the variance bound in Theorem \ref{theorem1} numerically (Section \ref{subsec:varianceBound}). To train the architectures, we used the ADAM optimizer with decaying factors $\beta_1 = 0.9$ and $\beta_2 = 0.999$ \cite{Ba2010}. In the test phase, we assume all links may fall according to the RES($\ccalG,p$) model.

\subsection{Source Localization} \label{source}

We consider a signal diffusion process over a stochastic block model (SBM) graph of $N=40$ nodes divided equally into $C=4$ communities, with inter- and intra-block edge probability of $0.8$ and $0.2$ respectively. The goal is for a single node to find out distributively which community is the source of a given diffused signal. The initial source signal is a Kronecker delta $\bbdelta_{c} \in \reals^N$ centered at the source node $\{ n_c \}_{c=1}^C$ and diffused at time $\tau$ as $\bbx_{\tau c} =\bbS^\tau \bbdelta_{c} + \bbn$ with $\bbS = \bbA / \lambda_{max}(\bbA)$ and $\bbn \in \mathbb{R}^N$ a zero-mean Gaussian noise. 

We considered a one-layer SGNN with $32$ parallel filters of order $K = 10$ and ReLU nonlinearity. The learning rate is $\alpha = 10^{-3}$ with the mini-batch size of $1,000$ samples. The training set comprises $10^4$ tuples $\{ (\bbx_{\tau c}, n_c) \}$ picked uniformly at random for $\tau \in \{0, \ldots, 40\}$ and $n_c \in \{1, \ldots, 40\}$; the validation set contains $2,400$ of these tuples; the test set contains $1,000$ tuples. Our results are averaged over ten different data and ten different graph realizations for a total of $100$ Monte-Carlo runs.

\smallskip
\noindent\textbf{Convergence analysis.} We first corroborate the convergence analysis in Section \ref{sec:convergence} and show the SGNN approaches a stationary point. Figure \ref{fig:convergence} shows the learning process of the SGNN with link sampling probabilities $p=0.9$, $p=0.7$, and $p=0.5$. These values correspond to stable, relatively stable, and vulnerable scenarios, respectively. The cost value decreases with the number of iterations, leading to a convergent result in all cases. When $p=0.9$, the SGNN exhibits the best behavior and converges to a lower value. This is because of the higher link stability. As $p$ decreases indicating more graph randomness, the convergent value increases accordingly as observed the lines corresponding to $p=0.7$ and $p=0.5$. The convergent values of $p=0.7$ and $p=0.5$ also have larger varying errors, which can be explained by the increasing stochastic error induced by the increased network randomness. These errors can be further reduced by either decreasing the step-size or training the network longer.

\begin{figure*}%
\centering
\begin{subfigure}{0.61\columnwidth}
\includegraphics[width=1.1\linewidth, height = 0.85\linewidth]{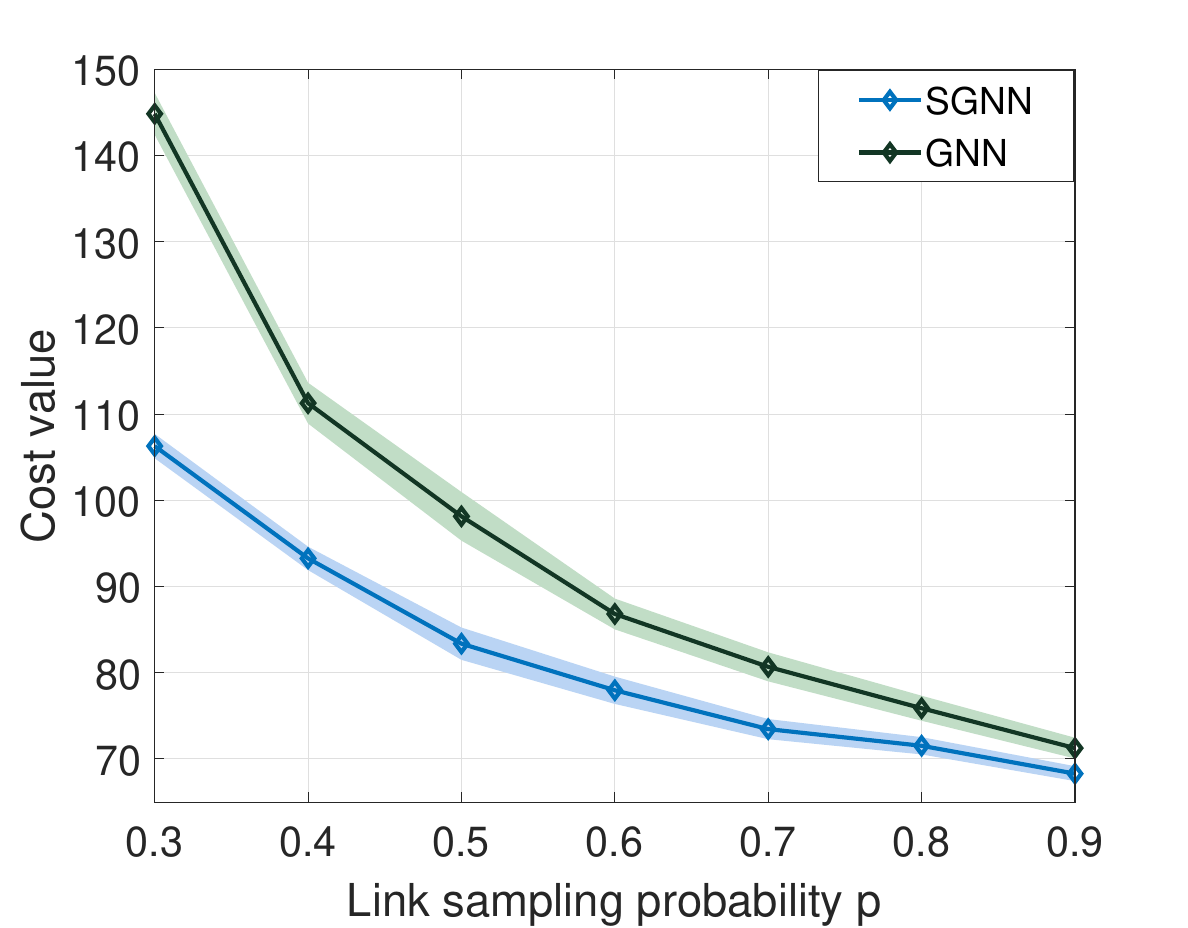}%
\caption{}%
\label{subfig:robot_vary_p}%
\end{subfigure}\hfill\hfill%
\begin{subfigure}{0.61\columnwidth}
\includegraphics[width=1.1\linewidth,height = 0.85\linewidth]{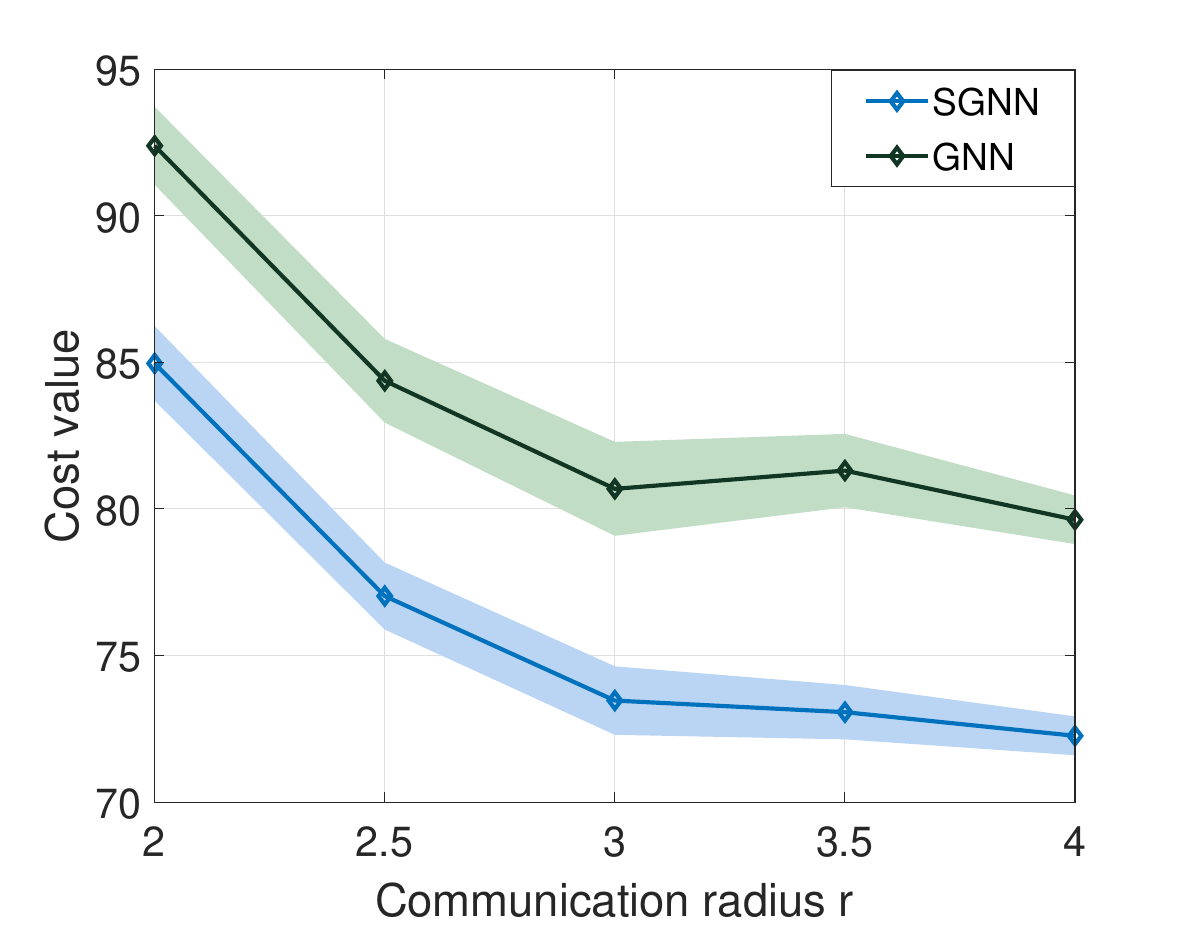}%
\caption{}%
\label{subfig:robot_vary_r}%
\end{subfigure}\hfill\hfill%
\begin{subfigure}{0.6\columnwidth}
\includegraphics[width=1.075\linewidth,height = 0.85\linewidth]{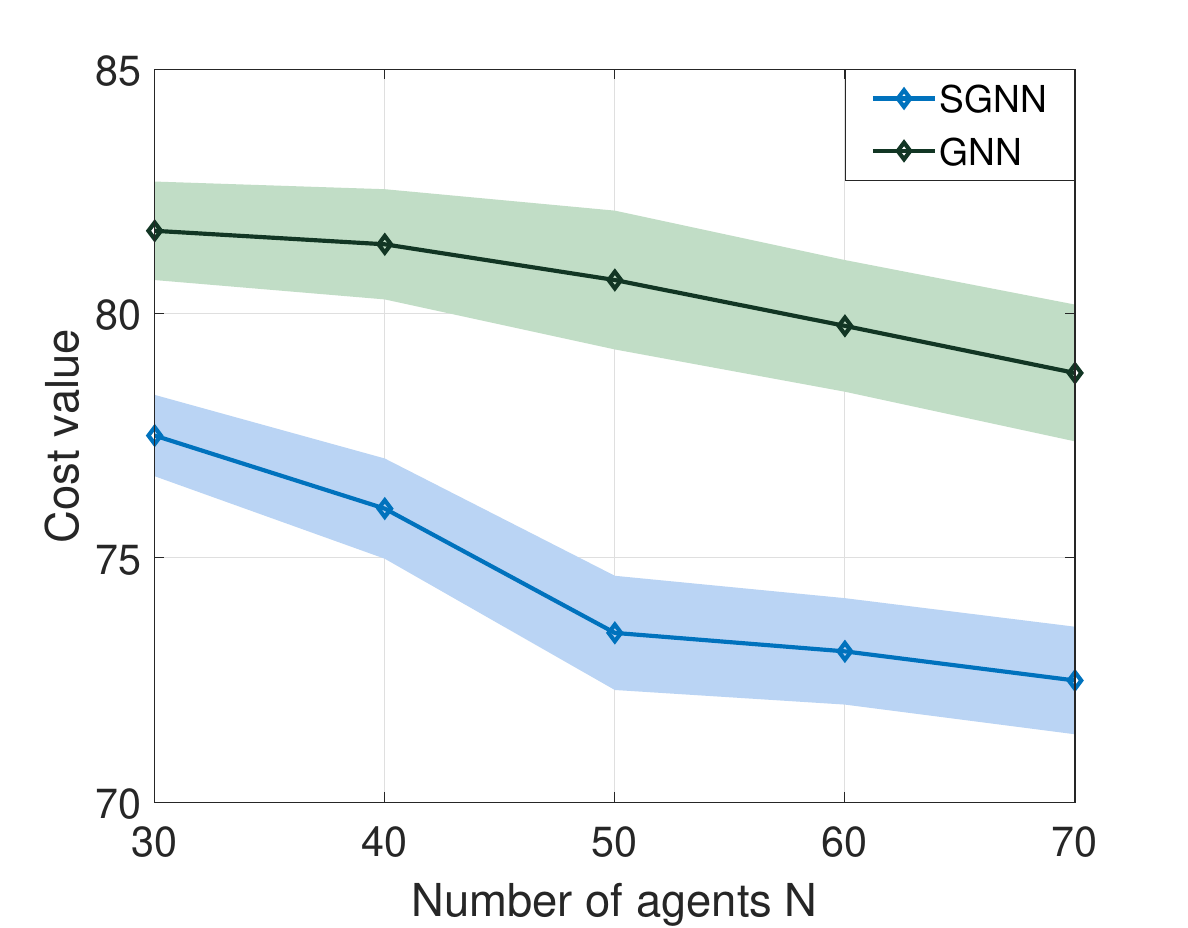}%
\caption{}%
\label{subfig:robot_vary_n}%
\end{subfigure}%
\caption{Stochastic graph neural network performance for robot swarm control. (a) Comparison between the SGNN and the GNN under different link sampling probabilities $p \in [0.2,0.9]$. (b) Comparison between the SGNN and the GNN under different different communication radius. (c) Comparison between the SGNN and the GNN under different number of agents.}\label{fig:robot}
\end{figure*}

\smallskip
\noindent\textbf{Link probability.} We then compare the impact of link sampling probability on the SGNN with the GNN. Figure \ref{subfig:source_vary_p_zoom} shows the classification test accuracy over relatively stable topologies, i.e., $p \in [0.85, 0.95]$. The SGNN exhibits a higher accuracy and lower variance, which are emphasized for lower values of $p$. That is, the more links are lost, the more the SGNN outperforms the GNN. This highlights the importance of accounting for the topological randomness during training. When $p$ approaches one, the GNN shows a comparable test accuracy, indicating the GNN is a valid choice only for highly stable topologies. Figure \ref{subfig:source_vary_p} compares the two architectures for $p \in [0.3, 0.8]$, where links are lost often and the network varies dramatically. The SGNN maintains a good performance even in this severe scenario. Instead, the GNN losses entirely its discriminative power and yields a performance similar to a random classifier. We attribute the latter to the fact that the ideal fixed graph in the GNN training deviates substantially from practical random graphs encountered in the test phase.

\smallskip
\noindent\textbf{Graph setting.} Finally, we compare the SGNN with the GNN under different graph settings. We train the SGNN on networks divided equally into $C\in \{2,3,4,5,6\}$ communities with each community containing $10$ nodes. The link sampling probability is $p=0.7$. Figure \ref{subfig:source_vary_C} illustrates the SGNN outperforms the GNN in all scenarios. The GNN degrades towards a random classifier because the problem becomes more challenging as the number of communities $C$ increases. However, the SGNN maintains a good performance, which highlights the importance of robust transference.

\subsection{Robot Swarm Control} \label{robot}

The goal of this experiment is to learn a distributed controller for robot swarms to fly together and avoid collision \cite{Tolstaya2019}. We consider a network of $N$ agents, where each agent $i$ is described by its position $\bbz_{i}\in \mathbb{R}^2$, velocity $\bbv_{i}\in \mathbb{R}^2 $, and acceleration $\bbu_{i}\in \mathbb{R}^2$. The problem has an optimal centralized solution on accelerations
\begin{equation}\label{eq:theorem3main}
\begin{split}
\bbu^*_{i}=-\sum_{j=1}^N \left( \bbv_{i}-\bbv_{j} \right)-\sum_{j=1}^N \rho\left( \bbz_{i}, \bbz_{j} \right)
\end{split}
\end{equation}
that assigns each agent's velocity to the mean velocity. Here, $\rho\left( \bbz_{i}, \bbz_{j} \right)$ is the collision avoidance potential. The centralized controller is, however, not practical because it requires velocity and position information of all agents at each agent $i$. We aim to learn a distributed controller with GNN by relying only on local neighborhood information.

We consider agent $i$ communicates with agent $j$ if their distance $\| \bbz_{i} - \bbz_{j} \|_2 \le r$ is within the communication radius $r$. The communication graph $\ccalG = (\ccalV, \ccalE)$ involves the node set $\ccalV = \{ 1,\ldots,N \}$ as agents, the edge set $\ccalE$ as available communication links, and $\bbS$ as the associated graph shift operator. The graph signal $\bbx$ is the relevant feature designed with the agent position $\bbz$ and velocity $\bbv$ \cite{Tolstaya2019}. We measure the controller performance with the variance of velocities for a trajectory, which quantifies how far the system is from consensus in velocities \cite{Xiao2007}.

As baseline, we assume $N=50$ agents and a communication radius $r=3.0{\rm m}$. The agents are distributed randomly in a circle with a minimum separation of $0.1$m and initial velocities sampled uniformly in the interval $[-3.0{\rm m/s}, +3.0{\rm m/s}]$. We consider a one-layered SGNN with $32$ parallel filters of order $K=3$ and use tangent nonlinearity like in \cite{Tolstaya2019}. We use imitation learning to train the SGNN over a training set of $1000$ trajectories, each containing $100$ time steps. The validation and test sets contain each 100 extra trajectories. We train the SGNN for $30$ epochs with batch size of $20$ samples and learning rate $\alpha=3\cdot 10^{-4}$. Our results are averaged over $10$ simulations.

\smallskip
\noindent\textbf{Link probability.} We compare the performance of the SGNN with the GNN under different link sampling probabilities. Figure \ref{subfig:robot_vary_p} shows the cost value of the two architectures for $p \in [0.3,0.9]$. The SGNN achieves both a lower mean and variance compared with the GNN. This improved SGNN performance is more visible for lower link sampling probabilities $p$, i.e., as links become more unstable. We again attribute this behavior to the robust transference of the SGNN, since it accounts for link instabilities during training. However, notice the SGNN also degrades when the graph varies dramatically (small $p$). This is because the information loss induced by link fluctuations leads to inevitable errors, which cannot be resolved by training.

\begin{figure*}%
\centering
\begin{subfigure}{0.5\columnwidth}
\includegraphics[width=1.075\linewidth,height = 0.85\linewidth]{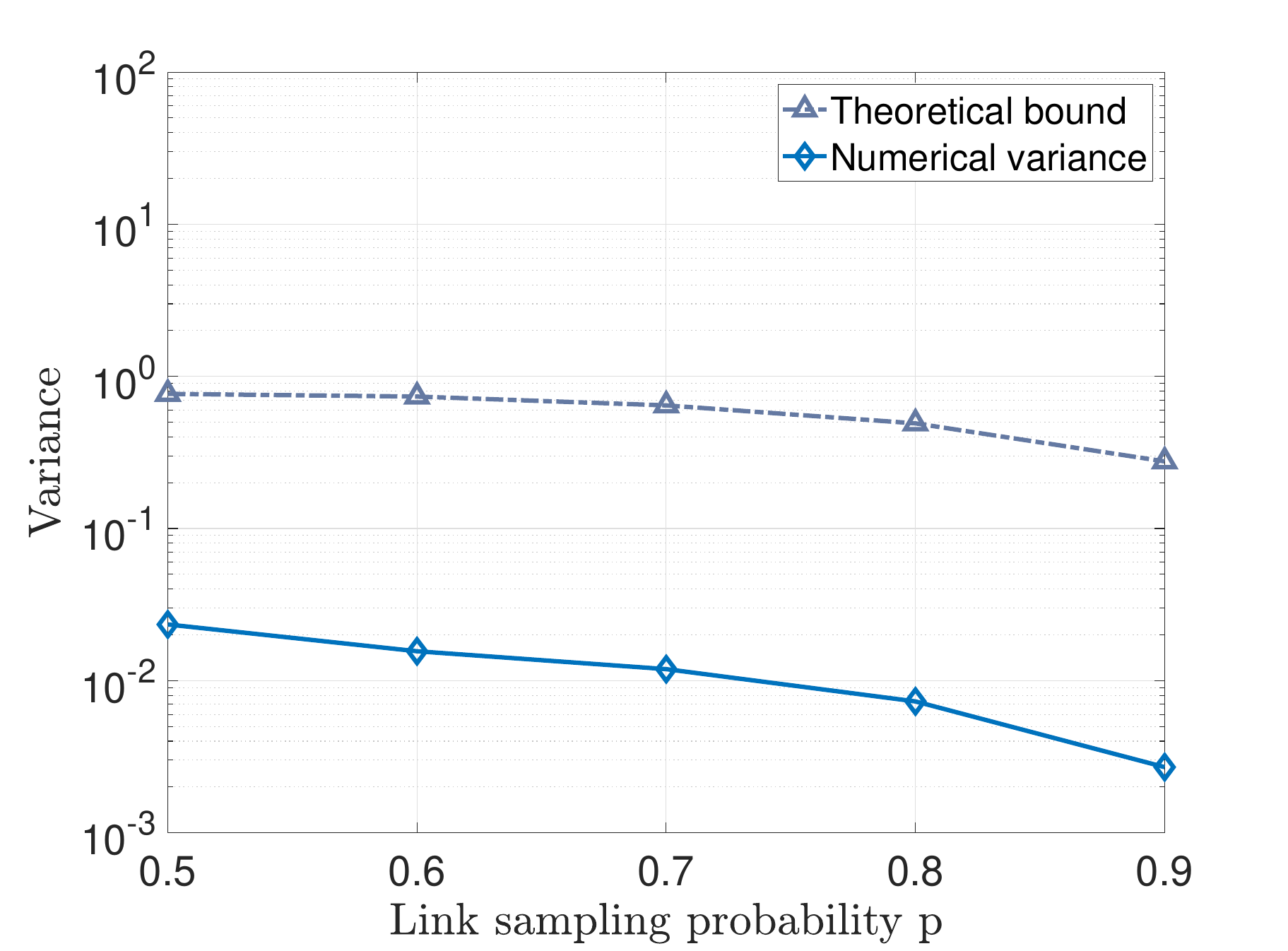}%
\caption{}%
\label{subfigd_differentp}%
\end{subfigure}%
\begin{subfigure}{0.5\columnwidth}
\includegraphics[width=1.1\linewidth, height = 0.85\linewidth]{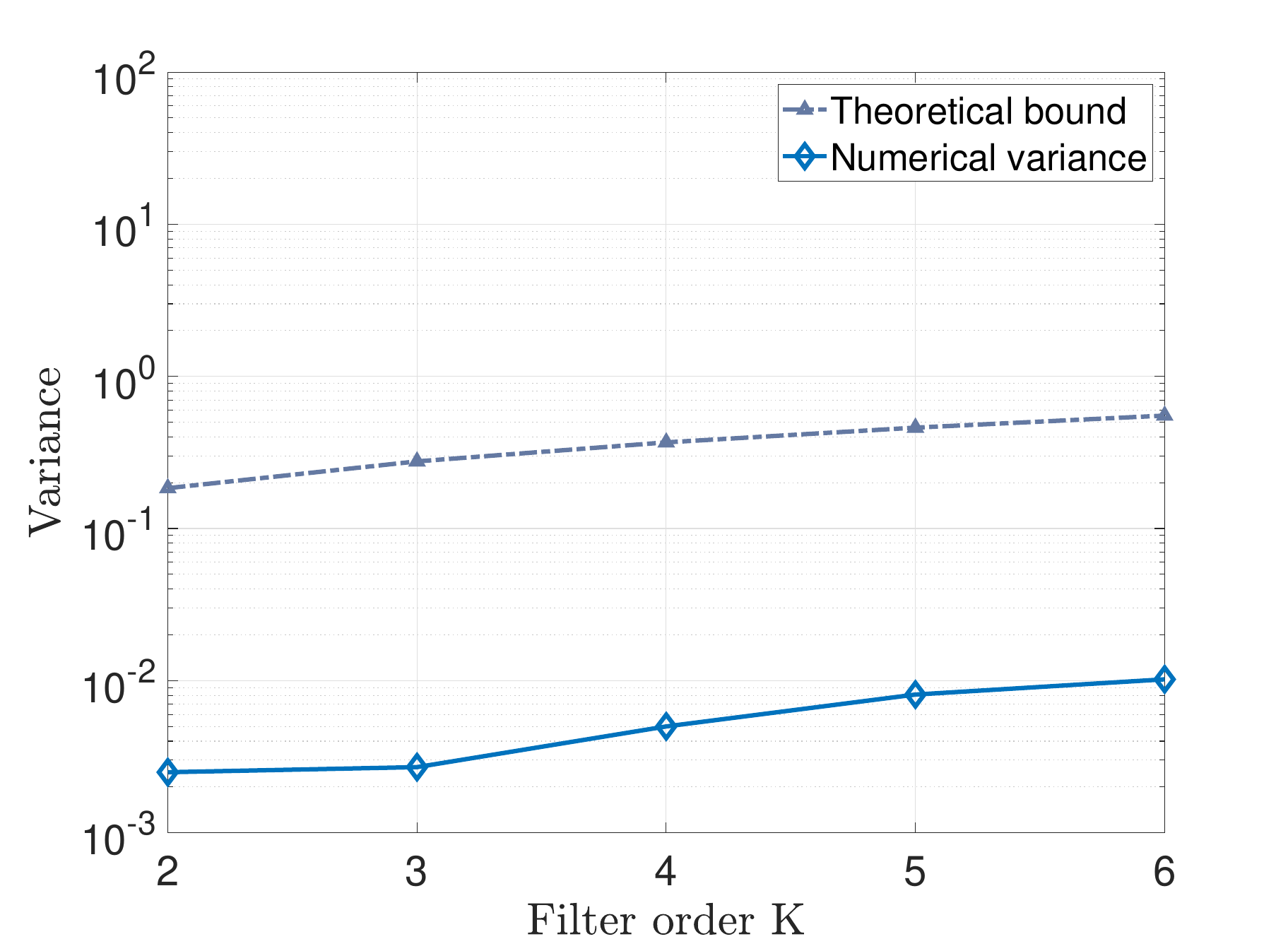}%
\caption{}%
\label{subfiga_differentK}%
\end{subfigure}\hfill\hfill%
\begin{subfigure}{0.5\columnwidth}
\includegraphics[width=1.1\linewidth,height = 0.85\linewidth]{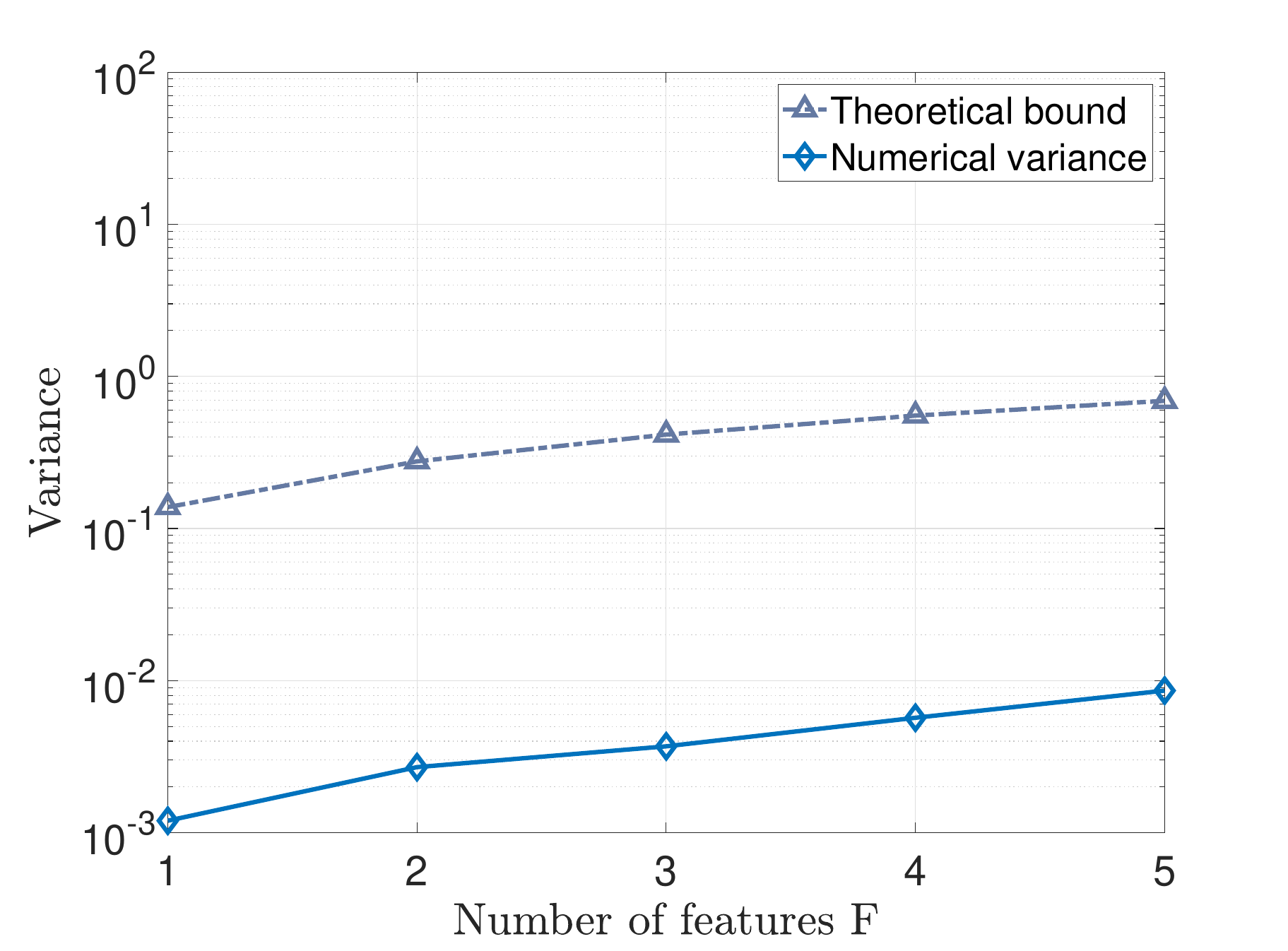}%
\caption{}%
\label{subfigb_differentF}%
\end{subfigure}\hfill\hfill%
\begin{subfigure}{0.5\columnwidth}
\includegraphics[width=1.075\linewidth,height = 0.85\linewidth]{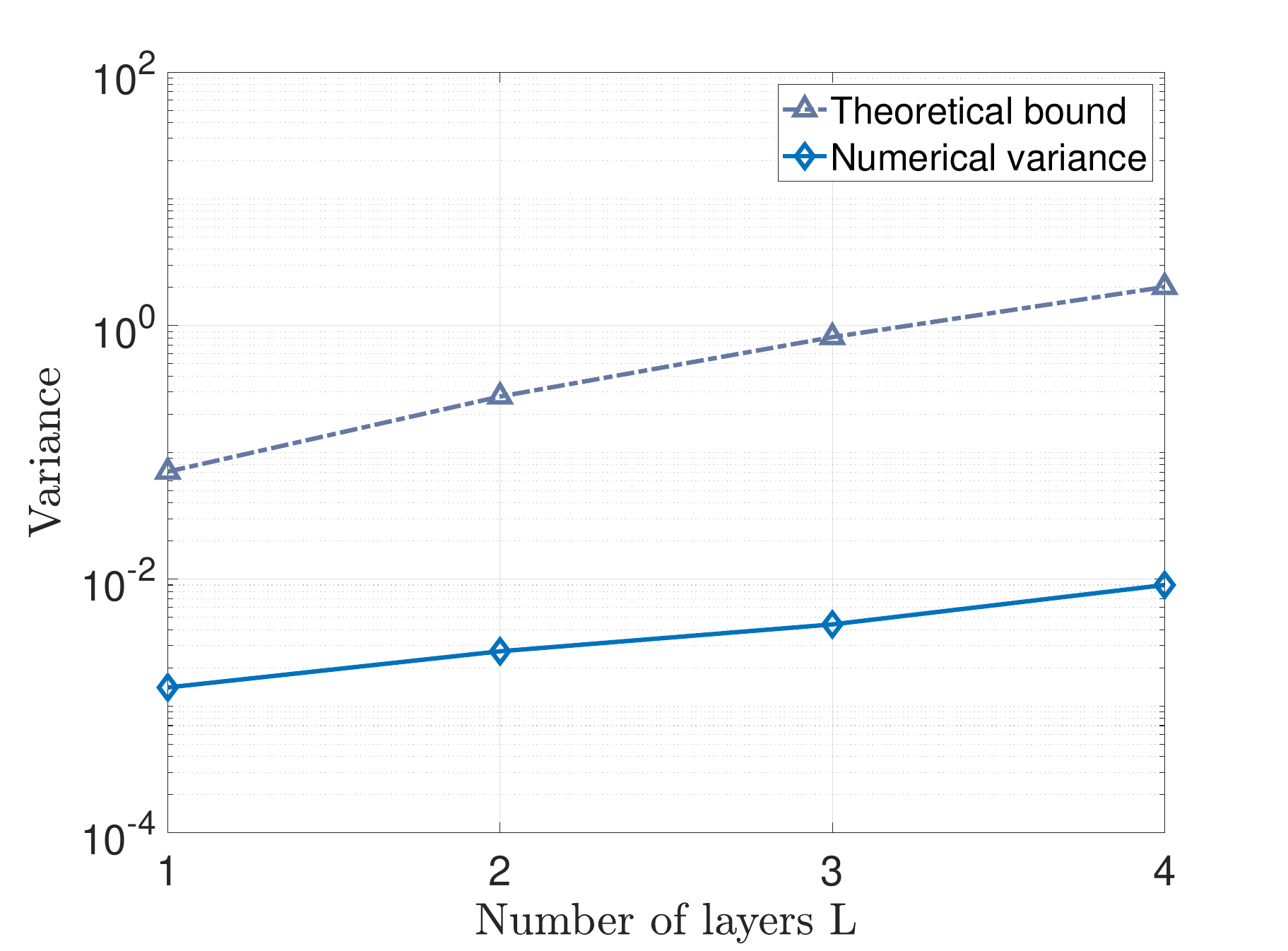}%
\caption{}%
\label{subfigc_differentL}%
\end{subfigure}%
\caption{Variance comparison between theoretical bound and empirical variance. (a) Different link sampling probabilities $p$. (b) Different filter orders $K$. (c) Different number of filters $F$ (d) Different number of layers $L$.}\label{fig:varianceBound}
\end{figure*}

\smallskip
\noindent\textbf{Communication radius.} Figures \ref{subfig:robot_vary_r} depicts the performance of the SGNN and the GNN for different communication radius $r$. The link sampling probability is $p=0.7$. In general, the SGNN outperforms the GNN in all scenarios. The SGNN performance improves as the communication radius $r$ increases. This follows our intuition because a larger communication radius increases the node exchanges, which helps contrasting some of the link losses. On the other hand, a denser graph with larger communication radius magnifies the impact of link losses by introducing more randomness, such that the decreasing rate of cost value reduces seen from $r=3$m to $r=4$m. The performance improvement gained from the SGNN increases in this case, which emphasizes the importance of accounting for the graph randomness during training. 

\smallskip
\noindent\textbf{Agent number.} Lastly, we compare two architectures for different number of agents $N$. In Figure \ref{subfig:robot_vary_n}, we see the cost decreases as the number of agents increases, which can be explained by the increased information exchanges in large networks. This result indicates the SGNN is capable of handling large-scale robot swarms while retaining a good performance. In addition, the SGNN improvement becomes more visible for larger networks since topological randomness has more effects in these cases and thus robust transference plays a more important role.

\subsection{Variance Corroboration}\label{subsec:varianceBound}

The goal of this experiment is to corroborate the variance bound in Theorem \ref{theorem1}. We consider the SBM graph of $N=40$ nodes and the SGNN of $L=2$ layers, each with $F=2$ filters per layer of order $K=3$ followed by the ReLU nonlinearity. The input graph signal $\bbx$ is with unitary energy and the link sampling probability is $p=0.9$. We compare the empirical variance and the theoretical bound under changing scenarios; namely, different link sampling probabilities $p$, different filter orders $K$, different numbers of features $F$, and different numbers of layers $L$.

Fig. \ref{fig:varianceBound} shows the results. First, we note that the theoretical analysis yields a fair bound for the numerical variance, and the results corroborate the impact of the graph stochasticity and architecture hyper-parameters on the variance of the SGNN output as indicated in Theorem \ref{theorem1}. That is, the variance decreases as the link sampling probability increases from $0.5$ to $0.9$, and it increases with the filter order $K$, the number of features $F$, and the number of layers $L$. We also note that the theoretical bound is not tight, essentially because this bound holds uniformly for all graphs and thus is not tight. The main goal of variance analysis is to show that the SGNN is statistically stable and indicates the role played by the graph stochasticity and different architecture hyper-parameters on the SGNN output variance.


\section{Conclusions} \label{sec_conclusions}

We proposed a distributed stochastic graph neural network that can operate over random time varying topologies. The architecture is similar to the conventional GNNs but substitutes the convolutional graph filters with stochastic graph filters to account for the topological randomness during training. A detailed mathematical analysis characterized the output variance of the SGNN, which is upper bounded by a factor that is quadratic in the link sampling probability, indicating the impact of link losses to the SGNN output. We further formulated a learning process that accounts for the randomness in the cost function and leveraged stochastic gradient descent to prove this learning process converges to a stationary point. Numerical results corroborated the proposed theoretical model on distributed source localization and decentralized robot swarm control, showing superior performance compared with the GNN that ignores link losses. In near future, we plan to corroborate these findings with other topology variation models and in other distributed applications, such as power outage prediction in smart grids.


\appendices 


\section{Proof of Proposition 1} \label{pr:proposition1}

We need the following lemma, whose proof is in the supplementary material.

\begin{lemma} \label{lemmma3}
	Consider the underlying graph $\ccalG$ with the shift operator $\bbS$ and let $\bbS_k$ be the shift operator of $k$th RES($\ccalG,p$) realization of $\bbS$, $\barbS=\mathbb{E}[\bbS_k]$ the expected shift operator, and $\bbD$ the diagonal degree matrix with $d_i$ the degree of node $i$. Then, it holds that
	\begin{equation}\label{prlem33}
		\mathbb{E}\left[ \bbS_k^2 \right] = 
		\begin{cases}
			\barbS^2 + p (1-p)\bbD,\! & \!\text{if } \bbS = \bbA, \\
			\barbS^2 + 2 p (1-p)\bbS,\! & \!\text{if } \bbS = \bbL
		\end{cases}
	\end{equation}
	with $\bbA$ the adjacency matrix and $\bbL$ the Laplacian matrix.  
\end{lemma}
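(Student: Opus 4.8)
The plan is to compute $\mathbb{E}[\bbS_k^2]$ entrywise directly from the RES model $\bbA_k = \bbB_k \circ \bbA$, exploiting the two algebraic facts that make the calculation collapse: the Bernoulli entries satisfy $b_{k,ij}^2 = b_{k,ij}$, and for an unweighted graph $[\bbA]_{ij}^2 = [\bbA]_{ij}$. The whole argument is really bookkeeping of correlations among the edge indicators. Since $b_{k,ij}=b_{k,ji}$ and distinct edges are drawn independently, two factors $b_{k,il}$ and $b_{k,lj}$ are \emph{the same} Bernoulli variable precisely when the unordered pairs $(i,l)$ and $(l,j)$ coincide, and are independent otherwise. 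Hence $\mathbb{E}[b_{k,il}b_{k,lj}]=p$ in the coinciding case and $p^2$ in the independent case, and this single dichotomy drives everything. Throughout I would use $\barbS=p\bbA$ for $\bbS=\bbA$ and $\barbS=p\bbL$ for $\bbS=\bbL$, both immediate from $\mathbb{E}[b_{k,ij}]=p$.

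First I would settle the adjacency case $\bbS=\bbA$. Writing $[\bbA_k^2]_{ij}=\sum_l b_{k,il}b_{k,lj}[\bbA]_{il}[\bbA]_{lj}$ and taking expectations, the off-diagonal entries ($i\neq j$) involve, after discarding the vanishing $l=i,j$ terms (no self-loops), only genuinely distinct edges; each surviving term contributes $p^2[\bbA]_{il}[\bbA]_{lj}$, so $\mathbb{E}[[\bbA_k^2]_{ij}]=p^2[\bbA^2]_{ij}=[\barbS^2]_{ij}$. On the diagonal the edge $(i,l)$ coincides with $(l,i)$, so $b_{k,il}^2=b_{k,il}$ gives $\mathbb{E}[[\bbA_k^2]_{ii}]=p\sum_l[\bbA]_{il}=p\,d_i$, whereas $[\barbS^2]_{ii}=p^2 d_i$; the gap is exactly $p(1-p)d_i$, producing the correction $p(1-p)\bbD$ and establishing the first branch.

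Next I would treat the Laplacian case by expanding $\bbL_k^2=(\bbD_k-\bbA_k)^2=\bbD_k^2-\bbD_k\bbA_k-\bbA_k\bbD_k+\bbA_k^2$, where $\bbD_k=\diag(\bbA_k\bbone)$ is the \emph{random} degree matrix with entries $d_{k,i}=\sum_l b_{k,il}[\bbA]_{il}$, and compute each of the four expectations separately. For $\mathbb{E}[\bbD_k^2]$ the entry $\mathbb{E}[d_{k,i}^2]$ splits into a coinciding part $p\,d_i$ and a distinct-edge part $p^2(d_i^2-d_i)$, giving $\mathbb{E}[\bbD_k^2]=p^2\bbD^2+p(1-p)\bbD$. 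For the cross term, the coincidence $l=j$ inside $\mathbb{E}[d_{k,i}b_{k,ij}]$ again converts one factor $p^2$ into $p$, yielding $\mathbb{E}[\bbD_k\bbA_k]=p^2\bbD\bbA+p(1-p)\bbA$ and, by transposition, $\mathbb{E}[\bbA_k\bbD_k]=p^2\bbA\bbD+p(1-p)\bbA$; the already-computed $\mathbb{E}[\bbA_k^2]=p^2\bbA^2+p(1-p)\bbD$ supplies the last piece.

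Finally I would assemble the four matrices. The terms carrying $p^2$ recombine into $p^2(\bbD-\bbA)^2=p^2\bbL^2=\barbS^2$, while the $p(1-p)$ terms combine as $p(1-p)(\bbD-\bbA-\bbA+\bbD)=2p(1-p)(\bbD-\bbA)=2p(1-p)\bbL$, which is the claimed identity. The main obstacle is entirely combinatorial rather than analytic: correctly identifying, in each of the four products, exactly which pairs of indicators coincide (so that $\mathbb{E}[b^2]=p$) versus which are independent (so the product factors as $p^2$), and in particular recognizing that the random degree $\bbD_k$ is correlated with $\bbA_k$, so that $\mathbb{E}[\bbD_k\bbA_k]\neq\mathbb{E}[\bbD_k]\mathbb{E}[\bbA_k]$ and the cross terms must be handled by the coincidence analysis rather than factored naively.
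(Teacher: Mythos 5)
Your proof is correct, and for the adjacency case it coincides with the paper's argument: both reduce everything to the dichotomy $\mathbb{E}[b^2]=p$ versus $\mathbb{E}[b\,b']=p^2$ for coinciding versus independent edge indicators, and both locate the entire correction $p(1-p)\bbD$ on the diagonal, where the two indicators in $\sum_{l} b_{k,il}b_{k,lj}$ necessarily coincide. Where you genuinely diverge is the Laplacian case. The paper attacks $\mathbb{E}[\bbL_k^2]$ directly entrywise: it writes the random diagonal as $s_{ii}=-\sum_{n\ne i}\delta_{in}s_{in}$, splits into $i\ne j$ and $i=j$ entries, and expands the correlated pieces $\mathbb{E}[\delta_{ij}s_{ii}s_{ij}]$, $\mathbb{E}[\delta_{ij}s_{ij}s_{jj}]$ and $\mathbb{E}[s_{ii}^2]$ by hand, finally invoking $s_{ij}=-1$ to recognize the pattern $2p(1-p)\bbS$. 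You instead use the structural decomposition $\bbL_k=\bbD_k-\bbA_k$, expand $\bbL_k^2$ into four matrix products, compute $\mathbb{E}[\bbD_k^2]$ and $\mathbb{E}[\bbD_k\bbA_k]$ by a one-line coincidence count each, get $\mathbb{E}[\bbA_k\bbD_k]$ by transposition, and reuse the already-proved adjacency identity for $\mathbb{E}[\bbA_k^2]$; the $p^2$ parts then recombine into $p^2(\bbD-\bbA)^2=\barbS^2$ and the $p(1-p)$ parts into $2p(1-p)\bbL$ essentially by inspection. Your version is more modular and easier to verify --- the reuse of the adjacency case eliminates duplicated work, and you correctly flag the one real trap, namely that $\bbD_k$ and $\bbA_k$ are correlated so the cross terms cannot be factored as $\mathbb{E}[\bbD_k]\mathbb{E}[\bbA_k]$ --- while the paper's entrywise route avoids introducing the random degree matrix as a separate object but pays with substantially heavier index bookkeeping. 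Both arguments use the unweighted assumption in the same essential way (your $[\bbA]_{ij}^2=[\bbA]_{ij}$ plays the role of the paper's $s_{ij}=-1$), and both are complete.
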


\begin{proof}[Proof of Proposition \ref{proposition1}] 
	
	Let $\bbu = \bbH(\bbS_{K:0})\bbx$ and $\barbu = \bbH(\barbS)\bbx$ be the output and the expected output of stochastic graph filter. Substituting the filter expression into the variance, we get\! \footnote{Throughout this proof, we will use the shorthand notation $\sum_{a,b,c = \alpha, \beta, \gamma}^{A, B, C} (\cdot)$ to denote $\sum_{a = \alpha}^A\sum_{b = \beta}^B\sum_{c = \gamma}^C (\cdot)$ to avoid overcrowded expressions. When the extremes of the sum ($\alpha, \beta, \gamma$ or $A, B, C$) are the same, we will write directly the respective value.}
	\begin{gather} \label{eq:prop331}
		\begin{split}
			&{\rm var}[\bbu] = \mathbb{E} \left[ \tr \left( \bbu \bbu^{\rm H} - \bar{\bbu} \bar{\bbu}^{\rm H} \right) \right]\\
			& = \sum_{k, \ell=0}^K h_k h_\ell \left( \mathbb{E} \left[ \tr \left( T(k,\ell) \right) \right] -  \mathbb{E} \left[ \tr \left( \bar{T}(k,\ell) \right) \right] \right)
		\end{split}
	\end{gather}
	where $T(k,\ell)= \bbS_{k:0} \bbx \bbx^{\rm H} \bbS_{0:\ell}$ and $\bar{T}(k, \ell)= \bar{\bbS}^k  \bbx \bbx^{\top} \bar{\bbS}^\ell$. To further simplify notation, we denote with $\lceil k\ell \rceil = \max(k,\ell)$ and $\lfloor k\ell \rfloor = \min(k,\ell)$. Expression \eqref{eq:prop331} holds because of the linearity of the trace and expectation and the symmetry of the shift operators $\bbS_k$ and $\barbS$. Since $\bbS_k$ is a realization, we represent it as $\bbS_k = \barbS + \bbE_k$ where $\bbE_k$ is the deviation of $\bbS_k$ from the mean $\barbS$. Substituting this result into $T(k,\ell)$ and since $\mathbb{E}[\bbE_k] = \bb0$, we have
	\begin{align}
		\label{eq:prop33Ts1} &\mathbb{E}\left[ T(k,\ell)\right] = \mathbb{E}\left[ (\barbS + \bbE_k)\cdots \bbx \bbx^\top \cdots (\barbS + \bbE_\ell)\right]\\
		&={\barbS}^{k}\bbx \bbx^\top {\barbS}^{\ell} \!+\! \mathbb{E}\!\left[\! \sum_{r=1}^{\lfloor k\ell \rfloor} \barbS^{k\!-\!r}\bbE_r \barbS^{r\!-\!1}\bbx \bbx^\top \barbS^{r\!-\!1}\bbE_r \barbS^{\ell\!-\!r}\!\right] \!+\! \mathbb{E}\!\left[\! \bbC_{k\ell}\!\right]. \nonumber 
	\end{align}
	The first term in \eqref{eq:prop33Ts1} yields from the maximum powers of $\barbS$ of the products; the second term captures all cross-products where we should note that for $k \ne \ell$ we have $\mathbb{E}\left[\bbE_k \bbE_\ell\right] = \mathbb{E}[\bbE_k]\mathbb{E}[\bbE_\ell] = \bb0$ due to independence and also the terms $r > \lfloor kl\rfloor$ are null due to the presence of a single expectation $\mathbb{E}[\bbE_r]$; the third term $\bbC_{k\ell}$ collects the sum of the remaining terms. By substituting $\tr \left( \mathbb{E}\left[ \bar{T}(k,\ell)\right] \right) = \tr \left( {\barbS}^{k}\bbx \bbx^\top {\barbS}^{\ell} \right)$ and \eqref{eq:prop33Ts1} into \eqref{eq:prop331}, we have
	\begin{align} \label{eq:prop332}
		&{\rm var}[\bbu] =\sum_{k=0}^K \sum_{\ell=0}^K h_k h_\ell \tr \left( \mathbb{E}\left[ \bbC_{k\ell}\right] \right)\\
		& \! +\!\sum_{k=1}^K \sum_{\ell=1}^K h_k h_\ell \tr \left( \mathbb{E}\!\left[\! \sum_{r=1}^{\lfloor k\ell \rfloor} \barbS^{k-r}\bbE_r \barbS^{r\!-\!1}\bbx \bbx^\top \barbS^{r-1}\bbE_r \barbS^{\ell-r}\!\right] \right)\nonumber.
	\end{align}
	We now analyze the two terms in \eqref{eq:prop332} separately. For this analysis, we will need the inequality
	\begin{gather} \label{eq:prop3ineq0}
		\begin{split}
			{\rm tr} (\bbA \bbB) \le \frac{\| \bbA + \bbA^\top \|_2}{2}{\rm tr}(\bbB) \le \| \bbA \|_2 {\rm tr}(\bbB)
		\end{split}
	\end{gather}
	that holds for any square matrix $\bbA$ and positive semi-definite matrix $\bbB$ \cite{Wang1986}.
	
	$\textbf{Second term.}$ By bringing the trace inside the expectation (due to their linearity) and leveraging the trace cyclic property $\tr(\bbA\bbB\bbC) = \tr(\bbC\bbA\bbB) = \tr(\bbB\bbC\bbA)$, we can write
	\begin{align}
		\label{eq:prop33} &\mathbb{E}\left[ \tr \left( \sum_{k,\ell=1}^K h_k h_\ell \sum_{r=1}^{\lfloor k\ell \rfloor} \barbS^{k-r}\bbE_r \barbS^{r-1}\bbx \bbx^\top \barbS^{r-1}\bbE_r \barbS^{\ell-r} \right) \right]\nonumber \\
		& \!=\!\mathbb{E}\!\left[ \! \sum_{r=1}^K \tr\! \left( \!\sum_{k, \ell=r}^K \! h_k h_\ell \bbE_r \barbS^{k+\ell-2r}\bbE_r \barbS^{r-1}\bbx \bbx^\top\! \barbS^{r-1}\! \right)\!\right]\!.
	\end{align}
	Notice that in \eqref{eq:prop33} we also rearranged the terms to change the sum limits. Since both matrices $\sum_{k, \ell=r}^K h_k h_\ell \bbE_r \barbS^{k+\ell-2r}\bbE_r = \big(\sum_{k=r}^K h_k \barbS^{k-r}\bbE_r\big)^\top \big(\sum_{k=r}^K h_k \barbS^{k-r}\bbE_r\big)$ and $\barbS^{r-1}\bbx \bbx^\top \barbS^{r-1}$ are positive semi-definite, we can use the Cauchy-Schwarz inequality $\tr(\bbA \bbB) \le \tr(\bbA)\tr(\bbB)$ \cite{Zhang1999} to upper bound \eqref{eq:prop33} by
	\begin{align}
		\label{eq:prop34} \!\mathbb{E}\!\left[ \!\sum_{r\!=\!1}^K\! \sum_{k, \ell=r}^K \!h_k h_\ell \tr\! \left( \bbE_r \barbS^{k+\ell-2r}\bbE_r\! \right)\! \tr\! \left( \barbS^{r\!-\!1}\bbx \bbx^\top \barbS^{r\!-\!1} \!\right)\!\right]\!.
	\end{align}
	
	We now proceed by expressing the graph signal $\bbx$ in the frequency domain of the expected graph. Let $\barbS = \barbV\barbLambda\barbV^\top$ be the eigendecomposition of $\barbS$ with eigenvectors $\barbV = [\barbv_1,\ldots,\barbv_N]^\top$ and eigenvalues $\barbLambda = \text{diag}(\bar{\lambda}_1, \ldots, \bar{\lambda}_N)$. Substituting the graph Fourier expansion $\bbx = \sum_{i=1}^N \hat{x}_i \barbv_i$ into $\tr\! \left( \barbS^{r-1}\bbx \bbx^\top \barbS^{r-1} \right)$, we get
	\begin{gather}\label{eq:prop35}
		\tr\! \left( \barbS^{r\!-\!1}\bbx \bbx^\top \barbS^{r\!-\!1} \right) \!=\! \sum_{i\!=\!1}^N\! \hat{x}_i^2 \bar{\lambda}_i^{2r\!-\!2} \tr\! \left(\! \barbv_i \barbv_i^\top\! \right) \!=\! \sum_{i\!=\!1}^N \hat{x}_i^2 \bar{\lambda}_i^{2r\!-\!2}
	\end{gather}
	where $\tr(\bbv_i\bbv_i^\top) = 1$ for $i=1,\ldots,N$ due to the orthonormality of eigenvectors. By substituting \eqref{eq:prop35} into \eqref{eq:prop34}, we get
	\begin{align}
		\label{eq:prop36} \!\sum_{i\!=\!1}^N \hat{x}_i^2 \mathbb{E}\!\left[ \! \sum_{r\!=\!1}^K \sum_{k, \ell=r}^K\! h_k h_\ell \bar{\lambda}_i^{2r-2} \tr\! \left( \bbE_r \barbS^{k+\ell-2r}\bbE_r \right)\!\right].
	\end{align}
	Using again the trace cyclic property to write $\tr(\bbE_r\barbS^{k+\ell-2}\bbE_r) = \tr(\barbS^{k+\ell-2}\bbE_r^2)$ and the linearity of the expectation, we have
	\begin{align} \label{eq:prop37} 
		& \sum_{i = 1}^N\hat{x}_i^2\! \sum_{r\!=\!1}^K\!\tr\! \left( \sum_{k, \ell=r}^K h_k h_\ell \bar{\lambda}_i^{2r-2} \barbS^{k+\ell-2r} \mathbb{E}\!\left[ \bbE_r^2\right] \right).
	\end{align}
	From Lemma \ref{lemmma3}, we have $\mathbb{E}\left[ \bbE_k^2 \right] =  \alpha p(1-p) \bbE$ with $\alpha = 1$ and $\bbE = \bbD$ if $\bbS = \bbA$ and $\alpha = 2$ and $\bbE = \bbS$ if $\bbS = \bbL$. By substituting this result into \eqref{eq:prop37} and using inequality \eqref{eq:prop3ineq0} since $\bbE$ is positive semi-definite, we have
	\begin{align}
		\label{eq:prop39}
		& \sum_{i = 1}^N\hat{x}_i^2\! \sum_{r\!=\!1}^K\!\tr\! \left( \sum_{k, \ell=r}^K h_k h_\ell \bar{\lambda}_i^{2r-2} \barbS^{k+\ell-2r} \left( \alpha p(1-p) \bbE\right)\! \right)\\
		& \le \!\alpha p(1-p)\sum_{i = 1}^N\hat{x}_i^2 \big\| \sum_{r\!=\!1}^K\! \sum_{k, \ell=r}^K\! h_k h_\ell \bar{\lambda}_i^{2r-2} \barbS^{k+\ell-2r} \big\|_2 \tr\! \left( \bbE \right) \nonumber
	\end{align}
	with $\tr\left( \bbE \right)=\sum_{i=1}^N d_i =2M$ and $M$ the number of edges. 
	
	At this point, we proceed to upper bound the filter matrix norm in \eqref{eq:prop39}. A standard procedure to bound the spectral norm of a matrix $\bbA$, is to upper bound the norm of $\|\bbA\bba\|_2$ as $\|\bbA\bba\|_2 \le A \|\bba\|_2$ for any vector $\bba$ \cite{Meyer2000}. In this instance, $A$ is the upper bound for the norm of $\bbA$. Following this rationale, we consider the GFT expansion of a vector $\bba$ on the expected graph $\bba = \sum_{j=1}^N \hat{a}_j \barbv_j$ where $\{ \barbv_j \}_{j=1}^N$ are orthonormal. Then, we have
	\begin{align}
		\label{eq:prop310} & \!\big\| \!\sum_{r\!=\!1}^K\!\! \sum_{k, \ell\!=\!r}^K\! h_k\! h_\ell\! \bar{\lambda}_i^{2r\!-\!2}\! \barbS^{k\!+\!\ell\!-\!2r}\! \bba \big\|_2^2 \!\!=\!\! \sum_{j\!=\!1}^N\!\! \hat{a}_j^2 \big|\! \sum_{r\!=\!1}^K\!\! \sum_{k, \ell\!=\!r}^K\! h_k\! h_\ell\! \bar{\lambda}_i^{2r\!-\!2}\! \bar{\lambda}_j^{k\!+\!\ell\!-\!2r}\big|^2\!\!.
	\end{align}
	Consider now the expression inside the absolute value in \eqref{eq:prop310}. This expression is linked to the partial derivative of the generalized frequency response $h(\bblambda)$ in \eqref{eq:h2}. To detail this, we introduce the first-order partial derivative of the generalized frequency response $h(\bblambda)$ w.r.t. the $r$th entry $\lambda_r$ of $\bblambda$
	\begin{align}
		\label{eq:prop31051} 
		\frac{\partial h(\bblambda)}{\partial \lambda_r} \!=\! \sum_{k=r}^K h_k \lambda_{K:(r+1)} \lambda_{(r-1):1},\! ~\forall~r\!=\!1,\!\ldots,\!K
	\end{align}
	where $\lambda_{K:(r+1)} = \lambda_K \cdots \lambda_{r+1}$ and $\lambda_{(r-1):1}=\lambda_{r-1}\cdots \lambda_1$. Let us then consider $K$ specific eigenvalue vectors of dimensions $K\times 1$: $\barblambda_{ij}^1 = [\bar{\lambda}_j, ..., \bar{\lambda}_j]^\top$, $\barblambda_{ij}^2 = [\bar{\lambda}_i, \bar{\lambda}_j, \ldots, \bar{\lambda}_j]^\top$, $\ldots$ , $\barblambda_{ij}^K = [\bar{\lambda}_i, ...,\bar{\lambda}_i, \bar{\lambda}_j]^\top$ for two eigenvalues $\bar{\lambda}_i$ and $\bar{\lambda}_j$ of $\barbS$, and their respective generalized frequency responses $\{ h(\barblambda_{ij}^1), \ldots, h(\barblambda_{ij}^K) \}$\!\!\! \footnote{The generalized frequency response $h(\bblambda)$ in \eqref{eq:frere1} is an analytic function of the vector variable $\bblambda = [\lambda_1, \ldots, \lambda_K]^\top$ such that $\bblambda$ can take any value.} [cf. \eqref{eq:frere1}]. The $r$th first-order partial derivative $\partial h(\bblambda)/\partial \lambda_r$ of the generalized frequency response instantiated on $\barblambda_{ij}^r$ is
	\begin{align}
		\label{eq:prop3105} 
		\frac{\partial h(\barblambda_{ij}^r)}{\partial \lambda_r} = \sum_{k=r}^K h_k \bar{\lambda}_{i}^{r-1} \bar{\lambda}_j^{k-r},\! ~\forall~r\!=\!1,\!\ldots,\!K.
	\end{align}
	We then observe the expression inside the absolute value in \eqref{eq:prop310} can be represented as the sum of $K$ first-order partial derivatives; i.e., we can write it in the compact form
	\begin{align}
		\label{eq:prop311} &\sum_{r\!=\!1}^K\! \sum_{k, \ell=r}^K\! h_k h_\ell \bar{\lambda}_i^{2r-2} \bar{\lambda}_j^{k+\ell-2r} =  \sum_{r\!=\!1}^K \left( \frac{\partial h(\barblambda_{jr})}{\partial\lambda_{r}} \right)^2.
	\end{align}
	From Assumption \ref{as:4}, the generalized frequency responses are Lipschitz with constant $C_g$. Thus we can upper bound \eqref{eq:prop311} as
	\begin{align}
		\label{eq:prop3115} & \big| \sum_{r\!=\!1}^K\! \sum_{k,\ell=r}^K\! h_k h_\ell \bar{\lambda}_i^{2r-2} \bar{\lambda}_j^{k+\ell-2r}\big|^2  \le K^2 C_g^4
	\end{align}
	which implies the norm of the filter matrix in \eqref{eq:prop310} is upper bounded by $KC_g^2$.
	
	By substituting this norm bound into \eqref{eq:prop310} and altogether into \eqref{eq:prop33}, we have
	\begin{align}
		\label{eq:prop312} &\mathbb{E}\left[ \sum_{k, \ell=1}^K h_k h_\ell \sum_{r=1}^{\lfloor k\ell \rfloor} \tr \left( \barbS^{k-r}\bbE_r \barbS^{r-1}\bbx \bbx^\top \barbS^{r-1}\bbE_r \barbS^{\ell-r} \right) \right]\nonumber \\
		& \!\le\! 2\alpha MKC_g^2 \!\sum_{i\!=\!1}^N \!\hat{x}_i^2 p(1\!-\!p) \!=\! 2\alpha MKC_g^2\| \bbx \|^2_2 p(1\!-\!p).
	\end{align}
	
	$\textbf{First term.}$ Matrix $\bbC_{k\ell}$ comprises the sum of the remaining expansion terms. Each of these terms is a quadratic form in the error matrices $\bbE_k$, $\bbE_\ell$ with $k \neq l$; i.e., it is of the form $f_1(\barbS, h_k)\bbE_k f_2(\barbS, h_k)\bbE_kf_3(\barbS, h_k)\bbE_\ell f_4(\barbS, h_k)\bbE_\ell$ for some functions $f_1(\cdot), ..., f_4(\cdot)$ that depend on the expected shift operator and filter coefficients. Each of these double-quadratic terms can be bounded by a factor containing at least two terms $\tr\left(\mathbb{E}[\bbE_{k_1}^2]\right)$ and $\tr\left(\mathbb{E}[\bbE_{k_2}^2]\right)$. Since the frequency response $h(\lambda)$ is bounded from Assumption \ref{as:3}, also the coefficients $\{ h_k \}_{k=0}^K$ are bounded. Further since $\| \barbS \|_2$ is bounded and $\mathbb{E}[\bbE_{k}^2] = \alpha p(1-p)\bbE$ from Lemma \ref{lemmma3}, we can write the first term in \eqref{eq:prop33Ts1} as
	\begin{gather} \label{eq:prop3125}
		\begin{split}
			\mathbb{E} \left[\sum_{k, \ell=0}^K h_k h_\ell \bbC_{k\ell} \right] = \ccalO(p^2(1-p)^2)\|\bbx\|_2^2.
		\end{split}
	\end{gather}
	
	Finally, substituting the results for the first term \eqref{eq:prop3125} and second term \eqref{eq:prop312} into \eqref{eq:prop33Ts1}, we have the variance bound
	\begin{gather} \label{eq:prop3313}
		\begin{split}
			& {\rm var} \left[ \bbu_s \right] \le 2 \alpha M K C_g^2 \| \bbx \|_2^2 p(1-p) + \ccalO(p^2 (1-p)^2)
		\end{split}
	\end{gather}
	completing the proof.
\end{proof}


\section{Proof of Theorem 1} \label{pr:theorem1}

In the proof, we need the following lemma with the proof in the supplementary material that shows the bound on the filter output.

\begin{lemma} \label{lemmma1}
	Consider the graph filter $\bbH(\bbS)$ [cf. \eqref{eq:sgf2} for $p = 1$] with coefficients $\{h_k\}_{k=0}^K$ and let $\bbS$ be the graph shift operator. Let the frequency response \eqref{eq:frere1} satisfy Assumption \ref{as:3} with constant $C_U$. Then, the norm of the graph filter is upper bounded as
	\begin{equation} \label{eq:lemmmamain}
		\begin{split}
			\| \bbH(\bbS) \|_2 \le C_U.
		\end{split}
	\end{equation}
\end{lemma}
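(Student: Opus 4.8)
The plan is to pass to the eigenbasis of the shift operator, where the filter becomes diagonal, and then identify its spectral norm directly with the largest magnitude of the frequency response. Because $\bbS$ is symmetric it admits an orthonormal eigendecomposition $\bbS = \bbV\bbLambda\bbV^\top$ with $\bbV^\top\bbV = \bbV\bbV^\top = \bbI$ and $\bbLambda = \diag(\lambda_1,\ldots,\lambda_N)$, whose eigenvalues $\lambda_i$ are exactly the graph frequencies forming the finite set $\Lambda$ of Assumption \ref{as:3}. Substituting $\bbS^k = \bbV\bbLambda^k\bbV^\top$ into $\bbH(\bbS) = \sum_{k=0}^K h_k \bbS^k$ and collapsing the telescoping $\bbV^\top\bbV = \bbI$ factors, the filter diagonalizes as
\begin{equation}
\bbH(\bbS) = \bbV \Big( \sum_{k=0}^K h_k \bbLambda^k \Big) \bbV^\top = \bbV\, h(\bbLambda)\, \bbV^\top,
\end{equation}
where $h(\bbLambda) = \diag(h(\lambda_1),\ldots,h(\lambda_N))$ collects the frequency response \eqref{eq:frere1} evaluated at each eigenvalue.

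Next I would invoke unitary invariance of the spectral norm. Since $\bbV$ is orthonormal, $\|\bbH(\bbS)\|_2 = \|\bbV\, h(\bbLambda)\, \bbV^\top\|_2 = \|h(\bbLambda)\|_2$, and the spectral norm of a diagonal matrix equals the largest magnitude of its diagonal entries, so $\|\bbH(\bbS)\|_2 = \max_{i=1,\ldots,N} |h(\lambda_i)|$. Applying Assumption \ref{as:3}, which bounds $|h(\lambda)| \le C_U$ for every $\lambda \in \Lambda$ and hence for every eigenvalue $\lambda_i$, yields $\max_i |h(\lambda_i)| \le C_U$ and therefore the claimed bound $\|\bbH(\bbS)\|_2 \le C_U$.

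There is essentially no hard step here; the only point requiring care is the bookkeeping that the eigenvalues of $\bbS$ are precisely the frequencies at which Assumption \ref{as:3} is imposed, so that the bound on the analytic function $h(\lambda)$ transfers verbatim to the discrete spectrum $\{\lambda_1,\ldots,\lambda_N\}$. This argument is the exact analogue of the spectral-norm computation for deterministic graph filters, and it is precisely what makes the frequency response the natural object for controlling $\|\bbH(\bbS)\|_2$ independently of the particular graph on which the filter is instantiated.
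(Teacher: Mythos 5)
Your proof is correct and takes essentially the same approach as the paper's: both pass to the spectral domain of $\bbS$, where the filter acts as pointwise multiplication by $h(\lambda_i)$, and then invoke Assumption \ref{as:3}. The only cosmetic difference is that you compute $\|\bbH(\bbS)\|_2 = \max_i |h(\lambda_i)|$ directly via the diagonalization $\bbH(\bbS)=\bbV\,h(\bbLambda)\,\bbV^\top$ and unitary invariance, whereas the paper bounds $\|\bbH(\bbS)\bbx\|_2 \le C_U\|\bbx\|_2$ for an arbitrary input through Parseval's identity (energy conservation) in the GFT domain --- the same argument phrased in terms of vector energies.
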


\begin{proof}[Proof of Theorem \ref{theorem1}]
	From the SGNN definition in \eqref{eq.sgnn} and Assumption \ref{assumption3}, the variance can be upper bounded as
	\begin{equation} \label{eq:thm41}
		\begin{split}
			& \!{\rm var} \!\left[ \bbPhi(\bbx;\bbS_{P:1},\!\ccalH) \right] \!\!=\! {\rm var}\!\! \left[\! \sigma\!\!\left(\!\sum_{f\!=\!1}^{F}\! \bbu^f_{L-1}\!\!\right)\! \!\right] \!\!\!\le\!  {\rm var}\! \!\left[ \!\sum_{f\!=\!1}^{F}\! \bbu^f_{L\!-\!1}\!\! \right]\!
		\end{split}
	\end{equation}
	where ${\rm var}[\cdot]$ is defined in \eqref{eq:var0}. By exploiting the relation between the trace of covariance matrix and the variance, we can rewrite \eqref{eq:thm41} as
	\begin{align} \label{eq:thm42}
		{\rm var}\! \left[ \!\sum_{f=1}^{F}\! \bbu^f_{L-1}\! \right] &= \!\tr\! \left( \mathbb{E}\! \left[\!\big(\! \sum_{f=1}^{F}\! \bbu^f_{L-1}\!\big) \big(\! \sum_{f=1}^{F}\! \bbu^f_{L-1}\!\big)^\top\! \right]\! \right. \nonumber\\
		& \left. - \mathbb{E}\!\!\left[\!\sum_{f=1}^{F}\! \bbu^f_{L-1}\! \right]\!\! \mathbb{E}\!\!\left[\!\sum_{f=1}^{F}\! \bbu^f_{L-1}\! \right]^\top\! \right)\!. 
	\end{align}
	Denote $\bbu_{L-1}^f = \bbH_{L}^{f} \bbx^f_{L-1}$ and $\barbu_{L-1}^f=\barbH_L^f \bbx_{L-1}^f$ as concise notations of stochastic graph filter output $\bbH_{L}^f(\bbS_{K:0}) \bbx^f_{L-1}$ and expected graph filter output $\mathbb{E}\left[ \bbH_{L}^f(\bbS_{K:1}) \right]\bbx^f_{L-1}=\bbH_{L}^f(\barbS) \bbx^f_{L-1}$. By expanding \eqref{eq:thm42}, we get
	\begin{align} \label{eq:thm43}
		& {\rm var}\! \left[ \!\sum_{f=1}^{F}\! \bbu^f_{L-1}\! \right] \!= \!\sum_{f\!=\!1}^{F}\! \sum_{g\!=\!1}^{F}\tr\! \left(\!  \mathbb{E} \!\left[ \!\bbH_{L}^{f} \bbx^f_{L-1}\! \left( \bbH_{L}^{g} \bbx^{g}_{L-1} \right)^\top \!\right.  \right] \nonumber\\
		& \quad \quad \quad \quad \quad \quad \left.  - \mathbb{E}\left[ \bbH_{L}^{f} \bbx^f_{L-1} \right] \mathbb{E} \left[ \bbH_{L}^{g} \bbx^{g}_{L-1} \right]^\top \right).
	\end{align}
	By adding and subtracting $\barbH_{L}^{f} \bbx^f_{L-1} \left( \barbH_{L}^{g} \bbx^{g}_{L-1}\right)^\top$ inside the first expectation, \eqref{eq:thm43} becomes
	\begin{align} \label{eq:thm44}
		& \! \sum_{f\!=\!1}^{F}\! \sum_{g\!=\!1}^{F}\!\tr\! \left(\! \mathbb{E} \!\left[ \!\bbH_{L}^{f}\! \bbx^f_{L-1}\! \left( \bbH_{L}^{g}\! \bbx^{g}_{L-1} \right)^\top \!-\!\barbH_{L}^{f}\! \bbx^f_{L-1}\! \left( \barbH_{L}^{g}\! \bbx^{g}_{L-1}\right)^\top \!\right. \right] \nonumber\\
		&  \!\!+\!\mathbb{E}\!\left[ \barbH_{L}^{f}\! \bbx^f_{L\!-\!1}\! \left( \barbH_{L}^{g}\! \bbx^{g}_{L\!-\!1}\right)^\top \right]\! \left.\!- \mathbb{E}\!\left[ \bbH_{L}^{f} \bbx^f_{L\!-\!1} \!\right]\! \mathbb{E}\! \left[ \bbH_{L}^{g} \bbx^{g}_{L\!-\!1} \right]^\top\! \right)\!.\!
	\end{align}
	Expression \eqref{eq:thm44} is composed of two group of terms shown there in the two separate lines. 
	
	\textbf{First term.} For the first term, when $f \ne g$ such that filters $\bbH_{L}^{f}(\bbS_{K:0})$ and $\bbH_{L}^{g}(\bbS_{K:0})$ are independent, we have
	\begin{equation} \label{eq:thm455}
		\begin{split}
			& \!\tr\! \left(\! \mathbb{E} \!\left[ \!\bbH_{L}^{f}\! \bbx^f_{L\!-\!1}\! \left( \bbH_{L}^{g}\! \bbx^{g}_{L\!-\!1} \right)^\top \!-\!\barbH_{L}^{f}\! \bbx^f_{L-1}\! \left( \barbH_{L}^{g} \bbx^{g}_{L\!-\!1}\right)^\top \right] \right)\!=\! 0.
		\end{split}
	\end{equation}
	We then use \eqref{eq:thm455} to derive the upper bound
	\begin{align} \label{eq:thm45}
		& \!\sum_{f\!=\!1}^{F}\! \sum_{g\!=\!1}^{F}\!\tr\! \left(\! \mathbb{E} \!\left[ \!\bbH_{L}^{f}\! \bbx^f_{L-1}\! \left(\! \bbH_{L}^{g}\! \bbx^{g}_{L\!-\!1} \!\right)^\top \!-\!\barbH_{L}^{f}\! \bbx^f_{L-1}\! \left( \barbH_{L}^{g}\! \bbx^{g}_{L-1}\right)^\top \right]\! \right)\!\nonumber\\
		& = \!\sum_{f\!=\!1}^{F}\!\tr\! \left( \mathbb{E} \!\left[ \!\bbH_{L}^{f}\! \bbx^f_{L\!-\!1}\! \left(\! \bbH_{L}^{f}\! \bbx^{f}_{L\!-\!1} \!\right)^\top \!-\!\barbH_{L}^{f}\! \bbx^f_{L\!-\!1}\! \left( \!\barbH_{L}^{f}\! \bbx^{f}_{L\!-\!1}\!\right)^\top \!\right]\! \right)\! \nonumber\\
		& \le \Delta \sum_{f=1}^{F}\! \mathbb{E}\left[ \| \bbx^f_{L\!-\!1} \|_2^2 \right] p(1\!-\!p) \!+\! \mathcal{O}(p^2 (1\!-\!p)^2)
	\end{align}
	where $\Delta = 2\alpha M\! K\! C_g^2$ and the last inequality holds from Proposition \ref{proposition1}. For the norm of $\bbx_{L-1}^f$, we observe that
	\begin{equation}\label{eq:thm456}
		\begin{split}
			\mathbb{E}\!\!\left[ \| \bbx^f_{L\!-\!1} \|_2^2 \right] \!=\! \mathbb{E}\!\!\left[ \Big\| \sigma \!\!\left( \sum_{g=1}^F \bbu_{L\!-\!2}^{fg} \right) \!\Big\|_2^2 \right] \!\le\!C_\sigma^2 F\! \sum_{g=1}^F\! \mathbb{E}\!\left[ \left\| \bbu_{L\!-\!2}^{fg} \right\|_2^2 \right]
		\end{split}
	\end{equation}
	where Assumption \ref{assumption3} and the triangle inequality are used in the last inequality. By further representing $\big\| \bbu_{L\!-\!2}^{fg} \big\|_2^2$ with the trace $\tr (\bbu_{L-2}^{fg} (\bbu_{L-2}^{fg})^\top)$ and expanding the latter as in \eqref{eq:prop33Ts1}, we have
	\begin{align}
		\label{eq:thm458} &\!\!\mathbb{E}\!\left[\!\bbu_{L-2}^{fg} (\bbu_{L-2}^{fg})^\top\!\right]\! \!=\!\!\! \sum_{k,\ell=0}^K \!\!h_{k(L-2)}^{fg}h_{\ell(L-2)}^{fg}{\barbS}^{k}\mathbb{E}\!\left[\bbX_{L-2}^g\right] {\barbS}^{\ell}\\
		&\!\!\!\!\!+\!\!\!\!\!\sum_{k,\ell=0}^K\!\!\!\! h_{k(L\!-\!2)}^{fg}\!h_{\ell(L\!-\!2)}^{fg}\!\!\! \left(\!\! \mathbb{E}\!\!\left[\! \sum_{r=\!1}^{\lfloor\! k\ell\! \rfloor}\!\! \barbS^{k\!-\!r}\!\bbE_r\! \barbS^{r\!-\!1}\!\bbX_{L\!-\!2}^g \barbS^{r\!-\!1}\!\bbE_r\! \barbS^{\ell\!-\!r}\!\!\right]\!\! \!+\!\! \mathbb{E}\!\left[ \bbC_{k\ell}\right]\!\!\right) \nonumber 
	\end{align}
	with $\bbX_{L-2}^g = \bbx_{L-2}^g (\!\bbx_{L-2}^g)^\top$. For the first term, by using the cyclic property of trace and the inequality \eqref{eq:prop3ineq0}, we can bound it as
	\begin{align} \label{eq:thm459}
		&\tr \left( \sum_{k,\ell=0}^K\! h_{k(L\!-\!2)}^{fg}h_{\ell(L\!-\!2)}^{fg}{\barbS}^{k}\mathbb{E}\!\left[\bbX_{L-2}^g\right] {\barbS}^{\ell}\right)\nonumber\\
		& \le \| \sum_{k,\ell=0}^K \!h_{k(L-2)}^{fg}h_{\ell(L-2)}^{fg} \bar{\bbS}^{k+\ell} \|_2 \tr \left( \mathbb{E}\!\left[\bbX_{L-2}^g\right] \right)  \\
		& = \| \barbH_{L-2}^{fg} \barbH_{L-2}^{fg} \|_2 \tr \left( \mathbb{E}\!\left[\bbX_{L-2}^g \right]\right) \le C_U^2 \mathbb{E}\left[\|\bbx_{L-2}^{g} \|^2_2\right]\nonumber
	\end{align}
	where in the last inequality we used Lemma \ref{lemmma1} and $\tr \left( \mathbb{E}\left[\bbX_{L-2}^g \right]\right) = \mathbb{E}\left[\|\bbx_{L-2}^{g} \|^2_2\right]$. For the second term and the third term, we use the result \eqref{eq:prop312} and \eqref{eq:prop3125} to write
	\begin{align}
		\label{eq:thm4510} &\sum_{k,\ell\!=\!0}^K\!\!\! h_{k(L\!-\!2)}^{fg}\!h_{\ell(L\!-\!2)}^{fg} \!\tr\!\! \left(\!\! \mathbb{E}\!\!\left[\! \sum_{r\!=\!1}^{\lfloor\! k\ell\! \rfloor}\! \barbS^{k\!-\!r}\bbE_r\! \barbS^{r\!-\!1}\!\bbX_{L\!-\!2}^g \barbS^{r\!-\!1}\bbE_r\! \barbS^{\ell\!-\!r}\!\!+\!\! \bbC_{k\ell}\!\right]\!\!\right)\nonumber\\
		& \le \ccalO(p(1-p)).
	\end{align}
	By substituting \eqref{eq:thm459} and \eqref{eq:thm4510} into \eqref{eq:thm458} and the latter into \eqref{eq:thm456}, we get
	\begin{equation}\label{eq:thm4511}
		\begin{split}
			\mathbb{E}\!\!\left[ \| \bbx^f_{L\!-\!1} \|_2^2 \right] \!\le\!  C_\sigma^2 C_U^2 F \sum_{g=1}^F \mathbb{E}\!\!\left[ \left\| \bbx_{L\!-\!2}^{g} \right\|_2^2 \right] \!+\! \ccalO(p(1-p)).
		\end{split}
	\end{equation}
	Solving recursion \eqref{eq:thm4511} with the initial condition $\| \bbx_0^1 \|_2^2 = \| \bbx \|_2^2$ yields
	\begin{equation}\label{eq:thm4512}
		\begin{split}
			& \mathbb{E}[ \| \bbx_{L-1}^f \|_2^2]\! \le\!  C_\sigma^{2L-2}C_U^{2L\!-2}F^{2L\!-4}  \| \bbx \|_2^2 \!+\! \ccalO(p(1-p)).
		\end{split}
	\end{equation}
	By substituting \eqref{eq:thm4512} into \eqref{eq:thm45}, we bound \eqref{eq:thm45} by 
	\begin{equation} \label{eq:thm46}
		\begin{split}
			\Delta F^{2L-3} C_\sigma^{2L-2}C_U^{2L-2} \| \bbx \|_2^2 p(1-p)  + \mathcal{O}(p^2(1-p)^2).
		\end{split}
	\end{equation}
	
	\textbf{Second term.} For the second term in \eqref{eq:thm44}, we have\!\! \footnote{Likewise \eqref{eq:var0}, we use the notation ${\rm cov}[\bbx, \bby] = \sum_{i=1}^N {\rm cov}[[\bbx]_i, [\bby]_i]$ for any two random vectors $\bbx$ and $\bby$.}
	\begin{align} \label{eq:thm47}
		& \! \sum_{f\!=\!1}^{F}\! \sum_{g\!=\!1}^{F}\! \tr\!\left(\! \mathbb{E} \! \left[ \!\barbH_{L}^{f}\!\bbx^f_{L\!-\!1}\! \left( \!\barbH_{L}^{g}\!\bbx^{g}_{L\!-\!1}\!\right)^\top \! \right] \! -\! \mathbb{E}\!\left[\! \barbH_{L}^{f}\!\bbx^f_{L\!-\!1}\! \right]\! \mathbb{E} \!\left[\! \barbH_{L}^{g}\!\bbx^{g}_{L\!-\!1}\! \right]^\top \right) \nonumber\\
		& \!= \! \sum_{f=1}^{F} \! {\rm var} \! \left[ \!\barbH_{L}^{f}\!\bbx^f_{L-1} \!\right] \!+\! \sum_{f \ne g}^{F} \! {\rm cov} \! \left[ \! \barbH_{L}^{f}\!\bbx^f_{L-1}, \barbH_{L}^{g}\!\bbx^{g}_{L\!-\!1} \right].
	\end{align}
	By using the property of covariance \cite{Chandler1987}
	\begin{equation} \label{eq:thm48}
		\begin{split}
			{\rm cov}[x,y] \le \sqrt{{\rm var}[x]} \sqrt{{\rm var}[y]} \le \frac{{\rm var}[x]+{\rm var}[y]}{2} 
		\end{split}
	\end{equation}
	for two random variables $x$ and $y$, we can bound \eqref{eq:thm47} by
	\begin{align} \label{eq:thm485}
		&F \sum_{f=1}^{F} \! {\rm var} \! \left[ \!\barbH_{L}^{f}\!\bbx^f_{L-1} \!\right] =\! F\sum_{f=1}^{F} \tr\left( \barbH_{L}^{f} \barbH_{L}^{f}\bbSigma_{L-1}^f \right)
	\end{align}
	where $\bbSigma_{L-1}^f = \mathbb{E} \! \left[ \left( \!\bbx^f_{L\!-\!1}\! -\! \mathbb{E}\!\left[\! \bbx^f_{L\!-\!1}\! \right] \right) \left( \!\bbx^f_{L\!-\!1}\! -\! \mathbb{E}\!\left[\! \bbx^f_{L\!-\!1}\! \right] \right)^\top \right]$ is positive semi-definite matrix for all $f=1,\ldots,F$ and $\barbH_{L}^{f}\! \barbH_{L}^{f}$ is square. We then refer to inequality \eqref{eq:prop3ineq0} to bound \eqref{eq:thm485} as
	\begin{align} \label{eq:thm49}
		&\| \barbH_{L}^{g} \|_2^2 F \!\sum_{f\!=\!1}^{F}\! \tr\!\left( \bbSigma_{L-1}^f  \right)  \le C_U^2 F \sum_{g=1}^{F} {\rm var} \left[ \bbx^f_{L-1} \right]
	\end{align}
	where the last inequality is due to Lemma \ref{lemmma1}. By substituting bounds \eqref{eq:thm46} and \eqref{eq:thm49} into \eqref{eq:thm44} and then altogether into \eqref{eq:thm42}, we observe a recursion where the variance of $\ell$th layer output depends on the variance of $(\ell-1)$th layer output as well as the bound in \eqref{eq:thm46}. Therefore, we have
	\begin{align} \label{eq:thm410}
		&{\rm var}[\bbx_L]\!:=\!{\rm var} \left[ \bbPhi(\bbx;\bbS_{P:1},\!\ccalH) \right] \! \le\! C_U^2 F \!\sum_{f=1}^{F}\! {\rm var} \left[ \bbx^f_{L-1} \right]  \\
		& + \Delta F^{2L-3} C_\sigma^{2L-2}C_U^{2L-2} \| \bbx \|_2^2 p(1-p)  + \mathcal{O}(p^2(1-p)^2).\nonumber
	\end{align}
	
	\textbf{Unrolling.} By expanding this recursion until the input layer, we have
	\begin{align} \label{eq:thm411}
		& {\rm var}\!\! \left[ \bbPhi(\bbx;\bbS_{P:1},\!\ccalH) \right] \!\!\le\!\! \Delta\!\! \sum_{\ell\!=\!2}^L \!\!F^{2L\!-\!3} C_\sigma^{2\ell\!-\!2} C_U^{2L\!-\!2} \| \bbx \|_2^2 p(\!1\!-\!p\!) \\
		& \!\!+\!\! F^{2L\!-\!4} C_U^{2L\!-\!2}\!\! \left(\! \sum_{f\!=\!1}^{F} \!\!{\rm var}\! \left[ \bbx^f_1 \right] \!\!+ \!\!\!\sum_{f\ne g}^{F} \!\!{\rm cov}\! \left[ \!\bbx^f_1, \bbx^g_1 \!\right]\!\! \right)\! \!\!+\! \mathcal{O}\left(p^2(1\!-\!p)^2\right)\nonumber
	\end{align}
	where the first term in \eqref{eq:thm411} is the accumulated sum of all second terms in \eqref{eq:thm410} during the recursion. Since $\bbx_1^f = \bbH_{1}^f \bbx$ and $\bbx_1^g = \bbH_{1}^g \bbx$ and $\bbH_{1}^f$ and $\bbH_{1}^g$ are independent if $f \ne g$, we have $\sum_{f\ne g}^{F} \!{\rm cov}\!\! \left[ \bbx^f_1, \bbx^g_1 \right]=0$. Therefore, \eqref{eq:thm411} becomes
	\begin{align} \label{eq:thm412}
		& {\rm var} \left[ \bbPhi(\bbx;\bbS_{P:1},\!\ccalH) \right] \le \ccalO(p^2(1\!-\!p)^2) \\ 
		& ~~~~~~~+  2\alpha M \sum_{\ell=1}^L F^{2L-3} C_\sigma^{2\ell-2} C_U^{2L-2}K C_g^2 \| \bbx \|_2^2 p(1-p) \nonumber.
	\end{align}
	This completes the proof.
\end{proof}


\section{Proof of Theorem 2} \label{pr:theorem2}

\begin{proof}
	From the Taylor expansion of $\mathbb{E} \left[ \bar{C}(\ccalH_{t+1}) \right]$ at $\ccalH_t$, we have
	\begin{align} \label{eq:prthm11}
		&\mathbb{E}\!\left[ \bar{C}(\ccalH_{t\!+\!1})\right] \!\!=\! \mathbb{E}\!\left[ \bar{C}(\ccalH_{t}) \right. \!\!+\!\! \nabla_\ccalH \bar{C}(\ccalH_{t})^\top \!\left(\ccalH_{t+1}-\ccalH_t\right) \\
		&~~~~~~~~~~~~~~+\!\frac{1}{2}\left( \ccalH_{t\!+\!1}\!-\!\ccalH_t\right)^\top \nabla^2_\ccalH \bar{C} (\tilde{\ccalH}_{t}) \left( \ccalH_{t\!+\!1}-\ccalH_t\right) \big]\! \nonumber
	\end{align}
	where $\tilde{\ccalH}_{t}$ is in the line segment joining $\ccalH_{t+1}$ and $\ccalH_{t}$, and we substituted $\tilde{\ccalH}_{t}$ in the Hessian term $\nabla^2_\ccalH \bar{C} (\tilde{\ccalH}_{t})$ of \eqref{eq:prthm11} for the series truncation. Since for any vector $\bba$ and matrix $\bbA$, we have the inequality $\bba^\top \bbA \bba \le \lambda_{\max}(\bbA) \| \bbx \|_2^2$ where $\lambda_{\max}(\bbA)$ is the largest eigenvalue of $\bbA$. With this result and the Lipschitz continuity $\| \nabla^2_\ccalH \bar{C} (\tilde{\ccalH}_{t}) \|_2 \le C_L$ in Assumption \ref{as:1}, we get
	\begin{align} \label{eq:prthm12}
		&\mathbb{E}\!\!\left[ \bar{C}(\!\ccalH_{t+1}\!)\right] \\
		&\le \mathbb{E}\big[ \bar{C}(\!\ccalH_{t}\!)\! +\! \nabla_\ccalH \bar{C}(\!\ccalH_{t}\!)^\top\! \!\left(\! \ccalH_{t\!+\!1}\!-\!\ccalH_t\!\right) \!+\! \frac{C_L}{2}\! \| \ccalH_{t\!+\!1}\!-\!\ccalH_t \|_2^2\big]\!. \nonumber
	\end{align}
	By substituting the SGNN update rule $\ccalH_{t+1} = \ccalH_t - \alpha_t \nabla_\ccalH C(\bbS_{P:1}, \ccalH_t)$ with random cost realization, we get
	\begin{align} \label{eq:prthm125}
		\mathbb{E}\!\left[ \bar{C}(\ccalH_{t\!+\!1})\right]\! &\le \!\mathbb{E}\!\big[ \bar{C}(\ccalH_{t}) + \!\frac{\alpha_t^2 C_L}{2}\! \| \nabla_\ccalH C(\bbS_{P:1}, \ccalH_t) \|_2^2 \nonumber\\
		&- \alpha_t \nabla_\ccalH \bar{C}(\ccalH_t)^\top \nabla_\ccalH C(\bbS_{P:1}, \ccalH_t) \big].   
	\end{align}
	Exploiting the linearity of expectation and the identity $\mathbb{E}[\nabla_\ccalH C(\bbS_{P:1},\! \ccalH_t)]=\mathbb{E}[\nabla_\ccalH \bar{C}(\ccalH_t)]$, we can write \eqref{eq:prthm125} as
	\begin{align} \label{eq:prthm13}
		\mathbb{E}\left[ \bar{C}( \ccalH_{t+1})\right] &\le \mathbb{E}\big[ \bar{C}( \ccalH_t) ] \!-\! \alpha_t \mathbb{E}\big[\| \nabla_\ccalH \bar{C}(\ccalH_t) \|_2^2\big] \\
		&\!+\! \frac{\alpha_t^2 C_L}{2} \mathbb{E} [\| \nabla_\ccalH C(\bbS_{P:1}, \ccalH_t) \|_2^2\big].   \nonumber  
	\end{align}
	Subsequently, using the gradient bound in Assumption \ref{as:2}, we can upper bound the third term of \eqref{eq:prthm13} as
	\begin{equation} \label{eq:prthm14}
		\begin{split}
			\frac{\alpha_t^2 C_L}{2} \mathbb{E} [\| \nabla_\ccalH C(\bbS_{P:1},\! \ccalH_t) \|_2^2] \le \frac{\alpha_t^2 C_L C_B^2}{2}.
		\end{split}
	\end{equation}
	Since $\| \nabla_\ccalH \bar{C}(\ccalH_t) \|^2_2 \le \epsilon$ is the convergence criterion for non-convex problems, we focus on the second term in \eqref{eq:prthm13}. We first move it to the left side, and then move $\mathbb{E}\left[ \bar{C}(\ccalH_{t+1})\right]$ on the right side, and finally define the difference between two successive costs as
	\begin{align} \label{eq:prthm145}
		\Delta \bar{C}_{t:t+1} =\! \bar{C}( \ccalH_t) \!-\! \bar{C}(\ccalH_{t+1}).
	\end{align}
	With these arithmetic steps, we can rewrite \eqref{eq:prthm13} as
	\begin{equation} \label{eq:prthm15}
		\begin{split}
			\mathbb{E}[\| \nabla_\ccalH \bar{C}(\ccalH_t) \|_2^2] \le \frac{1}{\alpha_t} \mathbb{E}[ \Delta \bar{C}_{t:t+1} ]+ \frac{\alpha_t C_L C_B^2}{2}.
		\end{split}
	\end{equation}
	By considering a constant step $\alpha_t = \alpha$ and summing up all terms in \eqref{eq:prthm15} -- recall \eqref{eq:prthm15} should hold for all $t = 0, \cdots, T-1$ -- we get
	\begin{equation}\label{eq:prthm155}
		\begin{split}
			\sum_{t=0}^{T-1} \!\mathbb{E}[\| \nabla_\ccalH \bar{C}(\ccalH_t) \|^2] \le \frac{1}{\alpha} \mathbb{E}[ \Delta \bar{C}_{0:T} ]\!+\! \frac{ \alpha T C_L C_B^2}{2}.
		\end{split}
	\end{equation}
	For the optimal tensor $\ccalH^*$, we have $\bar{C} (\ccalH^*)\! \le\! \bar{C} ( \ccalH_{T})$. By substituting this result into $\Delta \bar{C}_{0:T}$, we bound \eqref{eq:prthm155} as
	\begin{align} \label{eq:prthm16}
		&\min_t \mathbb{E}[\| \nabla_\ccalH \bar{C}(\ccalH_t) \|^2] \le \frac{1}{T}\! \sum_{t=0}^{T-1}\! \mathbb{E}[\| \nabla_\ccalH \bar{C}(\ccalH_t) \|^2] \nonumber \\
		& \le \frac{1}{T \alpha} \left( \bar{C} (\ccalH_0) - \bar{C} (\ccalH^*) \right)+ \frac{ \alpha C_L C_B^2}{2}.
	\end{align}
	By further setting the constant step-size as
	\begin{equation} \label{eq:prthm165}
		\begin{split}
			\alpha \!=\! \sqrt{\frac{2\left( \bar{C} (\ccalH_0) - \bar{C} (\ccalH^*)\right)}{T C_L C_B^2 }}
		\end{split}
	\end{equation}
	and substituting it into \eqref{eq:prthm16}, we have
	\begin{equation} \label{eq:prthm17}
		\begin{split}
			&\min_t \mathbb{E}[\| \nabla_\ccalH \bar{C}(\ccalH_t) \|^2] \le \frac{C}{\sqrt{T}}
		\end{split}
	\end{equation}
	with constant $C\!=\! \sqrt{2\!\left( \bar{C} (\ccalH_0) - \bar{C} (\ccalH^*)\right)\!C_L}C_B$. Therefore, the bound decreases to zero with the rate $\mathcal{O}(1/\sqrt{T})$. This completes the proof.
\end{proof}


\bibliographystyle{IEEEtran}
\bibliography{myIEEEabrv,biblioSGNN}

\begin{table*}[!]\centering
	{\LARGE Supplementary Material for}:~
	{\LARGE  Stochastic Graph Neural Networks} \\
	{\large Zhan Gao$^{\dagger}$, Elvin Isufi$^{\ddagger }$ and Alejandro Ribeiro$^{\dagger}$ }
\end{table*}

\newpage

\appendices 

\section{Proof of Lemmas} \label{pr:lemmass}

\begin{proof}[Proof of Lemma \ref{lemma3}]
\textbf{Absolute value.} Denote with $\sigma(\cdot) = \vert \cdot \vert$ the nonlinearity, $x$ the random variable and $y = \sigma(x)$ the output. Since $y^2 = |x|^2 = x^2$, we have
\begin{equation} 
	\begin{split}
		{\rm var}[y] = \mathbb{E}[y^2] - \mathbb{E}[y]^2 & = \int_{-\infty}^{\infty} x^2 p(x)dx -\mathbb{E}[y]^2 \\
	\end{split}
\end{equation}
where $p(x)$ is the probability density function of $x$. From Jensen's inequality $\mathbb{E}[x]^2 \le \mathbb{E}[\vert x \vert]^2 = \mathbb{E}[y]^2$ and ${\rm var}[y] \ge 0$, we get
\begin{equation} \
	\begin{split}
		{\rm var}[y] \!=\!\! \int_{-\!\infty}^{\infty}\!\!\! x^2 p(x)dx\! -\!\mathbb{E}[y]^2\!  \le\!\! \int_{-\!\infty}^{\infty}\!\!\!\! x^2 p(x)dx\! -\!\mathbb{E}[x]^2\!\!=\!{\rm var}[x].
	\end{split}
\end{equation}

\textbf{ReLU.} Now denote with $\sigma(\cdot)= \max(0,\cdot)$ the nonlinearity. Similarly, we have
\begin{equation} 
	\begin{split}
		{\rm var}[y] \!=\! \mathbb{E}\![y^2] \!-\! \mathbb{E}\![y]^2  \!&=\! \int_{-\infty}^{\infty}\! \max(x,0)^2 p(x)dx \!-\!\mathbb{E}[y]^2\! \\
		& = \int_{0}^{\infty} x^2 p(x)dx -\mathbb{E}[y]^2 \\
	\end{split}
\end{equation}
and ${\rm var}[x] = \int_{-\infty}^{\infty} x^2 p(x)dx -\mathbb{E}[x]^2$. Thus, the difference of ${\rm var}[x]$ and ${\rm var}[y]$ is given by
\begin{equation} \label{prlemma1125}
	\begin{split}
		&{\rm var}[x]-{\rm var}[y]  = \int_{-\infty}^{0} x^2 p(x)dx -\mathbb{E}[x]^2 +  \mathbb{E}[y]^2 \\
		& = \int_{-\infty}^{0} x^2 p(x)dx+ (\mathbb{E}[y]+\mathbb{E}[x])(\mathbb{E}[y]-\mathbb{E}[x]).
	\end{split}
\end{equation}
Defining the integrals $A = \int_{-\infty}^{0} |x| p(x)dx$ and $B = \int_{0}^{\infty} x p(x)dx$, we have
\begin{equation} \label{prlemma115}
	\begin{split}
		\mathbb{E}[x] = B-A,~ \mathbb{E}[y] = B.
	\end{split}
\end{equation}
By substituting \eqref{prlemma115} into \eqref{prlemma1125}, we get
\begin{equation} \label{prlemma11}
	\begin{split}
		{\rm var}[x]-{\rm var}[y] & = \int_{-\infty}^{0} x^2 p(x)dx - A^2 +2AB
	\end{split}
\end{equation}
with $2AB \ge 0$ by definition. Consider the term $\int_{-\infty}^{0} x^2 p(x)dx - A^2$ and by defining the variable $z=\min( 0,x )$, we have
\begin{align} \label{prlemma12}
	&\int_{-\infty}^{0} x^2 p(x)dx \!-\! A^2 \!=\! \int_{-\infty}^{0} x^2 p(x)dx \!-\! \left(\! \int_{-\infty}^{0} \!\!|x| p(x)dx \!\right)^2 \nonumber\\
	& = \int_{-\infty}^{0} z^2 p(z)dz - \left( \int_{-\infty}^{0} z p(z)dz \right)^2  = {\rm var}[z] \ge 0
\end{align}
where $z \in (-\infty,0]$. By using \eqref{prlemma12} in \eqref{prlemma11}, we prove ${\rm var}[x]-{\rm var}[y] \ge 0$ completing the proof.
\end{proof}

\begin{proof}[Proof of Lemma \ref{lemmma3}]
The RES($\ccalG, p$) model samples each edge independently with probability $p$ such that $\barbS = p \bbS$. 

\textbf{Adjacency.} For the adjacency matrix $\bbS = \bbA$, the $(i,j)$th entry of $\bbS$ is $[\bbS]_{ij} = s_{ij}$ where $s_{ii}=0~\forall~i$. The $(i,j)$th entry of the RES($\ccalG, p$) realization $\bbS_k$ can be represented as $[\bbS_k]_{ij} = \delta_{ij}s_{ij}$ where $\delta_{ij}$ is a Bernoulli variable that is one with probability $p$ and zero with probability $1-p$. By exploiting the matrix multiplication for $\bbS_k\bbS_k$ and $\barbS\barbS$, the $(i,j)$th entries of $\mathbb{E}[\bbS_k^2]$ and $\barbS^2$ are respectively given by
\begin{align} \label{prlem31}
	\left[\mathbb{E}[\bbS_k^2]\right]_{ij} \!=\! \sum_{n\!=\!1}^N \!s_{in}s_{nj}\mathbb{E}[\delta_{in}\delta_{nj}],~ [\barbS^2]_{ij} \!=\! \sum_{n\!=\!1}^N\! s_{in}s_{nj} p^2
\end{align}
where in the second equality we also substituted $\barbS = p\bbS$. The Bernoulli variables $\{ \delta_{ij} \}_{ij}$ are independent except for $\delta_{ij}=\delta_{ji}$ since $\bbS_k$ is symmetric. Thus, we get 
\begin{equation}\label{prlem315}
	\mathbb{E}[ \delta_{in}\delta_{nj}] =
	\begin{cases}
		p^2,  & \mbox{if }i \ne j, \\
		p, & \mbox{if }i=j.
	\end{cases}
\end{equation}
By substituting \eqref{prlem315} into \eqref{prlem31}, we have
\begin{equation}\label{prlem32}
	\left[\mathbb{E}[\bbS_k^2]\right]_{ij} =
	\begin{cases}
		\sum_{n=1}^N \!s_{in}s_{nj}p^2,  & \mbox{if }i \ne j, \\
		\sum_{n=1}^N \!s_{in}s_{nj}p, & \mbox{if }i=j.
	\end{cases}
\end{equation}
Since $\sum_{n\!=\!1}^N \!s_{in}s_{ni} = d_i$ is the degree of node $i$ and from \eqref{prlem31} and \eqref{prlem32}, we can write
\begin{equation}\label{prlem325}
	\mathbb{E}\left[ \bbS_k^2 \right] = \barbS^2 + p(1-p)\bbD.
\end{equation}

\textbf{Laplacian.} For the adjacency matrix $\bbS = \bbL$, we have $s_{ii} =-\sum_{n \ne i} \delta_{in}s_{in}~\forall~i=1,\ldots,N$. In this case, $\mathbb{E}[\bbS_k^2]$ is
\begin{equation}\label{prlem33}
	\left[\!\mathbb{E}\![\bbS_k^2]\!\right]_{ij} \!\!=\!\!
	\begin{cases}
		\!\sum_{n \ne i,j} \!s_{in}s_{nj}p^2 \!\!+\!\mathbb{E}[ \delta_{ij}s_{ii}s_{ij} \!\!+\! \delta_{ij}s_{ij}s_{jj}],\! & \!\text{if }i \!\ne\! j\!, \\
		\!\sum_{n \ne i} \!s_{in}s_{nj}p \!+\! \mathbb{E}[ s_{ii}^2],\! & \!\text{if }i\!=\!j\!.
	\end{cases}
\end{equation}
For $\barbS^2$: if $i \ne j$, we have
\begin{align} \label{prlem34}
	[\barbS^2]_{ij} \!=\! \sum_{n\ne i,j} \! s_{in}s_{nj} p^2 \!-\!\sum_{n \ne i} s_{in}s_{ij} p^2\!-\!\sum_{n \ne j} s_{ij}s_{jn} p^2;
\end{align}
if $i=j$, we have
\begin{align} \label{prlem35}
	[\barbS^2]_{ij} \!=\! \sum_{n\ne i} \! s_{in}s_{nj} p^2 \!+\!\sum_{n_1 \ne i}\sum_{n_2 \ne i} s_{in_1}s_{in_2} p^2.
\end{align}
Now consider the terms $\mathbb{E}[\delta_{ij}s_{ii}s_{ij}]$, $\mathbb{E}[\delta_{ij}s_{ij}s_{jj}]$ and $\mathbb{E}[ s_{ii}^2]$. By expanding $s_{ii}$ and $s_{jj}$ and from \eqref{prlem315}, we have
\begin{subequations}\label{prlem36}
	\begin{align} 
		\mathbb{E}[s_{ii}\delta_{ij}s_{ij}] =  -\sum_{n \ne i,j} s_{in}s_{ij} p^2 - s_{ij}^2p,\\
		\mathbb{E}[s_{ij}\delta_{ij}s_{jj}] =  -\sum_{n \ne i,j} s_{ij}s_{jn} p^2 - s_{ij}^2p,\\
		\mathbb{E}[ s_{ii}^2] =  \sum_{n_1 \ne i}\sum_{n_2 \ne n_1,n_2\ne i} s_{in_1}s_{in_2} p^2 + \sum_{n \ne i} s_{in}^2 p.
	\end{align}
\end{subequations}
Substitute \eqref{prlem36} into \eqref{prlem33} and by comparing \eqref{prlem33} with \eqref{prlem34} and \eqref{prlem35} and using the fact $s_{ij} = -1$ for $i \ne j$, we get
\begin{equation}\label{prlem37}
	\mathbb{E}\left[ \bbS_k^2 \right] = \barbS^2 + 2p(1-p)\bbS
\end{equation}
which proves the lemma.
\end{proof}

\begin{proof}[Proof of Lemma \ref{lemmma1}]
For an input signal $\bbx$, we can write $i$th entry of the filter output $\bbu = \bbH(\bbS)\bbx$ in the GFT domain $\hat{\bbu}$ as $\hat{u}_{i} = h(\lambda_i)\hat{x}_i$ where $\hat{x}_i$ is $i$th coefficient of $\bbx$. From the energy conservation and Assumption 1 $|h(\lambda_i)| \le C_U$ for $i=1,\ldots, N$, we have
\begin{equation} \label{prpro3}
	\begin{split}
		\| \bbu \|^2_2 = \| \hat{\bbu} \|_2^2 = \sum_{i=1}^N h(\lambda_i)^2 \hat{x}_i^2 \le C_U^2 \| \bbx \|^2_2
	\end{split}
\end{equation}
which completes the proof.
\end{proof}

\end{document}